\pgfplotsset{compat=1.17}
\DeclareMathOperator*{\argmax}{arg\,max}
\DeclareMathOperator*{\argmin}{arg\,min}
\newcommand{\baruch}[1]{{\color{blue} (Baruch :#1)}}
\newcommand{\negA}{\vspace{-0.05in}}
\newcommand{\negB}{\vspace{-0.1in}}
\newcommand{\negC}{\vspace{-0.18in}}
\newcommand{\mysection}[1]{\negC\section{#1}\negA}
\newcommand{\mysubsection}[1]{\negB\subsection{#1}\negA}
\newcommand{\myparagraph}[1]{\par\smallskip\par\noindent{\bf{}#1:~}}
\newcommand{\comment}[1]{}
\newcommand{\lrc}[1]{\left\{ #1 \right\}}
\DeclarePairedDelimiter{\floor}{\lfloor}{\rfloor}
\newcommand{\cP}{{\mathcal{P}}}
\newcommand{\cA}{{\mathcal{A}}}
\newcommand{\cB}{{\mathcal{B}}}
\newcommand{\cI}{{\mathcal{I}}}
\newcommand{\cj}{{\mathcal{J}}}
\newcommand{\cH}{{\mathcal{H}}}
\newcommand{\cV}{{\mathcal{V}}}
\newcommand{\cE}{{\mathcal{E}}}
\newcommand{\OPT}{\textnormal{OPT}}
\newcommand{\eps}{{\varepsilon}}
\newcommand{\N}{\mathbb{N}}
\newcommand{\RCP}{{\sc rcp}}
\newcommand{\URCP}{{\sc u-rcp}}
\newcommand{\CT}{{\sc ct}}
\newcommand{\CPO}{{\sc cpo}}
\newcommand{\BPCC}{{\sc bpcc}}
\newcommand{\DKSH}{{\sc d}\textit{k}{\sc sh}}
\newcommand{\zzp}{{\mathbb{Z}_{\geq 0}}}
\newcommand{\zzpos}{{\mathbb{Z}_0^+}}
\newcommand{\anc}[2]{\ensuremath{\textsc{Ancs}_{{#1}}({#2})}}
\newcommand{\des}[2]{\ensuremath{\textsc{Desc}_{{#1}}({#2})}}
\newcommand{\AC}{\ensuremath{\textsc{AC}}}
\newcommand{\AD}{\ensuremath{\textsc{AD}}}
\newcommand{\LB}{\ensuremath{\textsc{LB}}}
\newcommand{\topA}{\ensuremath{\textsc{top}A}}
\newcommand{\na}{\ensuremath{\textsc{sa}}}
\newcommand{\lo}{\ensuremath{\textsc{lo}}}
\newtheorem{myclaim}[lemma]{Claim}
\newtheorem{observation}[lemma]{observation}
\def\sm{\setminus}
\def\sse{\subseteq}
\def\ss{\subset}
\def\calC{\mathcal{C}}
\def\calP{\mathcal{P}}
\def\calQ{\mathcal{Q}}
\def\calU{\mathcal{U}}
\begin{document}
	
\def\claimproof{\proof}
\def\endclaimproof{\hfill$\square$\\}

\pagestyle{plain}

\title{
Approximations and Hardness of Covering and 
Packing Partially Ordered Items
}


\comment{
\author{Ilan Doron-Arad}\affil{Computer Science Department, Technion, Haifa, Israel. \texttt{\{idoron-arad,naor,hadas\}@cs.technion.ac.il}}
\author{Guy Kortsarz}\affil{Computer Science Department, Rutgers University-Camden, Camden, NJ, USA. \texttt{guyk@camden.rutgers.edu}}
\author{Seffi Naor}
\author{Baruch Schieber}\affil{Computer Science Department, New Jersey Institute of Technology, Newark, NJ, USA. \texttt{sbar@njit.edu}}
\author{Hadas Shachnai}
}

\sloppy
\setcounter{page}{1} 
\title{Approximations and Hardness of Covering and
	Packing Partially Ordered Items}
\author{Ilan Doron-Arad\inst{1} \and Guy Kortsarz\inst{2} \and Joseph (Seffi) Naor\inst{1} \and \\ Baruch Schieber\inst{3} \and Hadas Shachnai\inst{1}}
\institute{Computer Science Department, Technion, Haifa, Israel. \email{\{idoron-arad,naor,hadas\}@cs.technion.ac.il.}
\and 
Computer Science Department, Rutgers University-Camden, Camden, NJ, USA.
\email{guyk@camden.rutgers.edu.}
\and
Computer Science Department, New Jersey Institute of Technology, Newark, NJ, USA. \email{sbar@njit.edu.}
}


\maketitle

\begin{abstract}

Motivated by applications in production planning and storage allocation in hierarchical databases, we initiate the study of {\em covering partially ordered items} ({\CPO}). Given
a value $k \in \N$, and 
a directed graph $G=(V,E)$ where each vertex has a size in $\{0,1, \ldots , k\}$, 
we seek a collection of subsets of vertices $C_1, \ldots, C_t$ that {\em cover} all the vertices, such that for any $1 \leq j \leq t$, the total size of vertices in $C_j$ is bounded by $k$, and there are no edges from $V \setminus C_j$ to $C_j$. 
The objective is to minimize the number of subsets $t$. {\CPO} is closely related to the {\em rule caching problem} ({\RCP}) that has been widely studied 
in the networking area. The input for {\RCP} is a directed graph $G=(V,E)$, a profit function $p:V \rightarrow \mathbb{Z}_{0}^+$, and $k \in \N$.
The output is a subset $S \subseteq V$ of maximum profit such that $|S| \leq k$ and there are no edges from $V \setminus S$ to $S$. 

Our main result is a $2$-approximation algorithm for {\CPO} on out-trees, complemented by an asymptotic $1.5$-hardness of approximation result.	
We also give a two-way reduction between {\RCP} and the {\em densest $k$-subhypergraph} problem, surprisingly showing that the problems are equivalent w.r.t. polynomial-time approximation within any factor $\rho \geq 1$. This implies that {\RCP} cannot be approximated within factor $|V|^{1-\eps}$ for any fixed $\eps>0$, under standard
complexity assumptions. Prior to this work, {\RCP} was just known to be strongly NP-hard.
We further show that there is no EPTAS for the special case of {\RCP} where the profits are uniform, assuming Gap-ETH. Since this variant admits a PTAS, we essentially resolve the complexity status of this problem.
\end{abstract}

\mysection{Introduction}

Partially ordered entities are ubiquitous in the mathematical modeling of scheduling problems, distributed storage allocation, production planning, and unified language models. Often, the partial order represents either
precedence constraints or dependencies among entities (or items). 
Motivated by applications in production planning~\cite{WZZZG22,BSH11} and distributed storage allocation in hierarchical databases~\cite{YJDYD09},
we introduce the {\em covering partially ordered items} (\CPO) problem. 
An instance of {\CPO} consists of a directed graph $G=(V,E)$, a value $k \in \mathbb{N}$, and a size function $s:V \rightarrow [0:k]$.\footnote{For $i,j \in {\zzpos}$, we denote by $[i:j]$ the set of integers $\{ i,i+1, \ldots , j\}$.} A {\em configuration} is a subset of vertices $U \subseteq V$ such that $s(U) \leq k$,\footnote{For a set $A$, a function $f:A \rightarrow \mathbb{X}$, and $B \subseteq A$, define $f(B) = \sum_{b \in B} f(b)$.} 
and $U$ is closed under precedence constraints; that is, for any $u \in U$ and $(z,u) \in E$ it holds that $z \in U$. A {\em feasible solution} is a set of configurations $C_1,\ldots, C_t$ that {\em covers} $V$, namely $\bigcup_{j\in [1:t]} C_j = V$. 
The {\em cardinality} of the solution is $t$, the number of configurations. The goal is to find a feasible solution of minimum cardinality.

{\sc cpo} can be applied to optimize the distributed storage of large hierarchical data in {\em unified medical language systems} (UMLS)~\cite{YJDYD09}. UMLS data is 
often distributed over several databases of bounded size. Due to the hierarchical nature of the medical taxonomy, each database needs to be closed under this hierarchy relation. The problem of minimizing the number of distributed databases of the UMLS data translates to a {\sc cpo} problem instance.

Another application of {\sc cpo} arises in production planning for steel mills that employ continuous casting~\cite{WZZZG22,BSH11}. The steel-making process has high energy consumption. One way to save energy is by employing continuous casting and direct charging. In this routine, the molten steel is solidified into slabs and rolled into finished products of various sizes continuously, with no need to reheat the steel in the process. Each finished product requires specific casting, rolling, and thermal treatments in a given order, which can be modeled by a {\em directed acyclic graph} (DAG). A main challenge is to assign the finished products to batches whose size is dictated by the size of the ladle furnace while minimizing the amount of repeated operations. This gives rise to an instance of {\sc cpo}. 

A natural greedy approach for solving {\sc cpo} is to repeatedly find, among all subsets of vertices that can be feasibly assigned to a single configuration, a subset that maximizes the size of yet unassigned vertices.
This single configuration problem is a variant of
the well known {\em rule caching problem} ({\sc rcp}) that has been studied extensively  \cite{dong2015rule,yan2014cab,sheu2016wildcard,huang2015cost,li2019tale,stonebraker1990rules,li2015fdrc,gao2021ovs,cheng2018switch,rottenstreich2016optimal,li2020taming,gamage2012high,yan2018adaptive,rottenstreich2020cooperative,rastegar2020rule,yang2020pipecache}. 
An instance of {\sc rcp} consists of a directed graph $G=(V,E)$, a profit function $p:V \rightarrow {\zzpos}$, and a value $k \in  \mathbb{N}$. We seek a subset of vertices $U \subseteq V$ which is closed under precedence constraints, such that $|U| \leq k$, and $p(U) = \sum_{u \in U} p(u)$ is maximized. In Appendix~\ref{sec:motivation} we describe central applications of {\RCP} in networking and the blockchain technology.

Prior to this work {\sc rcp} was just known to be strongly NP-hard~\cite{borradaile2012knapsack,pferschy2018improved}. Our initial attempt towards solving {\sc cpo} was to find a good approximation for {\sc rcp}. 
Surprisingly, we were able to show an equivalence between {\sc rcp} and the 
{\em densest $k$-subhypergraph} ({\DKSH}) problem w.r.t. approximability. The input for {\DKSH} consists of a hypergraph $G=(V,E)$ and a value $k \in {\N}$. The goal is to find a subset of vertices $S \subseteq V$ of cardinality $k$ that maximizes the number (or weight) of induced hyperedges (a more formal definition is given in \Cref{sec:2}).\footnote{{\DKSH} has also been widely studied (see, e.g.,~\cite{denseksub,hajiaghayi2006minimum,unioneden} and the references therein).} 

Unfortunately, {\DKSH} is known to be hard to approximate within a factor of $|V|^{1-\eps}$, for $\eps \in (0,1)$, assuming the Small Set Expansion Conjecture (by combining the results of \cite{manurangsi2017inapproximability} and \cite{hajiaghayi2006minimum}).  
This implies the same hardness of approximation for {\sc rcp} (see Section~\ref{sec:results}).
Given this hardness result, we expect {\sc cpo} to be hard to approximate on general graphs. Thus, we consider the special case of {\sc cpo} where $G$ is an out-tree. We call this problem {\em covering partially ordered items on out-trees ({\sc ct})}.
To the best of our knowledge, {\CT} is studied here for the first time.
We note that when $G$ is an in-tree {\sc cpo} is trivial since
the problem has a feasible (and unique) solution {\em iff} the total size of the vertices is at most $k$.

\mysubsection{Our results}
\label{sec:results}
Our first result is an approximation algorithm for {\CT}.
Recall that, for $\alpha \geq 1$, $\cA$ is an $\alpha$-approximation algorithm for a minimization (maximization) problem $\Pi$ if, for any instance of $\Pi$, the output of $\cA$ is at most $\alpha$ (at least $1/\alpha$) times the optimum.
\begin{theorem}
	\label{thm:main}
	There is a polynomial time $2$-approximation algorithm for \textnormal{\sc ct}. 
\end{theorem}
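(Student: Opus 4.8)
The plan is to recast {\CT} as a leaf-covering problem, prove a family of lower bounds on $\OPT$ indexed by the vertices of the tree, and then analyze a greedy ``next-fit''-style algorithm against these bounds. Root $G$ at $r$. Since $G$ is an out-tree, a configuration is precisely a subset $U\subseteq V$ that is closed under taking parents (hence ancestors), i.e.\ a subtree of $G$ rooted at $r$, with $s(U)\le k$. Such a $U$ contains a vertex $v$ iff it contains some root-to-leaf path through $v$, so a family of configurations covers $V$ iff it covers every leaf of $G$. Thus {\CT} is equivalent to: cover all leaves of $G$ by rooted subtrees of size at most $k$, minimizing their number; an instance is feasible iff every root-to-leaf path has size at most $k$, which we assume henceforth. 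For a vertex $v$ write $T_v$ for its subtree, $\pi(v)$ for its parent, $a(v)$ for the size of the root-to-$v$ path (inclusive), and $b(v)=k-a(\pi(v))$, with $a(\pi(r))=0$; thus $b(v)$ is the budget any configuration may spend strictly inside $T_v$.

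Next, the lower bounds. Fix any $v$. In a feasible solution, every leaf of $T_v$ is covered only by a configuration containing $v$, hence containing the path from $r$ to $\pi(v)$; restricting such a configuration to $T_v$ gives a subtree of $T_v$ rooted at $v$ of size at most $b(v)$. These restrictions cover all leaves of $T_v$, so their sizes sum to at least $s(T_v)$, and there are at most $\OPT$ of them; therefore $\OPT\ge\lceil s(T_v)/b(v)\rceil$, and, more strongly, $\OPT$ is at least the optimum of the sub-instance induced on $T_v$ with budget $b(v)$. We will use $L:=\max_v \lceil s(T_v)/b(v)\rceil$ as the main quantitative bound, supplemented by the sub-instance strengthening where needed.

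Now the algorithm. Process the leaves of $G$ in the order a fixed DFS visits them, maintaining a current configuration $C$ (initially the path to the first leaf). For the next leaf $\ell$: if adding $\mathrm{path}(r,\ell)$ to $C$ keeps $s(C)\le k$, do so; otherwise output $C$ and reset $C:=\mathrm{path}(r,\ell)$ (feasible by the path assumption). Output the final $C$; let $C_1,\dots,C_t$ be the configurations produced. Each $C_j$'s leaves form a contiguous DFS block, so for every $v$ the leaves of $T_v$ are split among the $C_j$'s into a contiguous run and each $C_j\cap T_v$ is a subtree rooted at $v$. The key local fact: consecutive configurations are jointly over-full relative to their ``turning-point'' subtree. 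If $w_j$ is the least common ancestor of the last leaf of $C_{j-1}$ and the first leaf of $C_j$, then $\mathrm{path}(r,\ell)$ (for $\ell$ the first leaf of $C_j$) meets $C_{j-1}$ exactly in $\mathrm{path}(r,w_j)$, and the reason $C_{j-1}$ was closed is $s(C_{j-1})+a(\ell)-a(w_j)>k$; since $s(C_j)\ge a(\ell)$ this gives $s(C_{j-1})+s(C_j)>k+a(w_j)$, hence (subtracting $2a(\pi(w_j))$ and using $a(w_j)-a(\pi(w_j))=s(w_j)\ge 0$) $s(C_{j-1}\cap T_{w_j})+s(C_j\cap T_{w_j})>b(w_j)$. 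The plan is then to pair up consecutive configurations and charge each pair to the lower bound at its turning-point vertex, concluding $t\le 2\,\OPT$.

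The main obstacle is precisely this last, aggregation step. The configurations overlap heavily — they all share the root path, and more — so $\sum_j s(C_j)$ can far exceed $\OPT\cdot k$, and a crude global count only yields $t\le 4\,\OPT$. The fix is to localize: group the indices $j$ by the value of $w_j$; for a fixed vertex $w$ such boundaries occur only at the at most $(\text{out-degree of }w)-1$ moments when the DFS returns to $w$, and the configurations that meet $T_w$ behave, within $T_w$, like a next-fit packing there, so their ``over-fullness'' inequalities can be summed and compared to $\lceil s(T_w)/b(w)\rceil\le\OPT$ (and its sub-instance strengthening). Carrying this through cleanly — in particular handling the few configurations that ``straddle'' the boundary of a block of $T_w$ and hence spend part of their budget outside $T_w$ — is the technical heart, and it may force a slight strengthening of the algorithm, e.g.\ ordering the children of each vertex by decreasing subtree size, so that on a star it degenerates to first-fit-decreasing bin packing (where a factor of $2$ is both necessary and sufficient, matching the asymptotic hardness claimed below).
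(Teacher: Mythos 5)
Your high-level framing (reduce {\CT} to covering leaves by rooted subtrees of size at most $k$, and lower-bound $\OPT$ by how many such subtrees must pass through a given vertex) is sound and parallels the role ``anchors'' play in the paper's argument, but the proof as written has two real gaps, one local and one global. Locally, the pairwise inequality $s(C_{j-1}\cap T_{w_j})+s(C_j\cap T_{w_j})>b(w_j)$ does not follow from your derivation: what you actually obtain is $\bigl[s(C_{j-1})-a(\pi(w_j))\bigr]+\bigl[s(C_j)-a(\pi(w_j))\bigr]>b(w_j)+s(w_j)$, and $s(C_j)-a(\pi(w_j))$ equals $s(C_j\cap T_{w_j})$ \emph{plus} the size of the part of $C_j$ lying in sibling subtrees of ancestors of $w_j$ --- exactly the ``straddling'' mass you flag --- so it over-estimates, not under-estimates, $s(C_j\cap T_{w_j})$, and the wanted conclusion does not follow. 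Globally, the bound $L=\max_v\lceil s(T_v)/b(v)\rceil$ is too weak to charge against: the algorithm's cost naturally decomposes as a sum of per-subtree contributions, so you need an \emph{additive} lower bound over a collection of incomparable vertices. The paper proves precisely such a bound (\Cref{lem:LB}: $\OPT$ is at least a sum, over a carefully chosen antichain of anchors $b$, of $\floor*{\na(b)/(k-h(b))}$), and establishing it is an intricate bottom-up induction counting occurrences of each vertex across the optimal cover. The ``sub-instance strengthening'' you invoke is circular unless you unroll it into exactly such a sum, and you have not done that.

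There is also a structural mismatch worth noting: the paper's algorithm is not a single DFS-leaf-order next-fit pass but an iterative anchor-based scheme that runs NextFit separately on each anchor's subtree and, crucially, \emph{discards the last configuration when their count is odd}, deferring the leftover vertices to a later iteration at a higher anchor. That discard step is what removes additive slack and yields an absolute (not merely asymptotic) factor of $2$; your single-pass greedy has no analogous mechanism, and it is not clear it would be easy to add. So the proposal has the right flavor but leaves unproved both the key pairwise inequality and the aggregation against a sufficiently strong lower bound, and the algorithm itself would likely need the iterative/discard modifications before the analysis can close.
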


While out-trees have a simple structure, allowing for a greedy-based bottom-up approach in solving {\sc ct}, the analysis of our approximation algorithm is nontrivial and requires extra care to make sure that the approximation bound has no additive terms (see below).

{\sc ct} generalizes the classical {\em bin packing} ({\sc bp}) problem. The input for {\sc bp} is a set of items and a value $k \in \mathbb{N}$. Each item has a size in $[0:k]$, and the goal is to assign the items into a minimum number of bins of capacity $k$.\footnote{We use the definition of {\sc bp} as given in~\cite{GJ79}. In an alternative definition found in the literature, bin capacities are normalized to one, and item sizes are in $[0,1]$.} An instance of {\sc bp} is reduced to an instance of {\sc ct} on a star graph by generating a leaf for each item of the {\sc bp} instance and adding a root vertex of size zero. This trivial reduction implies that {\sc ct} is strongly NP-hard~\cite{GJ79}. 

Interestingly, we show that in contrast to {\sc bp}, {\sc ct} does not admit an {\em asymptotic polynomial-time approximation scheme} (APTAS), or even an asymptotic approximation strictly better than $\frac{3}{2}$. This separates {\sc ct} from {\sc bp} which admits also an additive logarithmic approximation \cite{hoberg2017logarithmic}.

\begin{theorem}
	\label{thm:hard}
	For any $\alpha < \frac{3}{2}$, there is no asymptotic $\alpha$-approximation for \textnormal{\sc ct} unless \textnormal{P=NP}.   
\end{theorem}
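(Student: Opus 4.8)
The plan is to reduce from \textsc{Partition} and to \emph{amplify} its constant hardness into an asymptotic gap by placing many independent copies of the \textsc{Partition} gadget inside a single out-tree. Recall that \textsc{Partition} --- given positive integers $a_1,\dots,a_n$ with $\sum_j a_j = 2B$ and $a_j \le B$ for all $j$, decide whether some subset of them sums to exactly $B$ --- is NP-hard. Fix an integer parameter $m$, to be chosen large at the end. Given a \textsc{Partition} instance, I would build a \textsc{ct} instance with capacity $k = 2B+1$, a root $r$ of size $0$, vertices $c_1,\dots,c_m$, each of size $B+1$ and each a child of $r$, and, for every $i \in [1:m]$, leaves $\ell^{(i)}_1,\dots,\ell^{(i)}_n$ that are children of $c_i$ with $s(\ell^{(i)}_j)=a_j$. (Each gadget is just the star used for the bin-packing reduction, but the heavy ``middle layer'' $c_i$ glues the gadgets together without letting them merge.)

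The structural heart of the argument is the following. Since $G$ is an out-tree, every configuration is a subtree containing $r$; and since $2(B+1) > k$, no configuration can contain two of the vertices $c_i$. Hence every configuration ``touches'' at most one gadget --- it contains at most one $c_i$, and any leaf $\ell^{(i)}_j$ forces $c_i$ into the configuration --- and whenever it touches gadget $i$ it has exactly $k - s(r) - s(c_i) = B$ units of room left for the leaves of that gadget. Consequently the configurations covering the leaves of distinct gadgets are disjoint, so the $m$ gadgets can be analyzed in isolation.

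For a single gadget, a family of configurations whose leaf-parts cover all of $\ell^{(i)}_1,\dots,\ell^{(i)}_n$ corresponds to a collection of subsets of $\{a_1,\dots,a_n\}$, each of total size at most $B$, whose union is $\{1,\dots,n\}$; since the total size is $2B$, at least two subsets are needed, and exactly two can work only if they form a partition into two parts of size exactly $B$ --- i.e., only if the \textsc{Partition} instance is a yes-instance. Hence: if \textsc{Partition} is a yes-instance, two configurations per gadget suffice (and they also cover $r$ and all $c_i$), so $\OPT \le 2m$; if \textsc{Partition} is a no-instance, each gadget needs at least three of its own configurations, so $\OPT \ge 3m$. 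Finally, suppose some algorithm $\cA$ satisfied $\cA(I) \le \alpha \cdot \OPT(I) + c$ for a fixed $\alpha < 3/2$ and a constant $c$. Taking $m > c/(3-2\alpha)$ gives $2\alpha m + c < 3m$, so whether the value returned by $\cA$ on our instance is below or at least $3m$ decides \textsc{Partition} in polynomial time (the reduction is polynomial since $m$ is a constant depending only on $\alpha$ and $c$), yielding $\textnormal{P}=\textnormal{NP}$.

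The main obstacle is obtaining a \emph{purely multiplicative} gap with no additive slack, which is exactly what rules out an APTAS: this relies on the observation that the forced heavy layer $c_1,\dots,c_m$ prevents any single configuration from helping more than one gadget, and on tuning the sizes so that the residual capacity $k - s(c_i)$ equals the \textsc{Partition} target $B$ precisely. One must also verify feasibility --- every vertex, in particular every leaf, must be coverable, which is why we require $a_j \le B$ (still NP-hard for \textsc{Partition}).
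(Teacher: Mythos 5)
Your proof is correct, and it takes a genuinely different and more self-contained route than the paper's. The paper reduces \textsc{bpcc} (bin packing with cluster complement conflict graph) to \textsc{ct} via an analogous two-level star gadget (root of size $0$, heavy middle vertices $r_i$ of size $2k$ against a capacity $K=3k$), establishes that this reduction preserves asymptotic approximability exactly, and then simply \emph{cites} a prior $3/2$-asymptotic-hardness result for \textsc{bpcc} (Lemma~G.1 of the referenced paper by Doron-Arad et al.). You instead start directly from \textsc{Partition} and carry out the amplification by hand: $m$ identical Partition gadgets, each with a forced heavy middle vertex $c_i$ of size $B+1$ against $k=2B+1$, so that no configuration straddles two gadgets, giving the clean $2m$ vs.\ $\ge 3m$ gap. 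The two constructions are structurally the same trick (a heavy separator layer consuming more than half the capacity so that configurations cannot merge across classes); what your version buys is that the whole argument is elementary and inspectable --- no intermediate problem, no external lemma --- and the constants in the amplification (choose $m > c/(3-2\alpha)$) are explicit. One could even view your construction as the composition of a ``make $m$ disjoint copies of a Partition instance'' step (which produces a \textsc{bpcc} instance) with the paper's \textsc{bpcc}$\to$\textsc{ct} reduction, so you are effectively unrolling the hidden amplification inside the cited lemma.

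The details all check out: $2(B+1)>k$ forbids two middle vertices in one configuration; a configuration touching gadget $i$ has exactly $B$ residual capacity for its leaves; two configurations suffice for a gadget iff the leaves (total mass $2B$, each positive) split into two parts of mass $\le B$, which forces an exact $B$/$B$ partition; and the instance is feasible because $a_j\le B$ ensures $s(r)+s(c_i)+a_j\le 2B+1=k$. The amplification handles the additive-constant form of asymptotic approximation, which is the form the separation from \textsc{bp} (the additive-$O(\log)$ result of Hoberg--Rothvoss) is most naturally stated in.
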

\noindent
Next, we study the hardness of \RCP.

\begin{theorem}
	\label{thm:EQ}
	For any $\rho \geq 1$, there is a  $\rho$-approximation for \textnormal{\sc rcp} if and only if there is a  $\rho$-approximation for \textnormal{\DKSH}. 
\end{theorem}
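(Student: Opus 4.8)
The plan is to give two polynomial-time reductions, one from \DKSH{} to \RCP{} and one from \RCP{} to \DKSH{}, each of which preserves the optimum value exactly and carries an $\alpha$-approximate solution to an $\alpha$-approximate solution; combined, these yield the stated equivalence for every $\rho\ge 1$. Throughout, for a vertex $u$ of a directed graph $G$ I write $\anc{G}{u}$ for the set consisting of $u$ and every vertex having a directed path to $u$, and I use that a vertex set is closed (has no in-edges from its complement) iff it contains $\anc{G}{u}$ for each of its members.

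\textbf{From \DKSH{} to \RCP{}.} Given a weighted hypergraph $H=(V_H,E_H)$ and an integer $k$, I build an \RCP{} instance with one vertex $v_e$ of profit $w(e)$ for each hyperedge $e\in E_H$, and, for each $u\in V_H$, a \emph{gadget} of $N:=|E_H|+1$ profit-$0$ vertices joined into a directed path $u^{(1)}\to\cdots\to u^{(N)}$; for every $u\in e$ I add the edge $(u^{(N)},v_e)$, so a closed set can contain $v_e$ only if it contains the whole gadget of every $u\in e$. Setting the budget to $k':=Nk+|E_H|$ decouples the two ``currencies'': a complete gadget costs $N$, whereas all the $v_e$'s together cost at most $|E_H|<N$, so no feasible closed set can contain more than $k$ complete gadgets. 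After discarding the vertices of every partially selected gadget (which, lacking $u^{(N)}$, are required by no $v_e$ and so can be removed while the set stays closed) and then inserting every $v_e$ whose $e$ lies inside the chosen complete gadgets (affordable since at most $k$ gadgets are chosen), any solution is brought to the form ``complete gadgets for some $S\subseteq V_H$ with $|S|\le k$, plus all $v_e$ with $e\subseteq S$'', whose profit is exactly $\sum_{e\subseteq S}w(e)$, i.e.\ the \DKSH{} value of $S$. Hence the two optima agree, and reading $S$ off an \RCP{} solution (padding it to size $k$, which cannot lower its value) gives a \DKSH{} solution of at least the same value.

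\textbf{From \RCP{} to \DKSH{}.} Given an \RCP{} instance $(G=(V,E),p,k)$, I take the hypergraph on vertex set $V\cup D$, where $D$ is a set of $k$ isolated dummy vertices, with one hyperedge $\anc{G}{u}$ of weight $p(u)$ for each $u\in V$, and with parameter $k$. For a chosen set $S$ of size $k$, the induced hyperedges are exactly the $\anc{G}{u}$ with $\anc{G}{u}\subseteq S$; the set $U:=\{u\in V:\anc{G}{u}\subseteq S\}$ is closed (if $u\in U$ and $z$ is an ancestor of $u$ then $\anc{G}{z}\subseteq\anc{G}{u}\subseteq S$), satisfies $|U|\le|S\cap V|\le k$, and has $p(U)$ equal to the \DKSH{} objective of $S$. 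Conversely, a closed set $U$ with $|U|\le k$, padded with dummies to size exactly $k$, is a feasible \DKSH{} solution of value $p(U)$. So the optima coincide and the transformation is approximation-preserving in both directions.

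\textbf{Where the difficulty lies.} The sensitive direction is \DKSH{}-to-\RCP{}: in \DKSH{} the objective (the induced hyperedges) is obtained for free, whereas in \RCP{} \emph{every} selected vertex---including the $v_e$'s that represent the acquired hyperedges---is charged against the single cardinality budget, so a naive reduction would let a solution trade gadget vertices for extra $v_e$'s, or the reverse, and break the correspondence. Choosing $N=|E_H|+1$ is precisely what separates the two budgets, and making this rigorous needs the routine but essential facts that partially selected gadgets are useless and that $k+1$ complete gadgets already overflow $k'$. In the reverse direction the only points to watch are that the \emph{weighted} variant of \DKSH{} is used (or that weighted \DKSH{} is first reduced to the unweighted version), and that the $k$ dummy vertices are what turn \DKSH{}'s ``exactly $k$'' into \RCP{}'s ``at most $k$''.
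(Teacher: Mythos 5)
Your proof is correct and follows essentially the same two reductions as the paper: for \DKSH{} to \RCP{} you amplify each original vertex into $m+1$ profit-zero vertices and set the budget to $(m+1)k+m$, and for \RCP{} to \DKSH{} you take the predecessor sets $\anc{G}{u}$ as hyperedges with weight $p(u)$. The only cosmetic differences are that you chain the amplification copies into a directed path (where the paper uses $m+1$ parallel copies each pointing to the edge-vertex), you add $k$ dummy vertices to turn ``at most $k$'' into ``exactly $k$'' (unneeded under the paper's at-most-$k$ formulation of \DKSH{}), and you extract the closed set $U=\{u:\anc{G}{u}\subseteq S\}$ directly rather than passing through the paper's separate minimality lemma.
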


\begin{corollary}
	\label{cor:RCP}
	Assuming the Small Set Expansion Hypothesis (SSEH) and  \textnormal{NP} $\neq$ \textnormal{BPP}, for any $\eps > 0$ there is no $|V|^{1-\eps}$-approximation for \textnormal{\sc rcp}. 
\end{corollary}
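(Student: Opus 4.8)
The plan is to derive \Cref{cor:RCP} directly from \Cref{thm:EQ} together with the known inapproximability of \DKSH; essentially all the work has already been done, and what remains is bookkeeping. First I would invoke the hardness of \DKSH: combining the small-set-expansion hardness of densest $k$-subgraph from~\cite{manurangsi2017inapproximability} with the reduction of~\cite{hajiaghayi2006minimum} shows that, assuming SSEH, \DKSH\ has no polynomial-time $|V|^{1-\eps}$-approximation for any fixed $\eps\in(0,1)$; since the reductions producing this hardness are randomized, ruling out randomized (not only deterministic) approximation algorithms is exactly what forces the extra assumption $\textnormal{NP}\neq\textnormal{BPP}$. These are precisely the hypotheses in the statement.

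Next I would assume, toward a contradiction, that \RCP\ admits a polynomial-time approximation $\cA$ with ratio $|V|^{1-\eps_0}$ for some fixed $\eps_0>0$, and push it through the direction of \Cref{thm:EQ} stating that a $\rho$-approximation for \RCP\ yields a $\rho$-approximation for \DKSH. Concretely: given a \DKSH\ instance $\mathcal{H}$, build the \RCP\ instance $\mathcal{I}$ guaranteed by that reduction, run $\cA$ on $\mathcal{I}$, and map the returned configuration back to a vertex set of $\mathcal{H}$; by \Cref{thm:EQ} this gives a feasible solution of $\mathcal{H}$ within ratio $|V(\mathcal{I})|^{1-\eps_0}$ of optimal. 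Since the reduction underlying \Cref{thm:EQ} creates only one \RCP\ vertex per vertex and per hyperedge of $\mathcal{H}$, and the \DKSH\ instances certifying the hardness above can be taken to be sparse (they originate from bounded-degree small-set-expansion instances, so $|E(\mathcal{H})| = O(|V(\mathcal{H})|)$), we get $|V(\mathcal{I})| = O(|V(\mathcal{H})|)$, hence a polynomial-time $|V(\mathcal{H})|^{1-\eps_0/2}$-approximation for \DKSH\ on all sufficiently large instances. This contradicts the first paragraph, proving \Cref{cor:RCP}.

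The only genuinely nontrivial point, and the one I would flag as the main obstacle, is exactly this size accounting: \Cref{thm:EQ} is phrased for a fixed ratio $\rho$, whereas here $\rho$ is a function of the instance, so one must verify that the reduction of \Cref{thm:EQ}, applied to the relevant hard \DKSH\ instances, inflates the vertex count by at most a constant (or at least sub-polynomial) factor — otherwise the exponent $1-\eps_0$ could be driven below $1$ only for $\eps_0$ bounded away from $0$, weakening the corollary to something like ``no $|V|^{1-\eps}$-approximation for $\eps$ above a threshold.'' Once the linear-size (or near-linear-size) behavior of the composed reductions on sparse instances is checked, everything else is a direct appeal to \Cref{thm:EQ} and to the cited \DKSH\ lower bound.
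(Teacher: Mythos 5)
Your structure matches the paper's: you invoke the ``$\rho$-approximation for \RCP\ implies $\rho$-approximation for \DKSH'' direction of \Cref{thm:EQ} and then appeal to the SSEH-based hardness obtained by combining \cite{manurangsi2017inapproximability} with \cite{hajiaghayi2006minimum}. The paper phrases the chain explicitly as \RCP\ $\to$ \DKSH\ $\to$ Maximum Balanced Biclique, with the approximation exponent tracked in terms of the \emph{\RCP} vertex count $n$ throughout, and then contradicts the MBB lower bound; you collapse the last two arrows into a black-box ``\DKSH\ is $|V|^{1-\eps}$-hard'' statement. That is a cosmetic difference.

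However, you identify the size accounting as the crux and then get it wrong. You assert that the reduction of \Cref{thm:EQ} ``creates only one \RCP\ vertex per vertex and per hyperedge of $\mathcal{H}$,'' so that sparse \DKSH\ instances yield $|V(\mathcal{I})| = O(|V(\mathcal{H})|)$. This is not how the reduction works. In \Cref{sec:2}, given a \DKSH\ instance with $n_H = |V_H|$ vertices and $m = |E_H|$ hyperedges, the \RCP\ instance has vertex set $U \cup E_H$ with $|U| = n_H (m+1)$: each \DKSH\ vertex is duplicated $m+1$ times (these copies are essential to the correctness proof via \Cref{claim:1,claim:2}, as they force the cardinality bound $c = k(m+1)+m$ to act like a bound of $k$ on $V_H$). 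So $|V(\mathcal{I})| = n_H(m+1) + m = \Theta(n_H \cdot m)$, which is $\Theta(n_H^2)$ even for sparse \DKSH\ and worse for dense ones. Consequently a $|V(\mathcal{I})|^{1-\eps_0}$-approximation for \RCP\ translates to roughly an $n_H^{2(1-\eps_0)}$-approximation for \DKSH, which only beats the trivial $n_H$ bound when $\eps_0 > 1/2$. Your argument as written therefore does not establish the corollary ``for any $\eps > 0$'' --- it establishes it only for $\eps$ above a constant threshold, which is precisely the weakening you yourself warned about. The paper avoids making your claim: it phrases the implied hardness in terms of the number of vertices of the \emph{\RCP} instance at every step (``where $n$ is the number of vertices in the {\RCP} instance''), never asserting a linear-size relationship between the \DKSH\ and \RCP\ instances, and defers the final contradiction to the cited results on MBB.
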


We give a tight lower bound also for the previously studied special case of
{\em uniform} {\sc rcp (u-rcp)}~\cite{borradaile2012knapsack,bonsma2010most}.\footnote{In~\cite{borradaile2012knapsack}, {\sc u-rcp} is called uniform directed all-neighbor knapsack problem.} 
In {\sc u-rcp} the vertices have uniform (unit) profits (i.e., $p(v) = 1~\forall v \in V$). 
While {\sc u-rcp} is known to admit a PTAS~\cite{borradaile2012knapsack,bonsma2010most}, the question of whether the problem admits an EPTAS or even an FPTAS remained open.\footnote{We give formal defintions relating to approximation schemes in Appendix~\ref{sec:uniform}.} Our next result gives a negative answer to both of these questions, posed in~\cite{borradaile2012knapsack,bonsma2010most}. 

\begin{theorem}
	\label{thm:2}
	Assuming \textnormal{Gap-ETH}, there is no \textnormal{EPTAS} for \textnormal{\sc u-rcp}. 
\end{theorem}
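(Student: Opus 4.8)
The plan is to prove \Cref{thm:2} by a gap‑preserving reduction from a parameterized problem that is hard under Gap-ETH, arranged so that an \textnormal{EPTAS} for \URCP{} would solve the source problem in $n^{o(q)}$ time. The catch, shared by all ``no‑\textnormal{EPTAS}'' proofs for problems that do admit a \textnormal{PTAS}, is that the reduction must \emph{not} create a constant optimality gap (that would contradict the known \textnormal{PTAS} under $\mathrm{P}\neq\mathrm{NP}$); instead the aim is a gap of $1+\Theta(1/q)$, where $q$ is the parameter of the source, together with only polynomial blow‑up whose degree is independent of $q$. Concretely, I would start from a Gap-ETH hard gap version of the $q$‑clique problem (equivalently, a densest‑$q$‑subgraph gap problem): given a graph $H$ on $n$ vertices and $q\in\N$, distinguish the case that $H$ has a $q$‑clique from the case that every $q$‑vertex subset of $H$ spans at most $(1-\delta_0)\binom{q}{2}$ edges, for a fixed constant $\delta_0>0$; by known parameterized‑inapproximability reductions from Gap-ETH (distributed‑PCP style), this problem admits no $g(q)\cdot n^{o(q)}$‑time algorithm.

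The core of the proof is a gadget that translates density of $H$ into the \URCP{} objective while being \emph{immune to padding}. The key device is that a directed cycle on $w$ vertices has exactly two predecessor‑closed subsets, $\emptyset$ and the whole cycle, so it behaves like a single indivisible item of ``weight'' $w$. Given $(H,q)$, I would build the \URCP{} instance with a \emph{selector} cycle $\Gamma_v$ of length $h:=\binom{q}{2}+1$ for every $v\in V(H)$, a \emph{reward} vertex $a_e$ for every edge $e=\{u,v\}\in E(H)$ whose two in‑neighbours are a vertex of $\Gamma_u$ and a vertex of $\Gamma_v$, uniform profits $p\equiv 1$, and cardinality bound $B:=qh+\binom{q}{2}$. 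Every predecessor‑closed set of cardinality $\le B$ is then of the form $\bigcup_{v\in S}\Gamma_v\cup\{a_e:e\subseteq S\}$ for some $S\subseteq V(H)$; since $h>\binom{q}{2}$ we get $|S|\le q$, and a short case analysis (using that $h(q-j)+\binom{q-j}{2}$ is strictly decreasing in $j$) shows the optimum equals $qh+\mathrm{dens}_q(H)$, where $\mathrm{dens}_q(H)$ is the largest number of edges induced by $q$ vertices of $H$. Hence a yes‑instance gives $\OPT=qh+\binom{q}{2}$, a no‑instance gives $\OPT\le qh+(1-\delta_0)\binom{q}{2}$, a ratio of $1+\Theta(1/q)$, and the instance has $\Theta(nq^2+n^2)$ vertices.

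To finish, suppose \URCP{} had an \textnormal{EPTAS} running in time $f(1/\eps)\cdot N^{O(1)}$ for some function $f$. Given $(H,q)$, I would construct the instance above in polynomial time and run the \textnormal{EPTAS} with $\eps:=c_0/q$ for a small enough constant $c_0$, so that a $(1+\eps)$‑approximate value separates $qh+\binom{q}{2}$ from $qh+(1-\delta_0)\binom{q}{2}$. Since $N=\Theta(nq^2+n^2)$, the running time is $f(q/c_0)\cdot N^{O(1)}=g(q)\cdot n^{O(1)}=g(q)\cdot n^{o(q)}$, which decides the source gap problem --- contradicting Gap-ETH, and therefore ruling out an \textnormal{EPTAS}.

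I expect the gadget design and its analysis to be the main obstacle. In \URCP{} one can in general add an arbitrary source vertex at unit cost for unit gain, so a careless reduction has optimum exactly $B$ for \emph{every} $H$ and the gap vanishes; the cycle‑selectors are precisely what eliminates all source vertices and all ``cheap'' predecessor‑closed subsets, but one then has to rule out every unintended near‑optimal configuration (for example $|S|<q$ padded with extra rewards, or sets not of the clean form), which is what pins down the choice of $h$ and $B$. A secondary subtlety is that the source hardness must compare the designed clique against \emph{all} $\le q$‑vertex subsets of $H$, not only the intended ones, which is why we start from the gap (rather than exact) $q$‑clique formulation. Finally, if \URCP{} is taken with genuine item sizes (its ``all‑neighbor knapsack'' formulation), essentially the same construction works with selectors of size $h$ and rewards of size $1$, at the cost of a somewhat more delicate gap computation.
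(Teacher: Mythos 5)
Your proposal is correct and reaches the same conclusion, but via a genuinely different reduction than the paper's. The paper reduces from \emph{exact} densest $k$-subgraph: given $(G,k)$, it guesses the optimum value $m^*\in[1,\binom{k}{2}]$, builds (for each guess $m$) a U-RCP instance $D_m(G)$ in which each vertex is replaced by a strongly connected gadget of $2m$ copies (bidirectional edges between all copies, so a predecessor-closed set takes all or nothing), runs the hypothesized EPTAS with $\eps_m=1/(2h_m)=\Theta(1/(km))$, and argues that for $m=m^*$ the approximate answer must already be exactly optimal (since a loss of less than one unit is impossible on an integral objective). Iterating over the $O(k^2)$ guesses yields an exact FPT algorithm for densest $k$-subgraph, which contradicts its Gap-ETH (in fact W[1]-)hardness. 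You instead start directly from the \emph{gap} version of $q$-clique, build a single U-RCP instance with fixed-size selector cycles of length $\binom{q}{2}+1$ and reward vertices for edges, show the U-RCP optimum is exactly $qh+\mathrm{dens}_q(H)$, and observe that the inherited gap is $1+\Theta(1/q)$, so a single EPTAS call with $\eps=\Theta(1/q)$ suffices. Your gadget (directed cycle) and the paper's (bidirectional clique) both enforce all-or-nothing inclusion; functionally they are interchangeable. Your route is more economical --- one instance, larger $\eps$, and it exploits the gap hardness that Gap-ETH naturally provides --- whereas the paper's route converts the EPTAS into an exact solver, which is conceptually more indirect but would also go through under the weaker assumption $\mathrm{W[1]}\neq\mathrm{FPT}$ (the paper does not explicitly note this). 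One small point of care in your argument: the characterization of predecessor-closed sets should allow an arbitrary subset of the reward vertices $a_e$ with $e\subseteq S$ (not necessarily all of them), but since the objective is maximized by including as many as the budget allows, the optimum is still attained at the clean form and your size computation pins it down; you should state this explicitly when writing the analysis.
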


Finally, we show that {\RCP} remains essentially just as hard when the in-degrees and 
out-degrees are bounded.

\begin{theorem}
    \label{thm:2thm}
    A $\rho$-approximation algorithm for \textnormal{\sc rcp} instances with in-degrees and out-degrees bounded by $2$, for any $\rho \geq 1$, implies a $\rho$-approximation for \textnormal{\sc rcp}. 
\end{theorem}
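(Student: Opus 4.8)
The plan is to give an approximation‑preserving reduction from general \RCP{} to \RCP{} restricted to in‑ and out‑degree at most $2$. Fix an \RCP{} instance $(G=(V,E),p,k)$; we may assume $k<|V|$, since otherwise $S=V$ is (trivially feasible and) optimal. I will build an instance $(G'=(V',E'),p',k')$ of bounded‑degree \RCP{} with $\OPT(G')=\OPT(G)$ such that any feasible solution of $G'$ maps in polynomial time to a feasible solution of $G$ of at least the same profit; this immediately yields the claimed implication, and the reduction does not depend on $\rho$.

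The construction combines two ingredients. \emph{Degree reduction:} at a vertex with in‑neighbors $z_1,\dots,z_d$, $d\ge 3$, replace the $d$ incoming edges by a binary ``fan‑in'' tree of new zero‑profit vertices whose $d$ leaves are fed by $z_1,\dots,z_d$ and whose root is the vertex itself; symmetrically install a binary ``fan‑out'' tree at a vertex of out‑degree $\ge 3$ (for $d\le 2$ we keep the original edges and build nothing). Then every internal tree vertex has in‑degree $\le 2$ and out‑degree $\le 2$, each original edge $(z,v)$ turns into a directed path from $z$ to $v$ through a fan‑out tree of $z$ followed by a fan‑in tree of $v$, and the ancestor‑closure structure on the original vertices is unchanged; the number $D$ of new vertices is $O(|E|)$. \emph{Blow‑up:} replace each original vertex $v$ by a directed path $P_v=v^1\to v^2\to\cdots\to v^N$ with $N:=D+1$, set $p'(v^N)=p(v)$ and $p'\equiv 0$ on all other vertices of $G'$, and attach the fan‑in tree at $v^1$ and the fan‑out tree at $v^N$. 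Finally set $k':=Nk+D$. Everything is polynomial, since $N,D,|V'|,k'=\mathrm{poly}(|V|)$ and $k'\le|V'|$.

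For correctness I would verify two directions. \textbf{(ii)} Given feasible $S'\subseteq V'$ for $G'$ (ancestor‑closed, $|S'|\le k'$), set $S:=\{v\in V: v^N\in S'\}$. If $v^N\in S'$ and $(z,v)\in E$, then $z^N$ has a directed path $z^N\rightsquigarrow v^N$ in $G'$, so $z^N\in S'$ and $z\in S$; hence $S$ is ancestor‑closed in $G$. Since $S'$ contains the $N$ distinct vertices of $P_v$ for every $v\in S$, we get $N|S|\le|S'|\le k'=Nk+D<N(k+1)$ (as $D<N$), so $|S|\le k$; and $p(S)=\sum_{v\in S}p(v)=p'(S')$ because all profit of $G'$ sits on the vertices $v^N$. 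Thus $S$ is feasible for $G$ with $p(S)=p'(S')$, which also gives $\OPT(G)\ge\OPT(G')$. \textbf{(i)} Conversely, take an optimal $S^*$ for $G$ and let $S'$ be the ancestor‑closure in $G'$ of $\{v^N:v\in S^*\}$. Because $S^*$ is ancestor‑closed in $G$, the original copies pulled in are exactly the $P_v$ with $v\in S^*$ (contributing $N|S^*|$ vertices), together with at most all fan‑in/fan‑out tree vertices, i.e.\ at most $D$ extra vertices; so $|S'|\le N|S^*|+D\le Nk+D=k'$ and $p'(S')=\sum_{v\in S^*}p(v)=\OPT(G)$, giving $\OPT(G')\ge\OPT(G)$. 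Hence $\OPT(G')=\OPT(G)$, and a $\rho$‑approximate solution to $G'$ maps via (ii) to a feasible solution to $G$ of profit $\ge\OPT(G')/\rho=\OPT(G)/\rho$.

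The main obstacle — and the reason the plain degree‑reduction gadget alone fails — is that the auxiliary tree vertices count against the cardinality budget $k$: naively one cannot simultaneously fit the closure of an optimal solution inside $G'$ and prevent a solver of $G'$ from wasting the budget slack on $\Theta(|E|)$ useless real vertices. Blowing each real vertex up into a path of length $N\gg D$ amortizes exactly this slack, making the budget padding by $D$ harmless and letting directions (i) and (ii) hold at once with matching profits. The remaining work is routine: checking that all in‑ and out‑degrees of $G'$ are at most $2$ (including the trivial low‑degree cases) and that the ancestor‑closures behave as asserted, in particular that $u^j$ is an ancestor of $v^N$ only when $u\preceq_G v$.
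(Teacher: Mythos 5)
Your proof is correct, but it takes a genuinely different route from the paper's. The paper also replaces each original vertex $x$ by a gadget built from two binary trees (an in-tree with edges oriented toward $x$ and an out-tree with edges oriented away from $x$, each with $m\approx n$ leaves indexed by all of $V$), but crucially it also adds an edge from every out-tree leaf back to the corresponding in-tree leaf, turning the whole gadget into a \emph{strongly connected component}. Since a precedence-closed set of the augmented graph must take each gadget entirely or not at all, the budget is set to $k_I = t\cdot k$ with $t$ the gadget size, and the cardinality correspondence $|S_I| = t\cdot|D(S_I)|$ is exact; there is no slack and no amortization. Your construction instead uses an ordinary degree-reduction gadget (fan-in and fan-out binary trees that are \emph{not} made strongly connected to the vertex path) and absorbs the $D$ auxiliary vertices by inflating each original vertex into a path $P_v$ of length $N = D+1$ and padding the budget to $k' = Nk + D$; the inequality $N|S| \le Nk + D < N(k+1)$ then forces $|S|\le k$. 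Both constructions are polynomial, yield $\OPT(G')=\OPT(G)$, and provide solution maps in both directions with matching profit, so both give the theorem. The paper's SCC device buys exact counting and a budget of the simple form $tk$; your ``blow-up and amortize'' device is the more generic trick (no need to close the gadget into an SCC) and would transfer to settings where making the gadget strongly connected is inconvenient, at the cost of a larger instance (roughly $O(|V|\cdot|E|)$ vertices versus the paper's $O(|V|^2)$) and the slack-estimation step. One point worth making explicit if you write this up: the claim that $u^j$ is an ancestor of $v^N$ in $G'$ exactly when $u$ is an ancestor of $v$ in $G$ is what makes the closure of $\{v^N : v\in S^*\}$ contain no stray $P_u$ with $u\notin S^*$; you flag it as routine, and it is, but it is the load-bearing fact for direction (i).
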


Due to space constraints, we include in the paper body only the proof of  Theorem~\ref{thm:main} and defer the proofs of the other theorems to the Appendix.

\myparagraph{Techniques}
Our algorithm for {\CT} covers the vertices in a given out-tree $T$ in a bottom-up fashion, starting from the leaves. The key players in this process are vertices called {\em anchors} which define the candidate subtrees for covering in each iteration.  Interestingly, we show that the subtree associated with a specific anchor $a$ (including also all of $a$'s ancestors) can be covered efficiently by using the naive NextFit algorithm.

To eliminate {\em additive} terms in the approximation guarantee (i.e., obtain an {\em absolute} ratio of $2$), a crucial step in the algorithm is to distinguish
in each call to NextFit between the case where NextFit outputs an {\em even} vs. {\em odd} number of configurations. In the latter case, we discard the last configuration and cover the corresponding {\em leftover} vertices in a later iteration of the algorithm.

The crux of the analysis is to charge the number of subsets (i.e., configurations) used by the algorithm separately to each anchor. Consider a subtree of an anchor $a$, of total size $\na(a)$, covered at some iteration. Observing that each subset including vertices in this subtree must include also all the ancestors of $a$, we are able to show that the total number of subsets used is at most twice $\floor*{\frac{\na(a)}{k-h(a)+1}}$, where $h(a)$ is the total size of the ancestors of $a$ in $T$ (including $a$). To complete the analysis, we lower bound the number of subsets used in {\em any} feasible solution. This is done via an intricate calculation 
bounding the {\em number of occurrences} of each vertex 
$v$ in the subtree of an anchor $a$ in any feasible cover, which is the 
heart of the analysis. Our Greedy approach may be useful for other {\CPO} classes of instances in which the input graph $G$ has a {\em tree-like} structure (e.g., graphs of bounded treewidth).

Our proofs of hardness for {\CT} and {\RCP} use sophisticated constructions, most notably, to show a two-way reduction between {\RCP} and {\DKSH} (in Appendices~\ref{sec:2} and~\ref{sec:6}) and the hardness of {\RCP} with bounded degrees (see Appendix~\ref{sec:inOut}).

\myparagraph{Organization}
Section~\ref{sec:alg_ct} presents our approximation algorithm for {\sc ct} and the proof of Theorem~\ref{thm:main}, and Section~\ref{sec:open_probs} includes some open problems. In Appendix~\ref{sec:motivation} we describe common applications of {\RCP}, and Appendix~\ref{sec:hard} gives the hardness result for {\sc ct} (proof of Theorem~\ref{thm:hard}). Appendices~\ref{sec:2} and~\ref{sec:6} show the equivalence between {\sc rcp} and {\DKSH} (proofs of Theorem~\ref{thm:EQ} and Corollary~\ref{cor:RCP}). Appendix~\ref{sec:uniform} shows that there is no EPTAS for {\URCP} (Theorem~\ref{thm:2}), and Appendix~\ref{sec:inOut} proves the hardness of {\sc rcp} on graphs of in-degrees and out-degrees bounded by $2$  (Theorem~\ref{thm:2thm}). Finally, some missing proofs are given in Appendix~\ref{app:proofs}.

\mysection{Approximation Algorithm for {\CT}}
\label{sec:alg_ct}
In this section, we present our approximation algorithm for {\CT}. We start with some definitions and notations.
Let $T=(V, E)$ be an out-tree rooted at a vertex $r\in V$. Recall that in an out-tree all edges are oriented outwards from $r$. Thus, for an edge $(u,v)\in E$, vertex $u$ precedes $v$ on the (unique) path from $r$ to $v$. We say that $u$ is the \textit{parent} of $v$ and $v$ is a \textit{child} of $u$. 
More generally, if $u$ is on the (unique) path from $r$ to $v$ then $u$ is an \textit{ancestor} of $v$ and $v$ is a \textit{descendant} of $u$. A vertex $v$ is considered an ancestor of itself but \textit{not} a descendant of itself. Define $h(v)$ to be the total size of the vertices on the path from $r$ to $v$, which equals the total size of the ancestors of $v$. 

For $U \sse V$, let 
$T[U]$ be the subgraph of $T$ induced by $U$. If $T[U]$ is connected, then we say that $T[U]$ is a subtree of $T$. Note that in this case $T[U]$ is also an out-tree. From now on, we consider only induced subgraphs that are connected, namely subtrees of $T$.
If $r \in U$, then $T[U]$ is a subtree of $T$ rooted at $r$.

For an out-tree $T=(V, E)$ and a subset of vertices $U\sse V$, let $\anc{T}{U}$ be the set of the ancestors in $T$ of the vertices in $U$, and let $\des{T}{U}$ be the set of the descendants in $T$ of the vertices in $U$. Note that if $T[U]$ is a subtree of $T$ rooted at $r$, then $\anc{T}{U}=U$. In case $U$ is a singleton set, we omit the set notation; that is, for $v\in V$, let $\anc{T}{v}$ be the set of the ancestors of $v$ in $T$, and let $\des{T}{v}$ be the set of the descendants of $v$ in $T$.

We note that
if there is a vertex $v\in V$ for which $h(v)> k$ then there is no feasible solution. Also, if there is a leaf $\ell$ of $T$ for which $h(\ell) = k$ then any solution must include the set $\anc{T}{\ell}$ (of size $k$), and after adding this set to the solution, we can remove $\ell$ and all of its ancestors which are not ancestors of any other leaf. Thus, w.l.o.g. we assume that for any vertex $v\in V$ it holds that $h(v)< k$. Also, we note that if there is a leaf $\ell$ of $T$ of size $s(\ell)=0$ then we can remove $\ell$, solve for the resulting tree and then add $\ell$ to a subset in the cover that includes the parent of $\ell$ in $T$. Thus, w.l.o.g. we assume that for any leaf $\ell$ of $T$, $s(\ell)>0$.

The algorithm for computing a cover is iterative. In each iteration, we compute a partial cover as described below. We then continue to the next iteration with the subtree rooted at $r$ induced by the uncovered vertices and their ancestors. The algorithm terminates when either the set of uncovered vertices is empty or the total size of the vertices of the remaining subtree (rooted at $r$) is at most $k$, in which case these vertices form the last set in the cover.

In each iteration $t$ of the algorithm, we compute a subset of vertices $A_t\ss V$ that we call \textit{anchors}. We then compute a cover of {\em some} (potentially all) descendants of the anchors in $A_t$, and proceed to the next iteration.

\cref{alg:anchor} is the pseudo code of the iterative algorithm. Initially, $V_1=V$. Consider the $t$-th iteration, for $t\ge 1$. If $s(V_t)\le k$ then the algorithm terminates. Otherwise, define $A_t$ as the set of all the vertices $v\in V_t$ such that (i) the total size of the descendants of $v$ in $T[V_t]$ is more than $k-h(v)$, and (ii) the total size of the descendants of every child $u$ of $v$ in $T[V_t]$ is at most $k-h(u)=k-h(v)-s(u)$. 

Procedure NextFit given in \cref{alg:nextfit} is called for every $a\in A_t$. The input to Procedure NextFit is the tree $T_a$ defined as the rooted subtree that consists of the path from $r$ to $a$ and the descendants of $a$ in the subtree $T[V_{t(a)}]$ (see Figure~\ref{fig:Ta}).
When called for an anchor $a$, Procedure NextFit (\cref{alg:nextfit}) computes a cover of some (potentially all) descendants of $a$. The number of sets returned in this procedure call is even, and the
total size of the descendants of $a$ that are not covered by the sets returned by Procedure NextFit is at most $k-h(a)$. Let $U_t \sse V_t$ be the set of all descendants of anchors in $A_t$ that were covered in iteration $t$, together with all their ancestors. If $V_t= U_t$, then the algorithm terminates. Otherwise, we let $V_{t+1}$ be the set of ancestors of the vertices $V_t\sm U_t$ in $T[V_t]$ and continue to iteration $t+1$.
\negA
\begin{algorithm}[!ht]
	\caption{Feasible cover computation} \label{alg:anchor}
	\begin{algorithmic}[1]
		\Statex \emph{Input}: An out-tree $T=(V,E)$ rooted at $r$ and an integer $k>0$
		\Statex \emph{Output}: A feasible cover $\calC = C_1,\ldots,C_c$
		\State $V_1 \gets V$
		\State $\calC \gets \emptyset$
		\State $t \gets 1$
		\While{$s(V_t) > k$}
    		\State{$X_t\gets \lrc{u\in V_t\,|\, s(\des{T[V_t]}{u}) \le k-h(u)}$}
    		\State{$A_t\gets \lrc{v\in V_t\sm X_t\ |\ \mbox{all the children of } v \mbox{ in } T[V_t] \mbox{ are in } X_t}$}
    		\State{$U_t \gets \emptyset$} \Comment{$U_t$ stores the vertices covered in iteration $t$}
    		\For{$a \in A_t$}
        		\State{$T_a \gets T[\anc{T[V_t]}{a}\cup\des{T[V_t]}{a}]$} \label{line:Ta}
        		\State{$Q_1,\dots,Q_m\gets \textsc{NextFit}(a,T_a,k)$}
        		\State{Add $Q_1,\dots,Q_m$ to $\calC$} \Comment{Add the partial cover computed by $\textsc{NextFit}$}
        		\State{$U_t \gets U_t \cup Q_1\cup\cdots\cup Q_m$}
            \EndFor
    		\If {$V_t\sm U_t\ne \emptyset$}
    		      \State{$V_{t+1}\gets\anc{T[V_t]}{V_t\sm U_t}$}
    		\Else
    		      \State{$V_{t+1}\gets\emptyset$}
    		\EndIf
    		\State $t \gets t+1$
        \EndWhile
		\If {$V_t\ne \emptyset$} \Comment{The last set in the cover}
		      \State{Add $V_t$ to $\calC$}
		\EndIf
		\Return $\calC$
	\end{algorithmic}
\end{algorithm}
\negA
\begin{algorithm}[!ht]
	\caption{Next-Fit packing} 	\label{alg:nextfit}
	\begin{algorithmic}[1]
		\Procedure{NextFit}{$a,T_a,k$}
		\Statex \emph{Input}: An anchor $a\in A$, the subtree $T_a$ and an integer $k>0$
		\Statex \emph{Output}: A feasible cover $Q_1,\ldots,Q_m$ of some (potentially all) vertices in $\des{T_a}{a}$
		
		\State Let $u_1,\ldots,u_d$ be the children of $a$ in $T_a$
		\State $m \gets 1$
		\State $Q_m \gets \anc{T_a}{a}$
		
		\For{$s=1$ to $d$}
		\If{$s(Q_m)+s(\des{T_a}{u_s})\le k$}
		\State $Q_m \gets Q_m \cup \des{T_a}{u_s}$
		\Else
		\State $m \gets m+1$
		\State $Q_m \gets \anc{T_a}{a}\cup \des{T_a}{u_s}$
		\EndIf
		\EndFor
		\If{$m\mbox{ is odd}$} \Comment{Remove the subset $Q_m$ if $m$ is odd} \label{line:odd}
		\State $m\gets m-1$ \Comment{Note that $m>1$}
		\EndIf
		\Return $Q_1,\ldots,Q_m$
		\EndProcedure
	\end{algorithmic}
\end{algorithm}
Let $A=\bigcup_t A_t = \{a_1,a_2,\ldots\}$ be the set of anchors computed in all the iterations. For an anchor $a\in A$, let $t(a)$ be the iteration in which $a$ was added to the set of anchors. 
Note that any leaf $\ell$ of $T$ appears in exactly one subset in $\calC$. Thus, the iteration in which $\ell$ is {\em covered} is uniquely defined.

\begin{figure}
\centering
\begin{subfigure}[t]{0.49\textwidth}
\centering
\includegraphics[height=4.5cm, width=5cm]{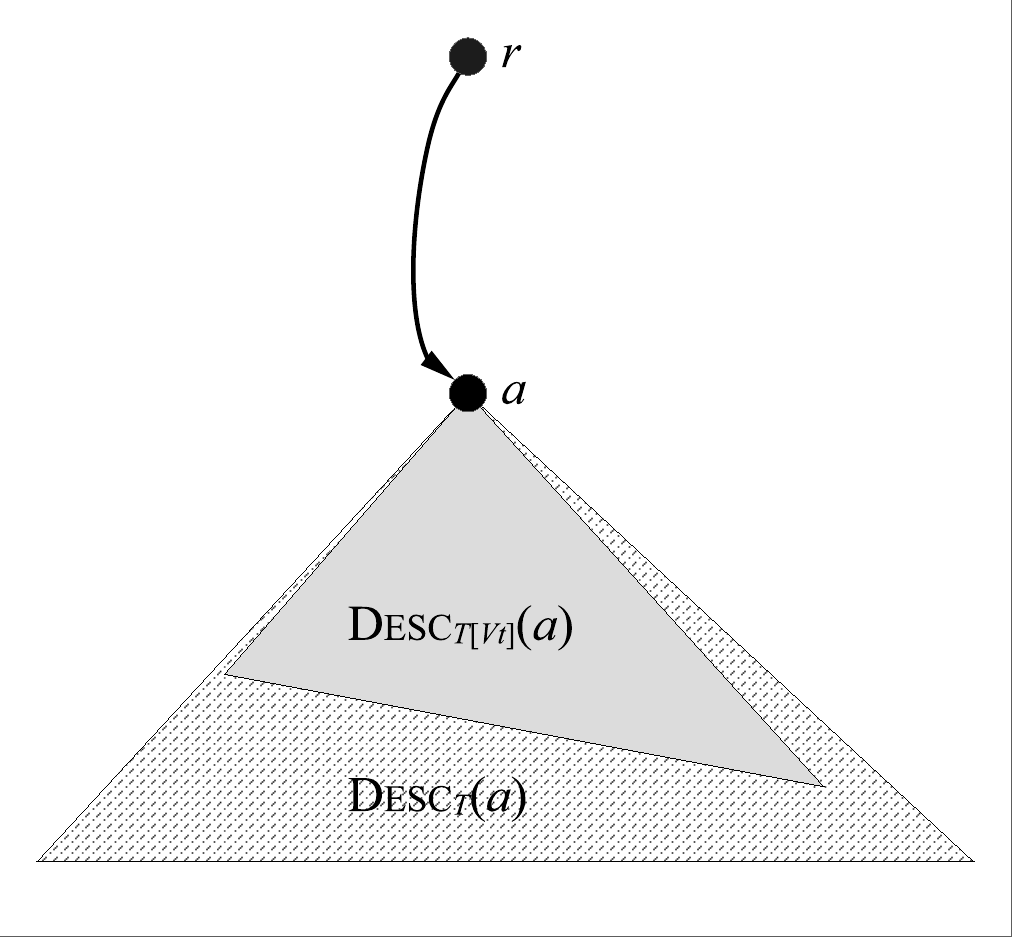}
\caption{\small The subtree $T_a$} 
\label{fig:Ta}
\end{subfigure}
\hfill
\begin{subfigure}[t]{0.49\textwidth}
\centering
\includegraphics[height=4.5cm, width=5cm]{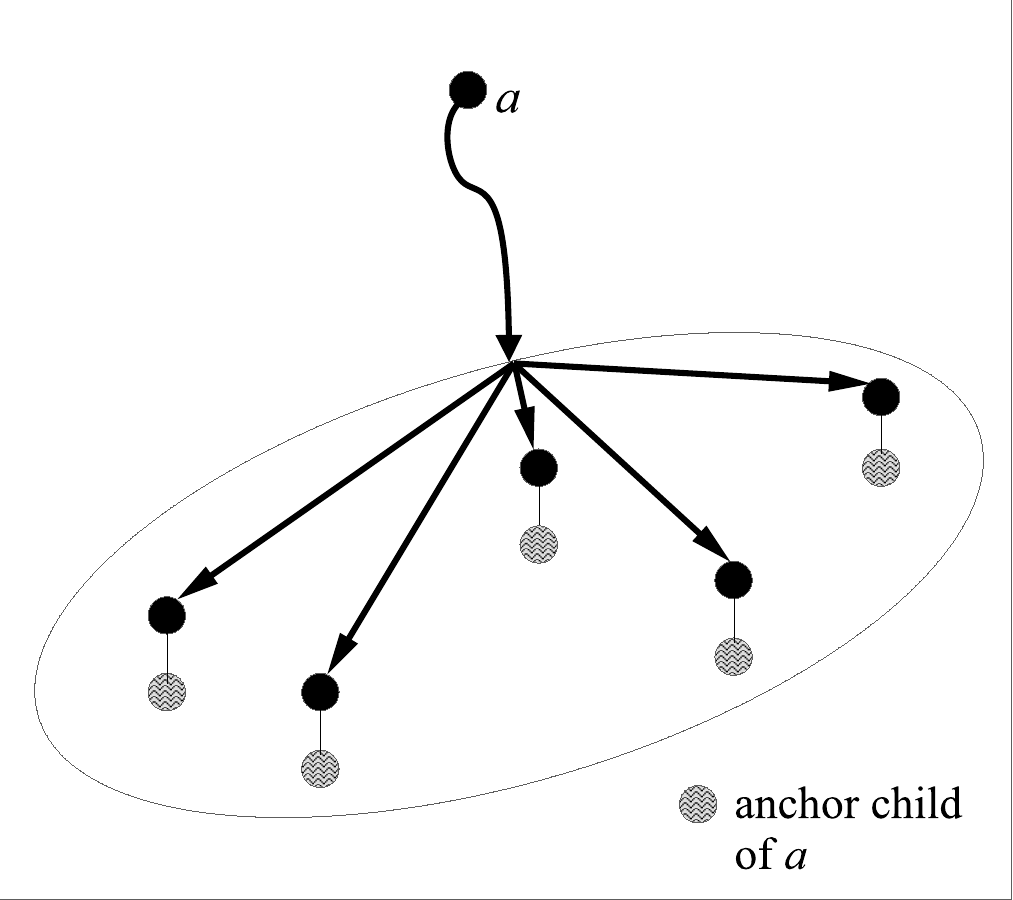} 
\caption{\small The subtree $S_a$} 
\label{fig:Sa}
\end{subfigure}
\hfill
\vspace{-10pt}
\caption{\small The subtrees $T_a$ defined in \cref{alg:anchor}, and $S_a$ defined in the proof of \cref{lem:LB}.}
\end{figure}

\begin{definition}
	Let $a\in A$ be an anchor.
	\begin{itemize}
		\item
		If $v\in \des{T_a}{a}$ is an ancestor of a leaf $\ell$ of $T$ that is covered in iteration $t(a)$ then we say that $v$ is {\em anchored} at $a$.
		\item
		If $v\in \des{T_a}{a}$ is not anchored at $a$ then we say that $v$ is a {\em leftover vertex} of $a$.
		\item
		Let $\na(a)$ denote the total size of the vertices that are anchored at $a$,
		and $\lo(a)$ denote the total size of the leftover vertices of $a$. 
	\end{itemize}
\end{definition}
  Clearly, $\na(a)+\lo(a)=s(\des{T_a}{a})$. Our assumption that for every leaf $\ell$ of $T$, $s(\ell)>0$, implies that (i) $\na(a)>0$ and (ii) if there are leftover vertices then $\lo(a)>0$.

The proofs of the next lemmas are in \cref{app:proofs}.
\begin{lemma}\label{lem:leftover}
    \comment{
	If $v$ is a  leftover vertex of $a$ then all the vertices on the path from $a$ to $v$ (excluding $a$) are also leftover vertices of $a$.
	If $v$ is anchored at $a$ then all the vertices on the path from $a$ to $v$ (excluding $a$) are also anchored at $a$.}
    Let $v\in \des{T_a}{a}$, and let $u_s$ be the (unique) child of $a$ that is also an ancestor of $v$.
	If $v$ is a  leftover vertex of $a$ then all the vertices in the subtree of $T_a$ rooted at $v$, as well as the vertices along the path from $u_s$ to $v$, are also leftover vertices of $a$.
	If $v$ is anchored at $a$ then all the vertices in the subtree of $T_a$ rooted at $v$, as well as the vertices along the path from $u_s$ to $v$, are also anchored at $a$. 
\end{lemma}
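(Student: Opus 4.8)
The plan is to trace through the structure of Procedure NextFit (\cref{alg:nextfit}) when it is called on the anchor $a$, and to use the fact that NextFit processes the children $u_1,\dots,u_d$ of $a$ one at a time, each time either adding the \emph{entire} subtree $\des{T_a}{u_s}$ to the current configuration $Q_m$ or opening a new configuration and placing the entire subtree $\des{T_a}{u_s}$ into it. The crucial observation is that NextFit never splits a child-subtree $\des{T_a}{u_s}$ across configurations: every vertex of $\des{T_a}{u_s}$ is placed into exactly one $Q_j$ (the one chosen when $s$ is processed), unless that $Q_j$ is the discarded last configuration (removed on \cref{line:odd} when $m$ is odd). Hence, for each child $u_s$, either all vertices of $\des{T_a}{u_s}\cup\{u_s\}$ are covered in iteration $t(a)$, or none of them are — and in the latter case $u_s$ must be the child processed in the discarded configuration, so $s=d$.

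First I would make precise the ``all-or-nothing per child'' claim: fix the run of \textsc{NextFit}$(a,T_a,k)$, and for each $s\in[1:d]$ let $j(s)$ be the value of $m$ at the moment the loop iteration for $u_s$ finishes. Then after the loop, $\des{T_a}{u_s}\subseteq Q_{j(s)}$ and these vertices appear in no other $Q_{j'}$. If the final $m$ produced by the loop is even, nothing is discarded and every child-subtree is covered. If it is odd, the configuration $Q_m$ (with this odd $m$) is removed, and by the structure of the loop the only children whose subtrees lie in $Q_m$ are those processed after the last ``open a new configuration'' step, i.e.\ a suffix of $u_1,\dots,u_d$; in particular $u_d$ is among them. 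So the uncovered children form a (possibly empty) suffix of the child-ordering, and in particular if $u_s$ is uncovered then so is $u_d$.

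Next, translate this into the statement of the lemma. Take $v\in\des{T_a}{a}$ with child-of-$a$ ancestor $u_s$. Every vertex $w$ in the subtree of $T_a$ rooted at $v$, and every vertex on the path from $u_s$ to $v$, lies in $\des{T_a}{u_s}\cup\{u_s\}$. A vertex of $T_a$ is anchored at $a$ iff it is an ancestor (within $T$, equivalently within $T_a$ since these vertices are descendants of $a$) of some leaf $\ell$ of $T$ covered in iteration $t(a)$; and ``covered in iteration $t(a)$'' for a leaf means the leaf lies in some $Q_j$ that is \emph{not} discarded. Now combine: if $v$ is a leftover vertex of $a$, then no leaf of $T$ below $v$ is covered in iteration $t(a)$, which by the all-or-nothing claim forces the entire subtree $\des{T_a}{u_s}\cup\{u_s\}$ to lie in the discarded configuration, so none of $w$'s leaves are covered either, i.e.\ $w$ is leftover. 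Conversely, if $v$ is anchored at $a$, then $v$ is an ancestor of some leaf $\ell$ covered in $t(a)$; $\ell$ lies below $u_s$, so $\des{T_a}{u_s}$ is \emph{not} the discarded subtree, hence \emph{every} leaf of $T$ below $u_s$ is covered in $t(a)$; since every $w$ in the subtree rooted at $v$ or on the path $u_s\!\to\!v$ is an ancestor of at least one such leaf (leaves of $T$ in $\des{T_a}{u_s}$ are leaves of $T_a$ by the definition of $T_a$ and the fact that $a\in A_{t(a)}$ means children's subtrees were not further expanded), each such $w$ is anchored at $a$.

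The main obstacle I anticipate is not the combinatorics of NextFit, which is straightforward, but the bookkeeping needed to connect ``anchored/leftover,'' a notion defined via leaves of the \emph{original} tree $T$ covered in a particular iteration, with the local behavior of NextFit on $T_a$ — in particular, making sure that a leaf of $T$ that sits in $\des{T_a}{a}$ is genuinely a leaf of $T_a$ (so that it is actually assigned by NextFit and not passed down further), and that a leaf of $T_a$ in $\des{T_a}{u_s}$ gets covered precisely when $\des{T_a}{u_s}$ is not the discarded subtree. Both follow from the defining properties of an anchor (condition (ii): every child $u$ of $a$ has $s(\des{T[V_{t(a)}]}{u})\le k-h(u)$, so $u_s$'s whole subtree fits and is handled atomically) together with the remark in the text that every leaf of $T$ lies in exactly one set of $\calC$; I would spell this out carefully but expect no real difficulty beyond the need for precision.
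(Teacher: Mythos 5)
Your proof follows essentially the same approach as the paper's: the whole argument rests on the observation that Procedure NextFit handles each child-subtree $\des{T_a}{u_s}$ atomically, placing it wholly into one configuration $Q_j$, so that the entire subtree is either covered in iteration $t(a)$ (if $Q_j$ survives) or entirely uncovered (if $Q_j$ is the discarded odd configuration). The paper's proof is a three-sentence version of exactly this observation, with no further elaboration.

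One small internal slip worth pointing out: you first write that if $u_s$'s subtree is uncovered ``then $u_s$ must be the child processed in the discarded configuration, so $s=d$.'' That conclusion is too strong --- the discarded $Q_m$ may contain several children $u_{s'},\dots,u_d$, not just $u_d$ --- but you correct yourself in the very next sentence (``the uncovered children form a (possibly empty) suffix''), and the rest of the argument only uses the correct suffix statement, so nothing downstream is affected.

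The extra bookkeeping you flag (that a leaf of $T$ lying in $\des{T_a}{a}$ is indeed a leaf of $T_a$, and that a leaf of $T_a$ below $u_s$ is covered exactly when $Q_j$ is retained) is real but resolvable exactly as you sketch: $T_a$'s descendants of $a$ are precisely $\des{T[V_{t(a)}]}{a}$, and one can show that any leaf of $T[V_{t(a)}]$ below $a$ must in fact be a leaf of $T$ (otherwise one of its $T$-children was covered in an earlier iteration, but then it itself would have been covered with that child, contradicting membership in $V_{t(a)}$). The paper does not spell this out in the proof of the lemma --- it is implicit in the earlier remark that every leaf of $T$ is covered in exactly one set --- so your instinct to be careful here is sound, even though the level of detail goes beyond what the paper provides.
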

\comment{
\begin{proof}
	In \cref{alg:nextfit} we iterate over the children of $a$ in $T_a$ and for each such a child $u_s$ we add $u_s$ and all its descendants in $T_a$ to one of the subsets $Q_m$. Thus, if $Q_m$ is one of the subsets returned by Procedure NextFit (\cref{alg:nextfit}) when it is called for $a$, then $u_s$ and all its descendants in $T_a$ are anchored at $a$, 
    otherwise $u_s$ and all its descendants are leftover vertices of $a$. \qed
\end{proof}
}

\begin{lemma}\label{lem:at_most_one_anchor}
	For any two anchors $a,a'\in A$, the sets of vertices anchored at $a$ and $a'$ are disjoint.
\end{lemma}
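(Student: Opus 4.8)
I would split the argument according to whether $a$ and $a'$ are created in the same iteration or in different ones. Two facts from the definitions are used repeatedly: if $v$ is anchored at an anchor $b\in A_{t(b)}$ then $v\in\des{T_b}{b}$, so $v$ is a \emph{strict} descendant of $b$ in $T$ and $v\in V_{t(b)}$; and moreover $v$ is covered in iteration $t(b)$ (being an ancestor of a leaf covered then), hence $v\in U_{t(b)}$. I would also use that each $V_t$ is ancestor-closed, so $h$ is the same whether computed in $T$ or in $T[V_t]$, and that the $V_t$'s are nested: $V_1\supseteq V_2\supseteq\cdots$.

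\textbf{Same iteration: $t(a)=t(a')=t$.} The key claim is that two distinct anchors of $A_t$ are incomparable in $T[V_t]$ (neither is an ancestor of the other). Suppose, for contradiction, that $a$ is a strict ancestor of $a'$, and let $u$ be the child of $a$ in $T[V_t]$ on the path from $a$ to $a'$. Since $a\in A_t$, condition (ii) gives $u\in X_t$, i.e. $s(\des{T[V_t]}{u})\le k-h(u)$; since $a'\in A_t$, condition (i) gives $s(\des{T[V_t]}{a'})>k-h(a')$. Let $P$ be the set of vertices on the path from $u$ to $a'$ in $T[V_t]$, inclusive of both endpoints. Then $P$ and $\des{T[V_t]}{a'}$ are disjoint and $P\cup\des{T[V_t]}{a'}\subseteq\{u\}\cup\des{T[V_t]}{u}$, so $s(u)+s(\des{T[V_t]}{u})\ge s(P)+s(\des{T[V_t]}{a'})$. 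Because $V_t$ is ancestor-closed, $\anc{T}{a'}=(\anc{T}{u}\setminus\{u\})\cup P$ as a disjoint union, so $h(a')=h(u)-s(u)+s(P)$. Plugging $s(P)=h(a')-h(u)+s(u)$ into the previous inequality and using $s(\des{T[V_t]}{u})\le k-h(u)$ yields $s(\des{T[V_t]}{a'})\le k-h(a')$, contradicting condition (i) for $a'$. Hence $a$ and $a'$ are incomparable in $T$, so no vertex is a strict descendant of both, and therefore the sets of vertices anchored at $a$ and at $a'$ are disjoint.

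\textbf{Different iterations: $t(a)<t(a')$.} Here I would show that any vertex $v$ anchored at $a$ has already disappeared by iteration $t(a')$. Since $v$ is anchored at $a$, \Cref{lem:leftover} says every vertex of the subtree of $T_a$ rooted at $v$ is anchored at $a$; this subtree is exactly $\{v\}\cup\des{T[V_{t(a)}]}{v}$, and all its vertices therefore lie in $U_{t(a)}$. Since $V_{t(a)+1}=\anc{T[V_{t(a)}]}{V_{t(a)}\setminus U_{t(a)}}$, a vertex belongs to $V_{t(a)+1}$ only if it is an ancestor in $T[V_{t(a)}]$ of some vertex of $V_{t(a)}\setminus U_{t(a)}$; but everything $v$ is an ancestor of lies in $\{v\}\cup\des{T[V_{t(a)}]}{v}\subseteq U_{t(a)}$, so $v\notin V_{t(a)+1}$, and by nestedness $v\notin V_{t'}$ for all $t'>t(a)$. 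On the other hand, any vertex anchored at $a'$ lies in $\des{T_{a'}}{a'}\subseteq V_{t(a')}$, and $t(a')>t(a)$; so $v$ cannot be anchored at $a'$, and the two anchored sets are disjoint.

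\textbf{Main obstacle.} The delicate part is the same-iteration case: turning the ``descendant-size'' anchor conditions (i) and (ii) into a contradiction requires the careful accounting for the path $P$ between $u$ and $a'$ together with the additivity of $h$ along root-to-vertex paths (which itself relies on $V_t$ being ancestor-closed). The different-iteration case is then just bookkeeping with \Cref{lem:leftover} and the definitions of $U_t$ and $V_{t+1}$.
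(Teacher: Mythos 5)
Your proof follows the same two-case structure as the paper's (same vs.\ different iteration) and invokes \cref{lem:leftover} for the different-iterations case in essentially the same way. The one real difference is that in the same-iteration case you derive the incomparability of $a$ and $a'$ in $T$ from the anchor conditions (i) and (ii), whereas the paper merely asserts this fact as a ``Note''; your derivation is correct and usefully fills that small gap.
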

\comment{
\begin{proof}
	To prove the lemma it suffices to show that a vertex cannot be anchored at more than one anchor.
	Fix an iteration $t$. Consider two anchors $a$ and $a'$ added in iteration $t$, namely, $t(a)=t(a')=t$. Note that $a$ is neither a descendant nor an ancestor of $a'$, and thus the set of vertices anchored at $a$, which is contained in $\des{T[V_{t}]}{a}$ is disjoint from set of vertices anchored at $a'$, which is contained in $\des{T[V_{t}]}{a'}$.
	To complete the proof, we note that for any two anchors $a$ and $a'$ such that $t(a)<t(a')$, the set of vertices that are anchored at $a$ are disjoint from the set of vertices that are anchored at $a'$. This follows since by \cref{lem:leftover} all the leaves of $T$ that are descendants of the vertices anchored at $a$ are covered at some iteration $t\le t(a)$ and thus the vertices anchored at $a$ are not in $V_{t(a)+1}$. \qed
\end{proof}
}

Define a ``parent-child'' relation among anchors as follows. For two anchors $a$ and $b$, we say that $a$ is the \textit{anchor-parent} of $b$ and $b$ is the \textit{anchor-child} of $a$ if (i) $a$ is an ancestor of $b$ in $T$, and (ii) the path from $a$ to $b$ (in $T$) does not contain any anchors other than $a$ and $b$. Note that if anchor $a$ is an anchor-parent of $b$ then $t(a)>t(b)$; that is, the iteration $t(a)$ in which $a$ is added to the set of anchors is after iteration $t(b)$. This follows from the definition of $A_t$.
For anchor $a\in A$, let $\AC(a)\ss \des{T}{a}\cap A$ be the set of anchor-children of $a$. We extend this definition for all $v\in V$, and let $\AC(v)\ss \des{T}{v}\cap A$ be all the anchors $b\in \des{T}{v}\cap A$ such that the path from $v$ to $b$ (in $T$) does not contain any anchors other than $b$ and (possibly) $v$. For $v\in V$, let
$\AD(v) = \des{T}{v}\cap A$ be the set of anchors that are also descendants of $v$.
A \textit{top} anchor is an anchor that is not an anchor-child of any other anchor. Let $\topA\sse A$ denote the set of top anchors. Note that if the root $r$ is an anchor then $\topA=\lrc{r}$.

\begin{lemma} \label{lem:alg_ub}
	The number of subsets in the solution computed by \Cref{alg:anchor} is upper bounded by
	\[
	\alpha+\sum_{a\in A} 2\floor*{\frac{\na(a)}{k-h(a)+1}},
	\]
	where
	\[
	\alpha =
	\begin{cases}
		1 & \exists a\in\topA \text{ s.t. } \lo(a)>0 \\
		1 & \exists \text{ leaf } \ell \in V \text{ s.t. } \anc{T}{\ell}\cap \topA = \emptyset\\
		0 & \text{otherwise}
	\end{cases}
	\]
\end{lemma}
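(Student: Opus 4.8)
The plan is to charge every subset in the computed cover $\calC$ to a unique anchor (or to the ``global'' term $\alpha$), and then bound, for each anchor $a$, the number of subsets charged to it by $2\floor*{\na(a)/(k-h(a)+1)}$. First I would observe that the subsets in $\calC$ come from two sources: (i) the calls to \textsc{NextFit}$(a,T_a,k)$ over all anchors $a\in A$, and (ii) the single final set $V_t$ added in the last line of \cref{alg:anchor} (when $V_t\neq\emptyset$). The set from source (ii) I would charge directly to $\alpha$; the point is to show that whenever this last set is nonempty, one of the first two cases defining $\alpha$ holds, so $\alpha=1$ and covers it. Indeed, if the algorithm reaches the last line with $V_t\neq\emptyset$, the remaining uncovered vertices together with their ancestors form a subtree of size $\le k$ rooted at $r$; every leaf $\ell$ in this residual subtree has $\anc{T}{\ell}$ disjoint from all top anchors (no anchor of $\ell$ was ever processed as a top anchor, since $\ell$ was never covered), giving the second case. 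One must also handle the symmetric bookkeeping: a top anchor $a$ with $\lo(a)>0$ leaves leftover vertices that must be covered later — by \cref{lem:leftover} and \cref{lem:at_most_one_anchor} these are covered by a subsequent anchor's \textsc{NextFit} call or by the final set, and the first case of $\alpha$ absorbs the ``+1'' slack that can arise from this once.

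Next I would handle source (i): fix an anchor $a$ and bound the number $m$ of subsets returned by \textsc{NextFit}$(a,T_a,k)$. By construction each $Q_j$ contains $\anc{T_a}{a}$, which has total size exactly $h(a)$, so each $Q_j$ has at most $k-h(a)$ room for descendants of $a$; moreover, by the Next-Fit invariant, consecutive pairs $Q_{2i-1},Q_{2i}$ have combined ``descendant content'' strictly more than $k-h(a)$ (otherwise Next-Fit would not have opened $Q_{2i}$), i.e.\ at least $k-h(a)+1$ in the integer setting. The vertices placed into $Q_1,\dots,Q_m$ are precisely the vertices anchored at $a$ (by \cref{lem:leftover}, a child-subtree of $a$ is entirely anchored or entirely leftover, and \textsc{NextFit} packs exactly the anchored child-subtrees), so the total descendant content across $Q_1,\dots,Q_m$ equals $\na(a)$. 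Since \textsc{NextFit} returns an even number $m$ of subsets after the odd-discard step, pairing them up yields $m/2$ pairs each with descendant content $\ge k-h(a)+1$, hence $\frac{m}{2}\cdot(k-h(a)+1)\le \na(a)$, i.e.\ $m\le 2\floor*{\na(a)/(k-h(a)+1)}$ — this is exactly the per-anchor bound. Here the even/odd trick is essential: it is what converts a pairing argument (which only controls $\lfloor m/2\rfloor$) into a clean bound on $m$ with no additive $+1$.

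Summing the per-anchor bounds over all $a\in A$ and adding the at-most-one extra set charged to $\alpha$ gives the claimed inequality. The main obstacle I anticipate is the careful accounting around $\alpha$: one must verify that the leftover vertices discarded in the odd-step of \textsc{NextFit}, the leftover vertices $\lo(a)$ of non-top anchors, and the residual subtree at termination do not each independently cost a ``+1'', but rather that all such slack is concentrated into at most a single extra set, matching the three-case definition of $\alpha$. Concretely I would argue: leftover vertices of a non-top anchor $a$ lie in the subtree $T[V_{t(a)+1}]$ and get re-covered by $a$'s anchor-parent's subtree in a later iteration, so they incur no additional cost beyond that later anchor's own budget; the only vertices that can ``fall through'' entirely to the final set $V_t$ are those whose leaf-descendants were never covered, which forces the second $\alpha$-case — and once $\alpha=1$ from either of the first two cases, the single final set is paid for. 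Ruling out a second simultaneous source of slack requires checking that a top anchor with positive leftover and the existence of an uncovered-leaf residual cannot both force distinct extra sets; since the residual set $V_t$ is a single set regardless, $\alpha=1$ suffices in all these cases, and $\alpha=0$ precisely when the algorithm terminates via $V_t=U_t$ with all top anchors having $\lo(a)=0$ and every leaf having an ancestor among the top anchors.
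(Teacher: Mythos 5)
Your overall strategy — charging each NextFit call to its anchor, using the Next-Fit pairing invariant plus the even-discard step to get the factor $2\floor*{\na(a)/(k-h(a)+1)}$ per anchor, and charging the single residual set to $\alpha$ — is exactly the paper's argument, and your per-anchor pairing computation is correct and cleanly stated. However, your reasoning for why $\alpha$ covers the residual set has a slip. You claim that if the algorithm reaches the last line with $V_t\neq\emptyset$, then \emph{every} leaf $\ell$ in the residual has $\anc{T}{\ell}\cap\topA=\emptyset$, arguing parenthetically that ``no anchor of $\ell$ was ever processed as a top anchor, since $\ell$ was never covered.'' This implication is false: a top anchor $a$ on the ancestor path of $\ell$ can very well have been processed, with $\ell$ ending up in the subset discarded by the odd-parity step of \textsc{NextFit}, i.e.\ a leftover of $a$. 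In that case $\anc{T}{\ell}\cap\topA=\{a\}\neq\emptyset$ and the second $\alpha$-case fails; it is precisely the first case ($\lo(a)>0$ for a top anchor) that is needed. You do gesture at this afterward (``one must also handle the symmetric bookkeeping''), so the final case analysis you state at the end is right, but the intermediate claim that case~2 always applies is wrong and should be removed. The correct dichotomy, as in the paper, is: a residual exists iff some leaf $\ell$ is not anchored anywhere; then either $\ell$ has no anchor ancestor at all (your case~2), or else the highest anchor ancestor of $\ell$ is a top anchor $a$ with $\ell$ a leftover of $a$, so $\lo(a)>0$ (case~1). Either way $\alpha=1$, and the single residual set is paid for.
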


\begin{proof}
	Let $\calQ = Q_1,\ldots,Q_d$ be the solution computed by the algorithm.
	Fix $a\in A$, and let $\calQ_a$ be the subsets in $\calQ$ that were returned by Procedure NextFit (\cref{alg:nextfit}) when it computed a feasible cover of the vertices in $\des{T_a}{a}$. Note that the union of all the subsets in $\calQ_a$ is the set of  vertices anchored at $a$ (whose total size is $\na(a)$) together with all the ancestors of $a$. Also, $\calQ_a$ consists of at least two subsets, and a vertex anchored at $a$ cannot appear in more than one subset in $\calQ_a$. Consider the subsets in $\calQ_a$ in the order in which they were computed by \cref{alg:nextfit}. It follows from Procedure NextFit that the total size of the vertices in any pair of consecutive subsets in this ordered list is at least $k-h(a)+1$. Since the total size of vertices anchored at $a$ is $\na(a)$, the number of such disjoint pairs is upper bounded by $\floor*{\frac{\na(a)}{k-h(a)+1}}$. By Line~\ref{line:odd} of Procedure NextFit, the number of sets in $Q_a$ is even, and thus the total number of subsets in $\calQ_a$ is upper bounded by $2\floor*{\frac{\na(a)}{k-h(a)+1}}$. The total upper bound is given by summing this bound over all anchors $a\in A$. We may have one additional subset if the algorithm is terminated when $|V_t|>0$. By our construction and \cref{lem:leftover}, this may happen {\em iff} there exists a leaf $\ell$ that is not anchored at any anchor. If such a leaf $\ell$ exists then one of the following two conditions must be satisfied: (i) $\ell$ has no ancestor that is an anchor, or  (ii) $\ell$ is a leftover vertex of the (unique) top anchor $a$ that is an ancestor of $\ell$, in which case $\lo(a)>0$. The lemma follows. \qed
\end{proof}

We now prove a lower bound on the number of subsets in any feasible solution and in particular in the optimal solution.
Let $\calP=P_1,\dots,P_p$ be a feasible solution. Since every subset in $\calP$ is closed under ancestor relation, some vertices may appear in multiple subsets. We refer to each such appearance of a vertex $v$ as an {\em occurrence} of $v$, and associate the size $s(v)$ to each of its occurrences. 
For an anchor $a\in A$, let $\calP(a)$ be the set of all the subsets in $\calP$ that contain vertices anchored at either $a$ or a descendant of $a$. For an anchor $a\in A$, let $\calU(a)\sse \des{T}{a}$ be the set of all vertices such that each vertex is both a descendant of $a$ and an ancestor of a vertex anchored at either $a$ or a descendant of $a$.

\begin{lemma} \label{lem:LB}
	For every $a\in A$,
	the number of subsets in $\calP(a)$ is at least
	\[
	\LB(a) = \sum_{b\in \AD(a)\cup \lrc{a}}
	\floor*{\frac{\na(b)}{k-h(b)}}.
	\]
	If the lower bound is tight then all the leaves that are in the subsets in $\calP(a)$  must be anchored at either $a$ or a descendant of $a$.
\end{lemma}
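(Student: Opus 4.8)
The plan is to prove the lower bound by a careful accounting of vertex \emph{occurrences} across the subsets of $\calP$ that touch the subtree of $a$. The heart of the argument is the following: for an anchor $b$, consider the vertices anchored at $b$ together with the ancestors of $b$. Every subset $P \in \calP$ that contains any vertex anchored at $b$ must — because each $P$ is closed under the ancestor relation — also contain all of $\anc{T}{b}$, whose total size is $h(b)$. Hence such a subset can devote at most $k - h(b)$ of its capacity to vertices anchored at $b$. The total size of vertices anchored at $b$ is $\na(b)$, so the number of subsets of $\calP$ containing a vertex anchored at $b$ is at least $\na(b)/(k-h(b))$, and since this count is an integer, at least $\lceil \na(b)/(k-h(b))\rceil \ge \floor*{\na(b)/(k-h(b))}$. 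The subtlety — and the reason the statement is phrased with $\AD(a)\cup\{a\}$ rather than just $a$ — is that I must add these bounds up over all descendant-anchors $b$ of $a$ (and $a$ itself), and to do that additively I need the collections of subsets counted for distinct anchors to be \emph{disjoint} (or at least: a subset shared by two anchors $b, b'$ must be double-counted carefully). This is where I would build the bookkeeping.

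First I would set up the formal framework: for $b \in \AD(a) \cup \{a\}$, let $\calP_b \subseteq \calP(a)$ be the set of subsets of $\calP$ that contain at least one vertex anchored at $b$; I want to show $\sum_{b} \floor*{\na(b)/(k-h(b))}$ distinct subsets can be exhibited inside $\calP(a)$. The key structural fact I would invoke is Lemma~\ref{lem:at_most_one_anchor}: the sets of vertices anchored at distinct anchors are disjoint. However, a single subset $P \in \calP$ could still contain vertices anchored at several different anchors simultaneously (it is not a configuration of the algorithm, just an arbitrary feasible configuration). To get around this, I would process the anchors $b \in \AD(a)\cup\{a\}$ in order of \emph{decreasing depth} (deepest anchors, e.g. those with no anchor-children, first), maintaining a set of "already-used" subsets, and argue that when I reach $b$, at least $\floor*{\na(b)/(k-h(b))}$ subsets containing a vertex anchored at $b$ are still "fresh." The reason a subset previously charged to a deeper anchor $b'$ (a descendant of $b$) need not be re-charged here: if $P$ contains a vertex anchored at $b'$ with $b'$ a strict descendant of $b$, then $P$ contains $\anc{T}{b'} \supseteq \anc{T}{b}$, so $P$ already "pays" for $h(b)$; but the capacity argument for $b$ only needs subsets that carry anchored-at-$b$ weight, and I must verify those have not all been consumed. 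The cleanest route is an LP/flow-style counting: total anchored-at-$b$ weight summed over fresh subsets must still be $\na(b)$ minus the weight carried by non-fresh subsets, and a non-fresh subset carrying anchored-at-$b$ weight $w$ also carries $h(b)$ in ancestors plus some positive weight anchored deeper, so it is "more than half full," forcing the fresh-subset count up. I expect to phrase this as: group consecutive fresh subsets (in the natural order from the construction of $S_a$, see Figure~\ref{fig:Sa}) into pairs whose combined anchored-at-$b$ size exceeds $k - h(b)$, mirroring the argument in the proof of Lemma~\ref{lem:alg_ub}.

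For the "tightness" clause: if $|\calP(a)| = \LB(a)$ exactly, then every inequality above must be an equality, so no subset in $\calP(a)$ has slack, and in particular no subset in $\calP(a)$ can contain a vertex that is neither anchored at (a descendant of) $a$ nor an ancestor needed to support such a vertex. A leaf $\ell$ appearing in a subset $P \in \calP(a)$ but not anchored at $a$ or any descendant of $a$ would mean $P$ carries "wasted" weight $s(\ell) > 0$ (recall the standing assumption $s(\ell) > 0$ for every leaf), which would make the corresponding pairing argument strict and push $|\calP(a)|$ above $\LB(a)$ — contradiction. So I would prove the tightness clause as a corollary of the equality analysis of the main bound.

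The main obstacle I anticipate is exactly the disjointness/double-counting issue: a single feasible configuration $P$ may legitimately hold vertices anchored at many different anchors on a root-to-leaf path, so the naive sum $\sum_b \na(b)/(k-h(b))$ could in principle overcount. Resolving it requires the depth-ordered processing together with a quantitative claim that a shared subset, precisely because it contains $\anc{T}{b}$ and nonzero weight anchored \emph{strictly below} $b$, contributes little anchored-at-$b$ weight, so the floor function still closes. Getting the pairing/rounding bookkeeping to come out with no additive loss (matching the clean floor expression) is the delicate part, and is presumably the "intricate calculation" the introduction warns about.
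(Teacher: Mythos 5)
Your proposal correctly identifies the starting observation (a subset touching a vertex anchored at $b$ must also contain $\anc{T}{b}$, so it can hold at most $k-h(b)$ units of anchored weight) and, importantly, the central obstacle: a single feasible subset $P\in\calP$ may contain vertices anchored at several distinct anchors, so the naive per-anchor counts cannot simply be added. You also get the tightness clause right and anticipate it follows from the equality analysis together with $s(\ell)>0$. However, the mechanism you propose to overcome the obstacle — process anchors by decreasing depth, maintain a set of \emph{fresh} subsets, and argue that a previously charged subset being ``more than half full'' forces enough fresh subsets to remain — is not worked out and I do not see how it closes. If a subset counted for a descendant anchor $b'$ also carries anchored-at-$b$ weight $W$, then the fresh subsets need only absorb $\na(b)-W$ of anchored-at-$b$ weight, so the fresh count drops to roughly $\floor*{(\na(b)-W)/(k-h(b))}$, which is strictly less than $\floor*{\na(b)/(k-h(b))}$ once $W\ge k-h(b)$; the ``more than half full'' remark does not recover this deficit. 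A direct disjoint assignment of subsets to anchors does not exist in general, and that is exactly why the additive statement is delicate.

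The paper takes a genuinely different route that sidesteps disjointness entirely. Rather than exhibiting $\LB(a)$ distinct subsets, it proves the stronger claim that the total \emph{size of occurrences} (counted with multiplicity across subsets of $\calP(a)$) of vertices in $\calU(a)$ is at least $(k-h(a))\,\LB(a)$; dividing by the per-subset capacity $k-h(a)$ then yields the subset count. With this reformulation, a subset containing vertices anchored at several anchors is an asset, not a liability: all of its occurrences contribute to the total, so no bookkeeping of ``used'' versus ``fresh'' subsets is needed. The proof proceeds by induction on anchors (base case: bottom anchors, for which $\na(a)\ge (k-h(a))\floor*{\na(a)/(k-h(a))}$), and the inductive step uses a bottom-up scan over the tree $S_a$ (Claim~\ref{claim:treeSa}) to propagate the size lower bound from the anchor-children $\AC(a)$ up to $a$ itself: each vertex $v$ of $S_a$ must occur at least $\sum_{b\in\AC(v)}\LB(b)$ times, adding $s(v)$ per occurrence, which is precisely what makes $(k-h(b))\LB(b)$ grow to $(k-h(v))\sum_{b\in\AC(v)}\LB(b)$. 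Claim~\ref{claim:anchored} then shows each vertex anchored at $a$ contributes one additional occurrence beyond what the descendants force (this needs the inductive tightness clause), yielding the extra $\na(a)$ term that makes the floor bound at $a$ close without additive loss. The lesson: the conceptual move you are missing is to change the quantity being lower-bounded from a count of subsets to a total occurrence size, which is what lets the argument be additive across anchors.
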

\begin{proof}
    To prove the lower bound of $\LB(a)$ on the {\em number} of subsets in $\calP(a)$ for every $a \in A$, we prove a slightly stronger lower bound of
    $\left(k-h(a)\right)\LB(a)$ on the {\em total size} of the occurrences of vertices in $\calU(a)$ in subsets in $\calP(a)$. Since any subset that contains a descendant of $a$ must contain also the ancestors of $a$ (including $a$), whose total size is $h(a)$, the total size of the vertices in $\calU(a)$ that can be in a single subset in $\calP(a)$ is no more than $k-h(a)$. Thus, a lower bound of $\left(k-h(a)\right)\LB(a)$ on the total size of the occurrences of vertices in $\calU(a)\sse \des{T}{a}$ in subsets in $\calP(a)$ implies a lower bound of $\LB(a)$ on the number of subsets in $\calP(a)$ (and on the number of occurrences of anchor $a$).
    
    The lower bound on the total size of the occurrences of vertices in $\calU(a)$ in subsets in $\calP(a)$ also implies that if the lower bound is tight then all the leaves that are in the subsets in $\calP(a)$ must be anchored at either $a$ or a descendant of $a$.  To see this, note that if any subset in $\calP(a)$  contains a leaf $\ell$ that is not anchored at an anchor in $\AD(a)\cup \lrc{a}$ then $\ell \notin \calU(a)$, also $s(\ell)>0$ by our assumption. It follows that the total size of the subsets in $\calP(a)$ is strictly more than $\left(k-h(a)\right)\LB(a)$. Clearly, this implies that the number of subsets in $\calP(a)$ is strictly more than $\LB(a)$.
    
	The proof is by induction starting from the \textit{bottom} anchors in $T$, which are the anchors with no anchor-children. 	
	For the induction base, consider a bottom anchor $a$. Note that in this case $\calU(a)$ is the set of all vertices anchored at $a$. The subsets in $\calP(a)$ cover all the vertices anchored at $a$; thus, the total size of the occurrences of these vertices in the subsets in $\calP(a)$ is at least $\na(a)$.
	Clearly, $\na(a) \ge \left(k-h(a)\right)\floor*{\frac{\na(a)}{k-h(a)}}=\left(k-h(a)\right)\LB(a)$.  
	
	For the inductive step, consider an anchor $a$ and assume that the lemma holds for every anchor $b\in\AC(a)$. Specifically, for every anchor $b\in\AC(a)$, the total size of the occurrences of vertices in $\calU(b)$ in subsets in $\calP(b)$ is at least $\left(k-h(b)\right)\LB(b)$. Note that $\calP(b) \sse \calP(a)$ and $\calU(b)\sse\calU(a)$.
	Let $S_a$ the subtree of $T$ rooted at $a$ given by the union of the paths from $a$ to each of its anchor-children, \textit{excluding} the anchor-children (see Figure~\ref{fig:Sa}). Note that the vertices of $S_a$ as well as the vertices in $\AC(v)$ are in $\calU(a)$.
    \begin{myclaim} \label{claim:treeSa}
		For every vertex $v$ of $S_a$, the total size of the occurrences of vertices in $\des{T}{v}\cap\calU(a)$ in subsets in $\bigcup_{b\in\AC(v)}\calP(b)$ is at least
		$\left(k-h(v)\right)\sum_{b\in \AC(v)} {\LB(b)}$.
	\end{myclaim}
	\begin{proof}[of Claim~\ref{claim:treeSa}]
		We prove the claim vertex by vertex, scanning the vertices of $S_a$ bottom-up. Consider a leaf $v$ of $S_a$. By the definition of $S_a$, its children are anchors in $\AC(v)$. By the induction hypothesis of \cref{lem:LB}, for every anchor $b\in\AC(v)$ the total size of the occurrences of vertices in $\calU(b)$ in the subsets in $\calP(b)$ is at least $\left(k-h(b)\right)\LB(b)$. 
        The total size of such occurrences that are contained in any single subset of $\calP(b)$ is at most $k-h(b)$, since any such subset must also contain the ancestors of $b$ (including $b$) whose size is $h(b)$. It follows that the {\em number} of occurrences of $b$ in these subsets in $\calP(b)$ is at least ${\LB(b)}$, and the total size of these occurrences is at least $s(b)\cdot {\LB(b)}$. Note that for any pair of anchors $b,b' \in \AC(v)$, the sets $\calU(b)$ and $\calU(b')$ are disjoint. Summing over all the anchor-children of $v$, we have that the total size of the occurrences of vertices in $\des{T}{v}\cap\calU(a)$ in the subsets in $\bigcup_{b\in\AC(v)}\calP(b)$ is at least
            $\sum_{b\in \AC(v)} \LB(b) \left( \left(k-h(b)\right) + {s(b)} \right) =   
		      \left(k-h(v)\right)\sum_{b\in \AC(v)} {\LB(b)}$.
        The last equality holds since for every $b\in \AC(v)$, $h(v)+s(b) = h(b)$.
		
		The lower bound for an internal vertex of $S_a$ is obtained similarly. Note that a child $u$ of $v$ is either an anchor or a vertex of $S_a$. If $u$ is an anchor, that is $u\in \AC(v)$, then as shown above, the total size of the occurrences of $u$ and its descendants in $\calU(u)$ in the subsets in $\calP(u)$ is
		$\left( k-h(u)+ s(u)\right){\LB(u)} =\left(k-h(v)\right) {\LB(u)}$. Suppose that $u$ is a vertex of $S_a$. Since $u$ is a child of $v$ and the vertices are scanned bottom-up the lower bound holds for $u$, and the total size of the occurrences of vertices in $\des{T}{u}\cap\calU(a)$ in the subsets in $\bigcup_{b\in\AC(u)}\calP(b)$ is
		$\left(k-h(u)\right)\sum_{b\in \AC(u)} {\LB(b)}$. The total size of such occurrences that is contained in any single subset in $\bigcup_{b\in\AC(u)}\calP(b)$ is at most $k-h(u)$, thus; the {\em number} of occurrences of $u$ in these subsets is at least $\sum_{b\in \AC(u)} {\LB(b)}$, and the total size of these occurrences is at least $s(u)\cdot \sum_{b\in \AC(u)} {\LB(b)}$. We get that the total size of the occurrences of $u$ and the vertices in $\des{T}{u}\cap\calU(a)$ in the subsets in $\bigcup_{b\in\AC(u)}\calP(b)$ is
		$\left(k-h(u)+s(u)\right)\sum_{b\in \AC(u)} {\LB(b)} =\left(k-h(v)\right) \sum_{b\in \AC(u)} {\LB(b)}$. For any pair $u,u'$ of children of $v$,  the sets $\des{T}{u}$ and $\des{T}{u'}$ are disjoint. Summing over all the children of $v$, we get that the total size of the occurrences of vertices in $\des{T}{v}\cap\calU(a)$ in the subsets in $\bigcup_{b\in\AC(v)}\calP(b)$ is at least
		$\left(k-h(v)\right)\sum_{b\in \AC(v)} {\LB(b)}$.\qed
	\end{proof}
\comment{
\begin{figure}
\begin{center}
\includegraphics[width=5cm]{sa.pdf} 
\caption{\small The subtree $S_a$ rooted at an anchor $a$, given as the set of paths from $a$ to its anchor-children.}
\label{fig:Sa}
\end{center}
\end{figure}
}
	
    Next, we consider vertices that are anchored at $a$. By the definition of $\calP(a)$, each such vertex $v$ must occur at least once in subsets in $\calP(a)$; also, $v\in\calU(a)$. Note that $v$ may be an ancestor of an anchor $b\in \AD(a)$. This may happen in case $v$ is a  vertex of $S_a$, and also in case a leftover vertex of an anchor $b\in\AD(a)$ is anchored at $a$, and $v$ is on the path from $a$ to $b$. In case $v$ is an ancestor of an anchor $b\in \AD(a)$, our induction hypothesis and Claim~\ref{claim:treeSa} already imply a lower bound on the number of its occurrences in subsets in $\calP(a)$. Specifically, in case $v\in\AD(a)$, our induction hypothesis implies a lower bound of ${\LB(v)}$ on the number of its occurrences, and in case $v\notin\AD(a)$ and $\AC(v)\ne\emptyset$, Claim~\ref{claim:treeSa} implies a lower bound of $\sum_{b\in \AC(v)} {\LB(b)}$ on the number of its occurrences. We prove that $v$ must occur at least {\em once more} in subsets in $\calP(a)$, {\em in addition} to this implied lower bound. This results in addition of $\na(a)$ to the total size of the occurrences of vertices anchored at $a$ in the subsets in $\calP(a)$.
	
	\begin{myclaim} \label{claim:anchored}
		For every vertex $v$ anchored at $a$, the number of occurrences of $v$ in the subsets in $\calP(a)$ is at least
		\[
		\begin{cases}
			1 & v \notin\anc{T}{\AD(a)} \\
			1+ {\LB(v)} & v\in\AD(a) \\
			1+\sum_{b\in \AC(v)} {\LB(b)} & \text{otherwise}
		\end{cases}
		\]
	\end{myclaim}
	\begin{proof}[of Claim~\ref{claim:anchored}]
		If $v$ is anchored at $a$ then it must be an ancestor of a leaf $\ell$ of $T$ that is anchored at $a$. Certainly, $v$ must occur in the subset in $\calP(a)$ that covers $\ell$. If $v$ is not an ancestor of an anchor $b\in \AD(a)$, we are done.
		
		If $v$ is an anchor and thus $v\in\AD(a)$, and the number of occurrences of $v$ in the subsets in $\calP(v)$ is strictly more than $\LB(v)$, then we are done. Otherwise,  the lower bound $\LB(v)$ is tight, and by the induction hypothesis of \cref{lem:LB}, all the leaves that are in the subsets in $\calP(v)$ must be anchored at $v$ or a descendant of $v$. Thus, none of these subsets can cover $\ell$. It follows that $v$ must occur in at least one more subset in $\calP(a)$ that covers $\ell$.
		
		A similar argument applies also if $v\notin\AD(a)$ and $\AC(v)\ne\emptyset$. Let $a'\in \AD(a)\cap\lrc{a}$ be the nearest ancestor of $v$ that is an anchor. By Claim~\ref{claim:treeSa}, the total size of the occurrences of vertices in $\des{T}{v}\cap\calU(a')$ in subsets in $\bigcup_{b\in\AC(v)}\calP(b)$ is at least
		$\left(k-h(v)\right)\sum_{b\in \AC(v)} {\LB(b)}$. It follows that the number of occurrences of $v$ in the subsets in $\bigcup_{b\in\AC(v)}\calP(b)$ is at least $\sum_{b\in \AC(v)} {\LB(b)}$. If $\ell$ is not in any of the subsets in $\bigcup_{b\in\AC(v)}\calP(b)$, then $v$ must occur in at least one more subset in $\calP(a)$ that covers $\ell$, and we are done. Suppose that this is not the case, and $\ell$ is in a subset in $\bigcup_{b\in\AC(v)}\calP(b)$. It is not difficult to verify that the proof of Claim~\ref{claim:treeSa} implies the lower bound on the total size of the vertices in a {\em subset} of $\des{T}{v}\cap\calU(a')$. This subset is the union of three sets: 
        $\des{T}{v}\cap\left(\bigcup_{b\in\AC(v)}\calU(b)\right)$, $\AC(v)$, and
        $\des{S_{a'}}{v}$. Clearly, $\ell$ is not in any of these three sets. Thus, the total size of the occurrences of vertices in $\des{T}{v}\cap\calU(a)$ in the subsets in $\bigcup_{b\in\AC(v)}\calP(b)\ss \calP(a)$ is strictly more than $\left(k-h(v)\right)\sum_{b\in \AC(v)} {\LB(b)}$. Hence, the number of occurrences of $v$ in the subsets in $\bigcup_{b\in\AC(v)}\calP(b)$ is strictly more than $\sum_{b\in \AC(v)} {\LB(b)}$.\qed
	\end{proof}
	
	By Claims~\ref{claim:treeSa} and~\ref{claim:anchored} and our induction hypothesis we get that the total size of the occurrences of vertices in $\des{T}{a}\cap\calU(a)$ in the subsets in $\calP(a)$ is at least
	\begin{align*}
		\na(a) &+\left(k-h(a)\right)\sum_{b\in \AC(a)} {\LB(b)}
		  \ge
		\left(k-h(a)\right)\left(\floor*{\frac{\na(a)}{k-h(a)}}+\sum_{b\in \AC(a)} {\LB(b)}\right)\\
		& = \left(k-h(a)\right)\sum_{c\in \AD(a)\cup \lrc{a}} \floor*{\frac{\na(c)}{k-h(c)}} = \left(k-h(a)\right)\LB(a)
	\end{align*}
\comment{
    To complete the proof of \cref{lem:LB} we show that if a leaf $\ell$ of $T$ that is not anchored at anchor in $\AD(a)\cup \lrc{a}$ is in a subset in $\calP(a)$ then the total size of the subsets in $\calP(a)$ is strictly more than $\left(k-h(a)\right)\LB(a)$. We distinguish two cases. If $\ell$ is in $\bigcup_{b\in\AC(a)}\calP(b)$, then by our induction hypothesis there exists an anchor $b\in \AC(a)$ such that the total size of the subsets in $\calP(b)$ is strictly more than $\left(k-h(b)\right)\LB(b)$. By the analysis above the total size of the subsets in $\calP(a)$ is at least $\na(a)$ plus the total size of the occurrences of all vertices in $\bigcup_{b\in\AC(a)}\calP(b)$. We get that the total size of the subsets in $\calP(a)$ is strictly more than $\left(k-h(a)\right)\LB(a)$. If $\ell$ is in $\calP(a) \sm \bigcup_{b\in\AC(a)}\calP(b)$, then the analysis above implies that the total size of the subsets in $\calP(a)$ is at least $s(\ell)+\na(a)+\left(k-h(a)\right)\sum_{b\in \AC(a)} {\LB(b)}$, and thus it is strictly more than $\left(k-h(a)\right)\LB(a)$.
}
The first equality holds since ${\LB(b)}=\sum_{c\in \AD(b)\cup \lrc{b}} \floor*{\frac{\na(c)}{k-h(c)}}$. \qed
\end{proof}

\begin{corollary} \label{cor:optlb}
	The number of subsets in any feasible solution is at least
	\[
	\alpha+\sum_{a\in A} \floor*{\frac{\na(a)}{k-h(a)}},
	\]
	where $\alpha$ is defined in \cref{lem:alg_ub}.
 \comment{
	\[
	\alpha =
	\begin{cases}
		1 & \exists a\in\topA \text{ s.t. } \lo(a)>0 \\
		1 & \exists \text{ leaf } \ell \in V \text{ s.t. } \anc{T}{\ell}\cap \topA = \emptyset\\
		0 & \text{otherwise}
	\end{cases}
	\]
 }
\end{corollary}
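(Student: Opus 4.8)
The plan is to apply the estimate of \cref{lem:LB} \emph{one level higher}, treating the whole tree $T$ as the subtree hanging off a single ``virtual'' anchor. Introduce a virtual vertex $r^\ast$ as the parent of $r$ with $s(r^\ast)=0$ (so $h(r^\ast)=0$), and regard $r^\ast$ as an anchor whose anchor-children are exactly the top anchors $\topA$ and for which we declare $\na(r^\ast)=0$. First note that two distinct top anchors are incomparable in $T$: if one were an ancestor of the other, the anchor lying strictly between them on their $T$-path would exhibit the deeper one as an anchor-child, contradicting its being a top anchor. Hence the subtrees $\des{T}{a}$ ($a\in\topA$) are pairwise disjoint, and the sets $\AD(a)\cup\lrc{a}$ ($a\in\topA$) partition $A$; consequently $\AD(r^\ast)=A$ and $\LB(r^\ast)=\floor*{\frac{\na(r^\ast)}{k-h(r^\ast)}}+\sum_{b\in A}\floor*{\frac{\na(b)}{k-h(b)}}=\sum_{a\in A}\floor*{\frac{\na(a)}{k-h(a)}}=:N$.

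Now fix any feasible solution $\calP$, and let $\calP^\ast$ be the subfamily of $\calP$ consisting of the subsets that contain a vertex anchored at some anchor; this is exactly ``$\calP(r^\ast)$'' in the language of \cref{lem:LB}. I would then rerun the inductive argument behind \cref{lem:LB} --- in particular \cref{claim:treeSa}, scanning $S_{r^\ast}$ (the union of the paths from $r^\ast$ to the top anchors, with the top anchors excluded) bottom-up --- now with $r^\ast$ in the role of the anchor. \cref{claim:anchored} contributes nothing at $r^\ast$ since no vertex is anchored there, consistent with $\na(r^\ast)=0$; and when summing the per-branch contributions at a vertex of $S_{r^\ast}$ one only ever adds sizes of occurrences of pairwise disjoint sets of vertices (the $\calU(a)$'s being disjoint), so the fact that the families $\calP(a)$ for different top anchors may share subsets does no harm. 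The computation then gives that the total size of the occurrences, in subsets of $\calP^\ast$, of the vertices that are ancestors of some anchored vertex is at least $(k-h(r^\ast))\LB(r^\ast)=kN$; since every subset of $\calP$ has total size at most $k$, this forces $\abs{\calP^\ast}\ge N$. As in the tightness part of \cref{lem:LB}, the same computation shows that if $\abs{\calP^\ast}=N$ then every leaf occurring in a subset of $\calP^\ast$ is anchored at some anchor.

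It remains to absorb $\alpha$. If $\alpha=0$ we are done, since $\abs{\calP}\ge\abs{\calP^\ast}\ge N$. If $\alpha=1$, then --- as in the proof of \cref{lem:alg_ub}, where both defining cases of $\alpha=1$ reduce to the existence of a leaf not anchored at any anchor --- fix such a leaf $\ell$ and a subset $P_{j_0}\in\calP$ covering it. If $P_{j_0}\notin\calP^\ast$ then $\abs{\calP}\ge\abs{\calP^\ast}+1\ge N+1$. Otherwise $P_{j_0}\in\calP^\ast$; since $\ell$, being an un-anchored leaf, is an ancestor of no anchored vertex, it is not among the vertices counted above, while $s(\ell)>0$ by assumption, so the counted content of $P_{j_0}$ is at most $k-s(\ell)\le k-1$. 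Combined with the lower bound $kN$ on the total counted content of $\calP^\ast$, this yields $k\abs{\calP^\ast}-1\ge kN$, hence $\abs{\calP^\ast}\ge N+1$. In either case $\abs{\calP}\ge N+\alpha$, which is the claimed bound.

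The main obstacle is the middle step: one must check that the inductive machinery of \cref{lem:LB} genuinely survives the passage to the virtual root --- that the ``leaves'' of $S_{r^\ast}$ are honest top anchors to which the inductive hypothesis of \cref{lem:LB} applies, that the vanishing of the anchored-vertex contribution at $r^\ast$ is consistent with $\na(r^\ast)=0$, and, most importantly, that summing contributions across the branches hanging off a common vertex of $S_{r^\ast}$ is legitimate despite the overlap of the families $\calP(a)$ (it is, precisely because only sizes of occurrences of pairwise disjoint vertex sets are ever summed). A lesser point to pin down is the reduction, already implicit in the proof of \cref{lem:alg_ub}, of both cases of $\alpha=1$ to an un-anchored leaf.
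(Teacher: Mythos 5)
Your proposal is correct and follows essentially the same line as the paper's proof: the paper also pushes the machinery of \cref{lem:LB} and \cref{claim:treeSa} ``one level higher'' (either applying \cref{lem:LB} directly at $r$ when $\topA=\{r\}$, or reusing \cref{claim:treeSa} at $r$ to sum over $\AC(r)=\topA$ when $\topA\neq\{r\}$), and then observes that tightness of the resulting bound forces every covered leaf to be anchored, so that an un-anchored leaf costs one extra subset. Your virtual vertex $r^\ast$ with $s(r^\ast)=0$, $\na(r^\ast)=0$ is a tidy device that merges the paper's two cases and makes the $\alpha$-absorption argument a clean counting step, but it is a repackaging of the same induction rather than a genuinely different route.
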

\begin{proof}
    If $\topA =\lrc{r}$ then by \cref{lem:LB} the number of occurrences of $r$ is at least $\LB(r)=\sum_{a\in A} \floor*{\frac{\na(a)}{k-h(a)}}$. If this lower bound is tight then all the leaves in subsets in $\calP(r)$ are anchored at some vertex. If $\lo(r)>0$, then there is a leaf of $T$ that is a leftover vertex of $r$ and thus not anchored at any vertex. In this case, at least one additional subset is needed to cover this leaf. If $\topA \ne\lrc{r}$ then $\AC(r)=\topA$. In this case, following the proof of Claim~\ref{claim:treeSa}, we get that the total size of the occurrences of the descendants of $r$ in subsets in $\calP$ is at least $\left(k-s(r)\right)\sum_{a\in \topA} \LB(a)$. This implies that $r$ occurs in at least $\sum_{a\in \topA} \LB(a) = \sum_{a\in A} \floor*{\frac{\na(a)}{k-h(a)}}$ subsets in $\calP$. If the bound is tight then all the leaves in these subsets are anchored vertices. Thus, if there exists a vertex that is not anchored at any vertex, at least one additional subset is needed. This occurs when either $\exists a\in\topA \text{ s.t. } \lo(a)>0$, or $\exists \text{ leaf } \ell \in V$ s.t. none of the ancestors of $\ell$ is an anchor. \qed
\end{proof}

Corollary~\ref{cor:optlb} and \cref{lem:alg_ub} imply a factor 2 approximation.

\mysection{Open problems}
\label{sec:open_probs}
An intriguing open problem is to bridge the gap between our $2$-approximation and $1.5$-inapproximability result for {\sc ct}. 
Recall that {\sc ct} is the special case of {\CPO} on out-trees.  
While we expect {\CPO} to be hard to approximate on general graphs (as
mentioned above), exploring further the hardness of {\CPO} on various graph classes 
remains open.

Another appealing line of research is to investigate the connections between {\sc cpo} and a natural covering variant of the {\DKSH} problem defined as follows.
Given a hypergraph $G=(V, E)$ and an integer $k$, find the minimum number of vertex sets,
each of cardinality at most $k$, such that every hyperedge is fully contained in one of the sets. We are not aware of earlir studies of this problem, even in the special case where $G$ is a graph. 
One interesting direction is to derive nontrivial hardness results for this problem and 
show possible implications for {\CPO}.

\bibliographystyle{splncs04}
\bibliography{bibfile}

\appendix

\section{Motivation for {\sc rcp}} 
\label{sec:motivation}



A prime motivation for studying {\sc rcp} comes from the area of networking \cite{dong2015rule,yan2014cab,sheu2016wildcard,huang2015cost,li2019tale,stonebraker1990rules,li2015fdrc,gao2021ovs,cheng2018switch,rottenstreich2016optimal,li2020taming,gamage2012high,yan2018adaptive,rottenstreich2020cooperative,rastegar2020rule,yang2020pipecache}. In a {\em Software-Defined Network} (SDN) traffic flow is governed by a logically centralized controller that utilizes packet-processing {\em rules} to manage the underlying switches \cite{katta2016cacheflow}. The number of rules tends to be high while most traffic relies on a small fraction of these rules~\cite{sarrar2012leveraging}. Thus, caching frequently used rules can accelerate the processing time of the packets. 
However, standard caching policies cannot be used due to {\em dependencies} among rules.
One common form of dependency is a partial overlap in the binary strings representing the rules. For example, consider the rules $R_1$=`10**' (where the symbol `*' denotes a wildcard) and $R_2$=`1000'. Then whenever $R_1$ is placed in the cache, $R_2$ must be placed as well. Indeed, if only $R_1$ is in the cache then a message with a header `1000' would be matched with $R_1$, causing a correctness issue in handling this packet. 
Now, the problem of placing a feasible subset of the rules which handle a maximum total volume of traffic can modeled as follows.
We represent the rules by a DAG $G=(V,E)$, where $v_i \in V$ corresponds to  the rule $R_i$, and there is a directed edge from $v_i$ to $v_j$ if placing $R_j$ in the cache implies that $R_i$ is also in the cache. 
The profit of each vertex $v_i \in V$  reflects the volume of traffic handled by the rule $R_i$. The goal is to select a subset of vertices of maximum total profit which fits into the cache, that is closed under precedence constraints.


{\RCP} can be used also to model the {\em maximal extractable value} (MEV) problem in {\em blockchain}~\cite{mcmenamin2022fairtradex,obadia2021unity,weintraub2022flash,ARL19}. 
Each blockchain transaction is associated with a fee earned by the miner who creates the block containing this transaction. The set of transactions is associated with a partial order, and each blockchain prefix has to be closed under precedence constraints.
MEV is the maximum potential profit that a blockchain miner can gain from transactions that have not been validated.
Computing MEV can be cast as an {\RCP} instance where the vertices of the graph are the transactions, the edges represent the precedence constraints, the profits are the associated fees, and the bound $k$ is the number of transactions that fit in a single block. Other applications of {\RCP} variants arise, e.g., in the mining industry~\cite{moreno2010large,samavati2017methodology} and in scheduling~\cite{papazachos2010gang,woeginger2001approximability,efsandiari2015approximate,ibarra1978approximation,lenstra1977complexity}.

\section{Hardness Result for CT}
\label{sec:hard}




Our hardness result for {\CT} is based on a reduction from {\em bin packing with cluster complement conflict graph} ({\BPCC}). An undirected graph $G = (V,E)$ is called a {\em cluster complement} if there is a partition $V_1,\ldots, V_m$ of $V$ such that for all $i \in [m]$ it holds that $V_i$ is an independent set in $G$ and for all $i,j \in [m]$ where $i \neq j$ and any $v \in V_i$ and $u \in V_j$ it holds that $\{u,v\} \in E$. We now formally define the {\BPCC} problem. 

\begin{definition}
	\label{def:BPCC}
	The {\em bin packing with cluster complement conflict graph ({\BPCC})} is defined as follows.
	\newline	\noindent {\bf Input:} A cluster complement $G = (V,E)$, a weight function $w:V \rightarrow \mathbb{Z}^{+}_{0}$, and a value $k \in \mathbb{N}$. 
	\newline	\noindent {\bf Configuration:} An independent set $C \subseteq V$ in $G$ such that $w(C) \leq k$.
	\newline \noindent {\bf Solution:} For some $m \in \mathbb{N}$, we say that $\left(C_1,\ldots, C_q\right)$ is a {\em solution with cardinality $q$} if the following holds.
	\begin{itemize}
		\item For every $i \in [q]$ it holds that $C_i$ is a configuration.
		\item For all $v \in V$ there is $i \in [q]$ such that $v \in C_i$.
	\end{itemize}
	\noindent {\bf Objective:} Find a solution of minimum cardinality.   
\end{definition}

\noindent\textbf{Proof of \Cref{thm:hard}:} 
We show a reduction from {\BPCC} to {\CT}. Let $I = (G = (V,E),w,k)$ be a {\BPCC} instance. Let $V_1,\ldots, V_m$ be the unique partition of $V$ into maximal independent sets, which exists and can be found in polynomial time since $G$ is cluster complement. Then, define the {\em reduced} {\CT} instance $X_I = (H = (\cV,\cE),s,K)$ as follows
\begin{itemize}
	\item The vertex set $\cV$ of $X_I$ contains a root $r$ and a vertex $r_i$ for every $i \in [m]$, where $(r,r_i) \in \cE$ ($r_i$ is a child of $r$ for every $i \in [m]$). For every $i \in [m]$ and every $v \in V_i$ define a leaf $\ell_{v}$ and add an edge $(r_i,\ell_v) \in \cE$. Overall, we get a two-level star graph.   
	\item Define the size function $s:\cV \rightarrow \mathbb{Z}^{+}_0$ such that $s(r) = 0$, for all $i \in [m]$ define $s(r_i) = 2 \cdot k$, and for all $i \in [m]$ and $v \in V_i$ define $s(\ell_v) = w(v)$. 
	\item Define $K = 3 \cdot k$. 
\end{itemize}

For every $C \subseteq V$, let 
\begin{equation}
	\label{eq:XC}
	X(C) = \{r\} \cup \bigcup_{i \in [m]~|~C \cap V_i \neq \emptyset} \{r_i\} \cup \bigcup_{v \in C} \{\ell_v\}. 
\end{equation}

\begin{myclaim}
	\label{claim:hard}
	For every $C \subseteq V$if $C$ is a configuration of $I$ then $X(C)$ is a configuration of $X_I$. 
\end{myclaim}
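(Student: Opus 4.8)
The plan is to verify the two defining properties of a \CT{} configuration directly from the construction of $X_I$, using the hypothesis that $C$ is a \BPCC{} configuration (that is, $C$ is an independent set in $G$ with $w(C) \le k$).

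First I would check closure under the ancestor relation. In the two-level star $H$, the ancestors of a leaf $\ell_v$ with $v \in V_i$ are exactly $r_i$ and $r$, and the only ancestor of an internal vertex $r_i$ is $r$. By the definition of $X(C)$ in \eqref{eq:XC}, whenever $\ell_v \in X(C)$ we have $v \in C$, hence $C \cap V_i \neq \emptyset$, so $r_i \in X(C)$; and $r \in X(C)$ always. Thus $X(C)$ contains all ancestors of each of its vertices, so $\anc{H}{X(C)} = X(C)$ and $X(C)$ is closed under precedence constraints.

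Second I would bound the total size $s(X(C))$. Let $J = \{ i \in [m] : C \cap V_i \neq \emptyset \}$. Since $C$ is an independent set in the cluster-complement graph $G$, and any two vertices from distinct parts $V_i, V_j$ are adjacent in $G$, the set $C$ is entirely contained in a single part; hence $|J| \le 1$. Then
\[
s(X(C)) = s(r) + \sum_{i \in J} s(r_i) + \sum_{v \in C} s(\ell_v) = 0 + 2k\cdot |J| + w(C) \le 2k + k = 3k = K,
\]
using $s(r)=0$, $s(r_i)=2k$, $s(\ell_v)=w(v)$, $|J|\le 1$, and $w(C)\le k$. Therefore $s(X(C)) \le K$, and together with the closure property this shows $X(C)$ is a configuration of $X_I$.

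The only mildly subtle point — and the step I would flag as the crux — is the observation that $|J|\le 1$, i.e. that an independent set of the cluster-complement graph meets at most one part of the partition; this is what keeps the contribution of the internal vertices $r_i$ (each of the large size $2k$) down to a single term, so that the capacity bound $3k$ is respected. Everything else is bookkeeping from the definitions.
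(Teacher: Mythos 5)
Your proof is correct and follows essentially the same route as the paper's: verify closure from \eqref{eq:XC}, then bound $s(X(C))$ using the observation that an independent set in a cluster-complement graph meets at most one part $V_i$, so only one large internal vertex $r_i$ contributes. The only cosmetic difference is that you explicitly allow $|J|\in\{0,1\}$, which slightly tightens the bookkeeping in the degenerate case $C=\emptyset$ where the paper's displayed equality implicitly assumes $|J|=1$.
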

\begin{claimproof}
	Assume that $C$ is a configuration of $I$. Observe that, by \eqref{eq:XC}, $X(C)$ is closed under the precedence constraints. Moreover, 
	$$s(X(C)) = s(r)+\sum_{i \in [m]~|~C \cap V_i \neq \emptyset} s(r_i)+\sum_{v \in C} s(\ell_v) = 0+2k+w(C) \leq 3 \cdot k = K.$$
	The first equality follows from~\eqref{eq:XC}. The second equality holds since $C$ is a configuration; thus, it is an independent set in $G$, and it can contain vertices from at most one $V_i, i \in [m]$. The inequality holds since $C$ is a configuration. We conclude that $X(C)$ is a configuration of $X_I$. 
\end{claimproof}

For every $C \subseteq \cV$ let 
\begin{equation}
	\label{eq:XC'}
	I(C) = \bigcup_{\ell_v \in C~|~v \in V} \{v\}.  
\end{equation}

\begin{myclaim}
	\label{claim:hard2}
	For every $C \subseteq \cV$ if $C$ is a configuration of $X_I$ then $I(C)$ is a configuration of $I$. 
\end{myclaim}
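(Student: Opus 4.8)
The plan is to show that $I(C)$ satisfies the two defining properties of a configuration of $I$: it is an independent set in $G$, and its total weight is at most $k$. Both will follow by ``reading back'' the structure of the reduced instance $X_I$. First I would unpack the definition of $X_I$: a leaf $\ell_v$ lies below exactly one middle vertex $r_i$, namely the one with $v \in V_i$, and $s(r_i) = 2k$ while $K = 3k$. So any configuration $C$ of $X_I$, being closed under precedence, must contain $r_i$ whenever it contains some leaf $\ell_v$ with $v \in V_i$.

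The independence claim: suppose for contradiction that $I(C)$ contains two vertices $v \in V_i$ and $u \in V_j$ with $i \neq j$. Since $G$ is a cluster complement, $\{u,v\} \in E$, so $I(C)$ would not be independent. But $v \in I(C)$ forces $\ell_v \in C$, hence $r_i \in C$; likewise $\ell_u \in C$ forces $r_j \in C$. Then $s(C) \geq s(r_i) + s(r_j) = 4k > 3k = K$, contradicting that $C$ is a configuration. Therefore all vertices of $I(C)$ lie in a single $V_i$, which is an independent set in $G$, so $I(C)$ is independent.

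For the weight bound: since every vertex of $I(C)$ lies in the same $V_i$, $C$ contains $r_i$ (if $I(C) \neq \emptyset$; the empty case is trivial) and contains $\ell_v$ for each $v \in I(C)$, and these leaves are distinct. Hence
\[
K = 3k \;\geq\; s(C) \;\geq\; s(r_i) + \sum_{v \in I(C)} s(\ell_v) \;=\; 2k + w(I(C)),
\]
which gives $w(I(C)) \leq k$. Combining the two properties, $I(C)$ is a configuration of $I$.

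I do not expect any serious obstacle here; the statement is essentially a bookkeeping lemma that makes the reduction reversible at the level of single configurations, mirroring Claim~\ref{claim:hard} in the other direction. The only point requiring a little care is handling the degenerate case $I(C) = \emptyset$ (then $I(C)$ is vacuously a configuration) and making sure the argument uses only that $C$ contains \emph{at least} the forced ancestors, not an exact description of $C$ — the inequality $s(C) \geq s(r_i) + \sum_{v} s(\ell_v)$ is all that is needed, since $C$ may also contain $r$ (of size $0$) and other middle vertices. This lemma, together with Claim~\ref{claim:hard}, will let us translate solutions of $X_I$ back to solutions of $I$ of the same cardinality in the remainder of the proof of Theorem~\ref{thm:hard}.
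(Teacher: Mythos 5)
Your proof is correct and follows essentially the same two-step argument as the paper: first derive a contradiction (via $s(r_i)+s(r_j)=4k>K$) to show $I(C)$ meets only one cluster $V_i$ and hence is independent, then use $s(C)\geq s(r_i)+\sum_{v\in I(C)} s(\ell_v)$ to get $w(I(C))\leq K-2k=k$. The explicit handling of the $I(C)=\emptyset$ case is a minor added nicety but does not change the substance.
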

\begin{claimproof}
	Assume that $C$ is a configuration of $X_I$. Then, for all $i,j \in [m]$, $i \neq j$, and $v \in V_i, u \in V_j$ it holds that $\ell_{v} \notin X(C)$ or $\ell_u \notin X(C)$; otherwise, by the precedence constraints we must include both $r_i$ and $r_j$ in $C$, and we have 
	$$s(C) \geq s(r_i)+s(r_j) = 4 \cdot k>3k = K,$$
	i.e., $C$ is not a configuration. Contradiction.
	Let $i \in [m]$ be the unique index such that $V_i \cap C \neq \emptyset$ (if there is no such index the proof is trivial). Then, 
	$$w(I(C)) = \sum_{v \in I(C)} s(\ell_v) =  \sum_{v \in I(C)} s(\ell_v)+s(r_i)-s(r_i) \leq s(C)-s(r_i) \leq K-2 \cdot k = k.$$
	The last inequality holds since $C$ is a configuration of $X_I$. We conclude that $I(C)$ is a configuration of $I$ by \Cref{def:BPCC}. 
\end{claimproof}

Let $S = (S_1,\ldots, S_q)$ be the solution for the {\CT} instance $X_I$. By Claim~\ref{claim:hard2}, for all $i \in [q]$ it holds that $I(S_i)$ is a configuration of $I$; thus, $I(S) = (I(S_1),\ldots, I(S_q))$ is a solution for $I$, since it contains every $v \in V$ at least once by \eqref{eq:XC'} and the definition of $X_I$.  Conversely, let $S = (S_1,\ldots, S_q)$ be a solution for the {\BPCC} instance $I$. Then, by Claim~\ref{claim:hard} for all $i \in [q]$ it holds that $X(S_i)$ is a configuration of $X_I$. Thus, $X(S) = (X(S_1),\ldots, X(S_q))$ is a solution for $I$, since it contains every $v \in \cV$ at least once by \eqref{eq:XC}.  Thus, there is a an asymptotic $\alpha$-approximation for {\BPCC} if and only if there is a an asymptotic $\alpha$-approximation for {\CT}. Hence, the claim of the theorem follows from
Lemma G.1 in~\cite{doron2023approximating}. 
\qed

\section{A Reduction from D$k$SH to Rule Caching}
\label{sec:2}

In this section we give the first direction for the proof of Theorem~\ref{thm:EQ} and the proof of Corollary~\ref{cor:RCP}. Let $I = (G,p,k)$ denote an {\sc rcp} instance where 
$G = (V,E)$ is a directed graph, $p$ is the profit function $p:V \rightarrow \mathbb{Z}_0^+$ and $k \in \mathbb{N}$ specifies the maximal number of vertices allowed in a feasible solution. 

Our result is based on a reduction from the {\em densest $k$-subhypergraph} ({\DKSH}) problem  to {\RCP}. We start with some definitions and notations. 
A hypergraph is a pair $H = (V_H,E_H)$ where  $V_H$ is a set of vertices and $E_H \subseteq 2^{V_H}$ is a set of hyperedges, which are subsets of vertices.  For some $S \subseteq V_H$, let $H[S] = (S,E^S_H)$ be the induced subhypergraph of $S$ where the hyperedges of the subgraph are $E^S_H = \{e \in E_H~|~ e \subseteq S\}$; that is, all hyperedges contained in $S$. In the {\DKSH} problem the input is a tuple $(H,k,w)$ where $H = (V_H, E_H)$ is a hypergraph, $k$ is an integer parameter, and $w: E_H \rightarrow \mathbb{Z}_{\geq 0}$ is a weight function on the hyperedges. A feasible solution is  
a set of at most $k$ vertices. The objective is to find a feasible solution $S \subseteq V_H$ with maximum total weight of hyperedges in the subgraph induced by
this set, that is $w\left(E^S_H\right) = \sum_{e \in E^S_H} w(e)$.  

Informally, the main idea of our reduction is to represent a {\DKSH} instance $\mathcal{H}$ as an {\RCP} instance $R(\mathcal{H})$, where each vertex and hyperedge of $\mathcal{H}$ are vertices in the new instance. Furthermore, we duplicate the vertices of $\mathcal{H}$ a sufficiently large number of times (see Figure~\ref{fig:M1}). This ensures that the effect of taking the hyperedges to the solution of $R(\mathcal{H})$ as vertices of high profit, uses only a negligible portion of the cardinality bound corresponding to $R(\mathcal{H})$. This is formalized as Lemma~\ref{1}, leading to the proof of Theorem~\ref{thm:EQ}. In particular, our hardness result holds already for the special case of {\RCP} where the graph $G$ is directed-bipartite, and there are only two distinct profits for the vertices. 

For the remainder of this section, let $\mathcal{H} = (H,k,w)$ be a {\DKSH} instance where $H = (V_H,E_H)$. For the reduction from {\DKSH} to {\RCP}, we now define an {\RCP} instance based on $\mathcal{H}$. Let $m = |E_H|$ and let $U = \{v^i~|~v \in V_H, i \in [m+1]\}$ be a set of vertices, where each vertex in $V_H$ has $m+1$ distinct {\em copies}; these copies are  considered as distinct vertices in the reduced graph. Let $V = U \cup E_H$ be a set of vertices containing all copies of the vertices in $V_H$ and all {\em edge-vertices}, which are hyperedges in $E_H$. The {\em reduced graph} of $H$ is a directed graph $G(H) = (V, E)$, where

\begin{equation}
	\label{Em}
	E = \{(v^i,e)~|~e \in E_H, v \in V_H \cap e, i \in [m+1]\}.
\end{equation} 
That is, each copy of a vertex $v \in V_H$ has an outgoing edge to any edge-vertex that contains $v$ in $H$. See Figure~\ref{fig:M1} for an example of the construction of the reduced graph given a hypergraph $H$. By \eqref{Em}, we infer that the only edges in $G(H)$ are between	a vertex from $U$ to an edge-vertex from $E_H$. Thus, $G(H)$ is a DAG and a directed-bipartite with the bipartition $(U, E_H)$.

\begin{figure}
	\centering
	
	\begin{tikzpicture}
		\tikzstyle{edge} = [->, line width=1pt]
		\node (v1) at (0,2) {};
		\node (v5) at (9,5) {};
		\node (v6) at (9,4.5) {};
		\node (v7) at (9,4) {};
		
		\node (v8) at (9,3.5) {};
		\node (v9) at (9,3) {};
		\node (v10) at (9,2.5) {};
		
		\node (v11) at (9,2) {};
		\node (v12) at (9,1.5) {};
		\node (v13) at (9,1) {};
		
		\node (v14) at (9,0.5) {};
		\node (v15) at (9,0) {};
		\node (v16) at (9,-0.5) {};
		
		\node (v17) at (13,3.5) {};
		\node (v18) at (13,1.5) {};
		
		\node (v2) at (1.5,3) {};
		\node (v3) at (4,2.5) {};
		\node (v4) at (0,0.5) {};
		\draw[edge] (v5) to (v17);
		\draw[edge] (v6) to (v17);
		\draw[edge] (v7) to (v17);
		\draw[edge] (v8) to (v17);
		\draw[edge] (v9) to (v17);
		\draw[edge] (v10) to (v17);
		\draw[edge] (v11) to (v17);
		\draw[edge] (v12) to (v17);
		\draw[edge] (v13) to (v17);
		
		\draw[edge] (v11) to (v18);
		\draw[edge] (v12) to (v18);
		\draw[edge] (v13) to (v18);
		\draw[edge] (v14) to (v18);
		\draw[edge] (v15) to (v18);
		\draw[edge] (v16) to (v18);
		
		\begin{scope}[fill opacity=0.9]
			\filldraw[fill=yellow!70] ($(v1)+(-0.5,0)$) 
			to[out=90,in=180] ($(v2) + (0,0.5)$) 
			to[out=0,in=90] ($(v3) + (1,0)$)
			to[out=270,in=0] ($(v2) + (1,-0.8)$)
			to[out=180,in=270] ($(v1)+(-0.5,0)$);
			
			\filldraw[fill=green!50] ($(v4)+(-0.5,0.2)$)
			to[out=90,in=180] ($(v4)+(0,2)$)
			to[out=0,in=90] ($(v4)+(0.6,0.3)$)
			to[out=270,in=0] ($(v4)+(0,-0.6)$)
			to[out=180,in=270] ($(v4)+(-0.5,0.2)$);
			
			\comment{arg1
				\filldraw[fill=blue!70] ($(v4)+(-0.5,0.2)$)
				to[out=90,in=180] ($(v4)+(0,1)$)
				to[out=0,in=90] ($(v4)+(0.6,0.3)$)
				to[out=270,in=0] ($(v4)+(0,-0.6)$)
				to[out=180,in=270] ($(v4)+(-0.5,0.2)$);
				\filldraw[fill=green!80] ($(v5)+(-0.5,0)$)
				to[out=90,in=225] ($(v3)+(-0.5,-1)$)
				to[out=45,in=270] ($(v3)+(-0.7,0)$)
				to[out=90,in=180] ($(v3)+(0,0.5)$)
				to[out=0,in=90] ($(v3)+(0.7,0)$)
				to[out=270,in=90] ($(v3)+(-0.3,-1.8)$)
				to[out=270,in=90] ($(v6)+(0.5,-0.3)$)
				to[out=270,in=270] ($(v5)+(-0.5,0)$);
				\filldraw[fill=red!70] ($(v2)+(-0.5,-0.2)$) 
				to[out=90,in=180] ($(v2) + (0.2,0.4)$) 
				to[out=0,in=180] ($(v3) + (0,0.3)$)
				to[out=0,in=90] ($(v3) + (0.3,-0.1)$)
				to[out=270,in=0] ($(v3) + (0,-0.3)$)
				to[out=180,in=0] ($(v3) + (-1.3,0)$)
				to[out=180,in=270] ($(v2)+(-0.5,-0.2)$);
			}
		\end{scope}

		\foreach \v in {1,2,...,18} {
			\fill (v\v) circle (0.1);
		}
		
		\filldraw[fill=brown!70] (v1) circle (0.1) node [right] {$v_3$};
		\filldraw[fill=darkgray!70] (v2) circle (0.1) node [below left] {$v_2$};
		\filldraw[fill=blue!70] (v3) circle (0.1) node [left] {$v_1$};
		
		\filldraw[fill=blue!70](v5) circle (0.1) node [left] {$v^1_1$};
		\filldraw[fill=blue!70] (v6) circle (0.1) node [left] {$v^2_1$};
		\filldraw[fill=blue!70] (v7) circle (0.1) node [left] {$v^3_1$};
		
		\filldraw[fill=darkgray!70] (v8) circle (0.1) node [left] {$v^1_2$};
		\filldraw[fill=darkgray!70] (v9) circle (0.1) node [left] {$v^2_2$};
		\filldraw[fill=darkgray!70] (v10) circle (0.1) node [left] {$v^3_2$};
		
		\filldraw[fill=brown!70] (v11) circle (0.1) node [left] {$v^1_3$};
		\filldraw[fill=brown!70] (v12) circle (0.1) node [left] {$v^2_3$};
		\filldraw[fill=brown!70] (v13) circle (0.1) node [left] {$v^3_3$};
		
		\filldraw[fill=red!70] (v14) circle (0.1) node [left] {$v^1_4$};
		\filldraw[fill=red!70] (v15) circle (0.1) node [left] {$v^2_4$};
		\filldraw[fill=red!70] (v16) circle (0.1) node [left] {$v^3_4$};
		
		\filldraw[fill=yellow!70] (v17) circle (0.1) node [right] {$e_1$};
		\filldraw[fill=green!70] (v18) circle (0.1) node [right] {$e_2$};
		
		\filldraw[fill=red!70] (v4) circle (0.1) node [below] {$v_4$};
		\comment{arg1
			\fill (v5) circle (0.1) node [below right] {$v_5$};
			\fill (v6) circle (0.1) node [below left] {$v_6$};
			\fill (v7) circle (0.1) node [below right] {$v_7$};
		}
		\node at (0.2,2.8) {$e_1$};
		
		\node at (0,1.2) {$e_2$};
		\comment{arg1
			\node at (2.3,3) {$e_2$};
			\node at (3,0.8) {$e_3$};
			
		}
	\end{tikzpicture}
	\caption{An illustration of the reduction from {\DKSH} to {\RCP}. On the left, there is a hypergraph $H = (V_H, E_H)$ where $V_H = \{v_1,v_2,v_3,v_4\}$ and $E_H = \{e_1 = \{v_1,v_2,v_3\}, e_2 = \{v_3,v_4\}\}$. The reduced graph $G(H)$ on the right contains $|E_H|+1 = 3$ copies of each vertex in $H$. } 
	\label{fig:M1}
\end{figure}
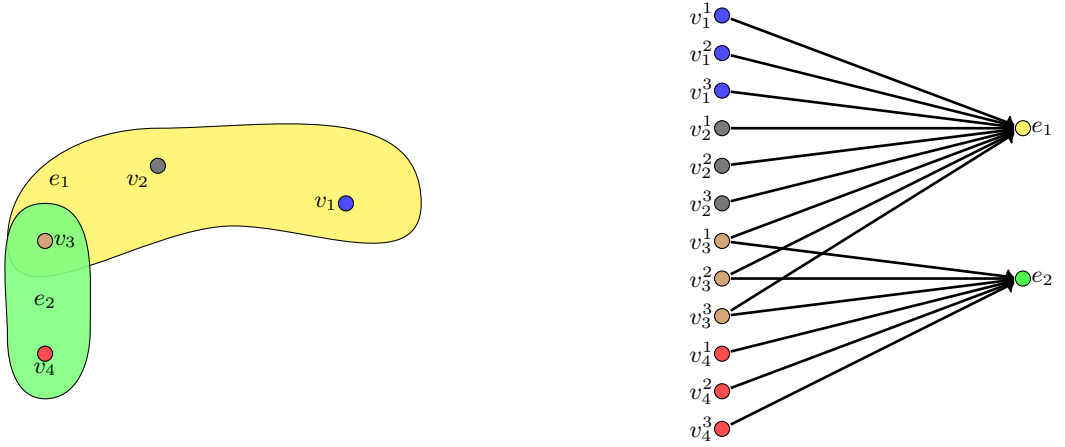

The profit of each edge-vertex in $G$ is equal to the weight of the corresponding hyperedge in $H$, and the profit of all copies of the vertices of $H$ is equal to zero. 
Formally, let $p:V \rightarrow \zzp$ be the profit function. Then, $p(u) = 0$ fo all
$u \in U$ $p(u) = 0$, and for all $e \in E_H$, $p(e)$ is equal to the weight of the hyperedge $e \in E_H$. Finally, let $c = k \cdot (m+1)+m$ be the cardinality bound for the {\RCP} instance of the reduction. Now, let $R(\mathcal{H}) = (G,p,c)$ be the {\em reduction} of $\mathcal{H}$. 

For a subset of vertices $S \subseteq V_H$, let $E_H^S=\{ e \in E_H|~ e \subseteq S \}$ be the set of edges in $H$ contained in $S$.

\begin{lemma}
	\label{1}
	If there is a feasible solution $S$ for the {\DKSH} instance $\mathcal{H}$ such that $w \left( E^S_H \right) = q$, then the \textnormal{\RCP} instance $R(\mathcal{H})$ has a solution of profit at least $q$.  
\end{lemma}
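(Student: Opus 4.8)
The plan is to exhibit an explicit feasible solution for $R(\mathcal{H})$ of profit exactly $q$. Given the feasible \DKSH{} solution $S\subseteq V_H$ with $|S|\le k$ and $w(E_H^S)=q$, I would take
\[
T \;=\; \bigcup_{v\in S}\{v^1,\ldots,v^{m+1}\}\;\cup\;E_H^S \;\subseteq\; V,
\]
i.e.\ all $m+1$ copies of every vertex of $S$, together with every edge-vertex whose hyperedge is fully contained in $S$. It then remains to verify the three requirements of an \RCP{} solution: closure under precedence, the cardinality bound $c$, and the profit lower bound.

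For closure, recall from \eqref{Em} that the only edges of $G(H)$ go from a copy $v^i\in U$ to an edge-vertex $e\in E_H$ with $v\in e$; in particular every vertex of $U$ is a source. Hence the only precedence obligations come from the edge-vertices in $T$: if $e\in E_H^S$ then $e\subseteq S$, so each $v\in e$ lies in $S$ and therefore every copy $v^i$ is already in $T$. Thus $T$ is closed under precedence constraints. For the cardinality bound, $|T| = (m+1)\,|S| + |E_H^S| \le (m+1)k + m = c$, using $|S|\le k$ and $|E_H^S|\le|E_H|=m$. For the profit, all copies have profit $0$ and each edge-vertex $e$ has profit $p(e)=w(e)$, so $p(T)=\sum_{e\in E_H^S} w(e) = w(E_H^S) = q$.

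Combining the three points, $T$ is a feasible solution of $R(\mathcal{H})$ with $p(T)=q$, so the optimum of $R(\mathcal{H})$ is at least $q$, as claimed. I do not expect a real obstacle here: this is the "easy" direction of the equivalence, and the only thing requiring care is checking that the chosen cardinality bound $c=k(m+1)+m$ is exactly large enough to host all $m+1$ copies of every vertex of $S$ plus (in the worst case) all $m$ edge-vertices — which is precisely why $c$ was defined that way and why only $m+1$ copies (rather than more) are created. The harder, quantitatively delicate reverse direction (bounding how much of the budget an optimal \RCP{} solution can "waste" on copies versus edge-vertices) is the one that will need the $m+1$-duplication trick in earnest, but that is the content of the converse statement, not of this lemma.
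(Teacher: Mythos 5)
Your proof is correct and takes essentially the same approach as the paper: the same set $T$ (all $m+1$ copies of each vertex in $S$ plus the induced edge-vertices $E_H^S$), the same closure argument exploiting that all edges in $G(H)$ go from $U$ to $E_H$, and the same cardinality bound $(m+1)k+m=c$. If anything, your cardinality computation is slightly cleaner than the paper's, which in an intermediate step momentarily conflates $|E_H^S|$ with $q=w(E_H^S)$ before recovering the correct bound $|E_H^S|\le m$.
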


\begin{proof}
	Let $S \subseteq V_H$ such that $|S| \leq k$ and the induced subhypergraph $H[S]$ has total weight $q$; that is, $w \left( E^S_H \right) = q$. Define
	
	\begin{equation}
		\label{T}
		T = E^S_H \cup \{u^i \in U ~|~ u \in S, i \in [m+1] \}. 
	\end{equation} In words, $T$ contains all edge-vertices which are subsets of $S$ (i.e., the set $E^S_H$), and all copies of vertices in $S$.

	\begin{myclaim}
		\label{clm:T}
		$T$ is a feasible solution for $R(\mathcal{H})$.
	\end{myclaim}
	
	\begin{proof}
		We note that  
		\begin{equation}
			\label{eqT}
			\begin{aligned}
				|T| \leq{} & |E^S_H|+|\{u^i \in U ~|~ u \in S, i \in [m+1] \}| \\
				={} & q+ |S|  \cdot (m+1) \leq q+ k \cdot (m+1) \\
				\leq{} & m+ k \cdot (m+1)  \\
				={} & c.
			\end{aligned}
		\end{equation}
		The first inequality is by \eqref{T}. The second inequality holds since $S$ is a feasible solution for $\mathcal{H}$; thus, $|S| \leq k$. The third inequality holds since $E^S_H$ is a subset of $E_H$; hence, $\left|E^S_H\right| \leq |E_H| = m$. By \eqref{eqT}, $T$ contains at most $c$ vertices. To show that $T$ is feasible, we show that there are no inedges to $T$ from $V \setminus T$. First, there are no inedges to any $u \in U$ by \eqref{Em}. Second, for all $e \in T \cap E_H$ and $(u,e) \in E$, by \eqref{Em} and \eqref{T} it holds that $u \in T$; that is, we take all copies of vertices with inedges to $e$ by the definition of $T$. We conclude that $T$ is a feasible solution for $R(\mathcal{H})$.
	\qed\end{proof}
	
	We now show that the total profit of $T$ is at least $q$. $$\sum_{t \in T} w(t) \geq \sum_{e \in E_H \cap T} w(e) = w \left( E_H \cap T \right) = w \left(E^S_H\right) = q.$$ The second equality is by \eqref{T}. \qed
\end{proof} 

For some $T \subseteq V$, we define the subset of vertices in $H$ for which all copies in $G(H)$ are in $T$. That is, 

\begin{equation}
	\label{S}
	S(T) = \{v \in V_H  ~|~ \forall i \in [m+1]: v^i \in T \}. 
\end{equation}

The next claims are used in the proof of the second direction in the reduction (stated as Lemma~\ref{2}). 

\begin{myclaim}
	\label{claim:1}
	For all $T \subseteq V$ it holds that $|S(T)| \cdot  (m+1) \leq |U \cap T|$. 
\end{myclaim}

\begin{proof}
	By \eqref{S}, it holds that $T \cap U$ contains exactly $m+1$ copies of each $v \in S(T)$. Thus, as vertex copies form distinct vertices in $G(H)$, the claim follows.  
	\comment{arg1
		We define an injective function $f:S(T) \rightarrow U_m \cap T$ in the following way. For all $v \in S(T)$ define $f(v) = v^1$. Since $v \in S(T)$ then by \eqref{S} it follows that for all $i \in [m+1]$ we have $v^i \in T$. In particular, $v^1 \in T \cap U_m$ and it follows that $f$ is well defined. 
		
		To show that $f$ is injective, let $u,v \in S(T)$ such that $u \neq v$. Therefore, $u^1 \neq v^1$ since copies of distinct vertices are also distinct. Hence, we have $f(u) = u^1 \neq v^1 = f(v)$ and $f$ is injective. 
	}
\qed\end{proof}

\begin{myclaim}
	\label{claim:2}
	For all feasible solutions $T \subseteq V$ of $R(\cH)$ it holds that $|S(T)| \leq k$. 
\end{myclaim}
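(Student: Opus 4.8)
\textbf{Proof plan for Claim~\ref{claim:2}.} The plan is to combine the cardinality bound of a feasible solution with Claim~\ref{claim:1} and then exploit integrality. First I would observe that since $T$ is a feasible solution for $R(\cH)=(G,p,c)$, by definition $|T|\le c = k\cdot(m+1)+m$. Because $V = U\cup E_H$ is a disjoint union, every vertex of $T$ lies in $U$ or in $E_H$, so in particular $|U\cap T|\le |T|\le k\cdot(m+1)+m$.

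Next I would invoke Claim~\ref{claim:1}, which gives $|S(T)|\cdot(m+1)\le |U\cap T|$. Chaining this with the bound above yields
\[
|S(T)|\cdot(m+1)\ \le\ k\cdot(m+1)+m,
\]
hence $|S(T)|\le k+\frac{m}{m+1}$. Since $\frac{m}{m+1}<1$ and $|S(T)|$ is a nonnegative integer, this forces $|S(T)|\le k$, as claimed.

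There is essentially no serious obstacle here: the whole argument is a one-line counting estimate followed by rounding down. The only point worth flagging is that the conclusion relies crucially on the specific choice $c=k\cdot(m+1)+m$ made in the construction — the additive slack $m$ is strictly smaller than the blow-up factor $m+1$, which is exactly what makes the final integrality step go through. (This is also the reason the $m{+}1$ copies were introduced in the reduction in the first place, rather than merely $m$ copies or a single copy.) In writing up the proof I would keep the displayed inequality on a single line to avoid any blank-line issues inside display math.
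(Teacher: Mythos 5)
Your proof is correct and uses exactly the same ingredients as the paper's: the cardinality bound $|T|\le c$, Claim~\ref{claim:1}, and integrality of $|S(T)|$. The only difference is presentational — you argue directly and round down, whereas the paper assumes $|S(T)|\ge k+1$ and derives $|T|>c$ by contradiction — but these are the same estimate read in opposite directions.
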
 

\begin{proof}
	Assume towards a contradiction that  $|S(T)| > k$. Therefore, 
	\begin{equation}
		\label{eq1}
		\begin{aligned}
			|T| ={} & |U \cap T|+|E_H \cap T| \\
			\geq{} & |S(T)|\cdot (m+1)+|E_H \cap T| \\
			\geq{} &  |S(T)|\cdot (m+1) \\ 
			\geq{} & (k+1) \cdot (m+1) \\
			={} & k \cdot (m+1)+(m+1) \\
			>{} & k \cdot (m+1)+m \\
			={} & c. 
		\end{aligned}
	\end{equation}
	The first equality follows from $E_H \cap U = \emptyset$.  The first inequality is by Claim~\ref{claim:1}. The third inequality holds since $|S(T)| > k$; thus, since $S(T)$ is a set, the cardinality of $S(T)$ is an integer and it follows that  $|S(T)| \geq k+1$. By \eqref{eq1} we reach a contradiction to the feasibility of $T$
	for $R(\mathcal{H})$. 
\qed\end{proof}

Our proof of Theorem~\ref{thm:EQ} relies on the next result. 
\begin{lemma}
	\label{2}
	If the \textnormal{\RCP} instance $R(\mathcal{H})$ has a feasible  solution $T$ of profit $q$, then $S(T)$ is a feasible solution for the {\DKSH} instance $\mathcal{H}$ such that $w \left(E^{S(T)}_H\right) \geq q$.
\end{lemma}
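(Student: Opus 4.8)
The plan is to verify separately the two things required of $S(T)$: that it is a feasible solution for the {\DKSH} instance $\mathcal{H}$, and that the total weight of its induced hyperedges is at least $q$. Feasibility is immediate from the work already done: by Claim~\ref{claim:2}, every feasible solution $T$ of $R(\mathcal{H})$ satisfies $|S(T)| \le k$, which is precisely the requirement for $S(T)$ to be a feasible solution of $\mathcal{H}$, so I would simply invoke that claim rather than reprove the cardinality bound.

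For the weight bound, the first step is to note that the profit of $T$ comes entirely from edge-vertices: since $p(u^i) = 0$ for every copy $u^i \in U$, we have $q = \sum_{t \in T} p(t) = \sum_{e \in E_H \cap T} w(e)$. Hence it suffices to prove the containment $E_H \cap T \subseteq E^{S(T)}_H$; combined with $w \ge 0$ this yields $q = \sum_{e \in E_H \cap T} w(e) \le \sum_{e \in E^{S(T)}_H} w(e) = w\!\left(E^{S(T)}_H\right)$, as desired.

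The only substantive step is therefore the containment $E_H \cap T \subseteq E^{S(T)}_H$, and this is exactly where feasibility (closure under in-edges) of $T$ is used. Fix $e \in E_H \cap T$. By \eqref{Em}, the in-edges of the edge-vertex $e$ in $G(H)$ are precisely those from $v^i$ with $v \in V_H \cap e$ and $i \in [m+1]$. Since $T$ is a feasible {\RCP} solution, there are no edges from $V \setminus T$ into $T$; because $e \in T$, this forces $v^i \in T$ for every $v \in e$ and every $i \in [m+1]$. By the definition \eqref{S} of $S(T)$, this means $v \in S(T)$ for each $v \in e$, i.e. $e \subseteq S(T)$, so $e \in E^{S(T)}_H$.

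I do not anticipate a real obstacle in this lemma; it is a short bookkeeping argument. The two points needing a little care are (i) that the profit accounting must explicitly use $p \equiv 0$ on the copies, so that $q$ equals the weight of the selected edge-vertices, and (ii) that one should cite Claim~\ref{claim:2} for $|S(T)| \le k$ instead of redoing the counting estimate \eqref{eq1}.
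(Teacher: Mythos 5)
Your proof is correct and follows essentially the same route as the paper's: cite Claim~\ref{claim:2} for $|S(T)|\le k$, use $p\equiv 0$ on the copies to equate $q$ with the weight of selected edge-vertices, and establish the containment $E_H\cap T\subseteq E^{S(T)}_H$ by closure of $T$ under in-edges — which is exactly the content of the paper's Claim~\ref{claim:3}. The only cosmetic difference is that you explicitly flag nonnegativity of $w$ when passing from the containment to the inequality, which the paper leaves implicit.
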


\begin{proof}
	Let $T \subseteq V$ be a feasible  solution of profit $q$ for $R(\mathcal{H})$; that is, $|T| \leq c$ and $p(T) = \sum_{t \in T} p(t) = q$. We first show that $S(T)$ is a feasible solution for $\mathcal{H}$.
	We note that $S(T) \subseteq V_H$, by \eqref{S}. Also, it holds that $|S(T)| \leq k$ by Claim~\ref{claim:2}. Now, we show that $w \left(E^{S(T)}_H\right) \geq q$ using the following claim.
	\begin{myclaim}
		\label{claim:3}
		$w \left(\{e \in E_H~|~ e \subseteq S(T)\} \right) \geq p \left(E_H \cap T \right)$. 
	\end{myclaim}
	
	\begin{proof}
		Let $e \in E_H \cap T$ and let $v \in e$. By \eqref{Em}, we have that $(v^i,e) \in E$ for all $i \in [m+1]$. Since $e \in T$ and $T$ is a feasible solution for the {\RCP} instance $R(\mathcal{H})$, for all $i \in [m+1]$ it holds that $v^i \in T$ (else $T$ is not a feasible solution since $e \in T, v^i \notin T$ and $(v^i,e) \in E$).  Therefore, by the above and using \eqref{S}, we conclude that $v \in S(T)$.
		This holds for all $v \in e$; therefore, $e \in S(T)$. Hence, $  E_H \cap T \subseteq \left\{e \in E_H~|~ e \subseteq S(T)\right\}$ and the claim follows. 
	\qed\end{proof}
	
	Therefore,
	$$w\left(E^{S(T)}_H\right) = w \left(\{e \in E_H~|~ e \subseteq S(T)\}\right) \geq p \left(E_H \cap T \right) = q.$$ The first inequality holds by Claim~\ref{claim:3}. The second equality holds since (i) for all $u \in U$ and  $e \in E_H$ it holds that $p(u) = 0$, and (ii) the profit of the solution $T$ is $q$; therefore, $p \left(E_H \cap T \right) = q$. \qed
\end{proof}

We are ready to prove the main result of this section.  
\\
\noindent {\bf Proof of Theorem~\ref{thm:EQ} ($\Longrightarrow$):} 
Assume that for some $\rho \geq 1$, there is a $\rho$-approximation algorithm $\cA$ for {\RCP}. Given a {\DKSH} instance $\mathcal{H} = (H,k,w)$, we construct the instance $R(\cH)$ for {\RCP}, and run on $R(\cH)$ Algorithm $\cA$. Let $S^*$ be an optimal solution for {\DKSH} on $\cH$, and let $q^* = w \left(E^{S^*}_H\right)$. Now, by Lemma~\ref{1}, there is a solution for $R(\mathcal{H})$ of profit at least $q^*$. Therefore, Algorithm $\cA$ returns a feasible solution $T_{q^*}$ for $R(\mathcal{H})$ of profit at least $q^* / \rho$. Hence, by Lemma~\ref{2}, we have that $S(T_{q^*})$ is a feasible solution for $\mathcal{H}$ such that $w \left( E^{S(T_{q^*})}_H\right) \geq q^* / \rho$. Note that $\cA$ is polynomial and that the construction of $S(T_{q^*})$ can be trivially computed in polynomial time. Hence, we have a $\rho$-approximation for {\DKSH}. \qed

The proof of Corollary~\ref{cor:RCP} follows from Theorem~\ref{thm:EQ} and the results of~\cite{manurangsi2017inapproximability,hajiaghayi2006minimum}.

\noindent {\bf Proof of Corollary~\ref{cor:RCP}:}
By Theorem~\ref{thm:EQ}, an $n^{1-\eps}$-approximation for {\RCP} implies an $n^{1-\eps}$-approximation for {\DKSH}, where $n$ is the number of vertices in the {\RCP} instance. Thus, by the results of  \cite{hajiaghayi2006minimum}, this implies an $n^{1-\eps}$-approximation for the {\em Maximum Balanced Biclique problem in bipartite graphs (MBB)}. Hence, assuming SSEH and that $NP \neq BPP$, we reach a contradiction to a result of \cite{manurangsi2017inapproximability}. \qed

\section{Reducing Rule Caching to D$k$SH}
\label{sec:6}

In this section, we give the second direction in the proof of Theorem~\ref{thm:EQ}; namely, we show that given an approximation algorithm for {\DKSH}, we can solve the {\sc rule caching problem} with the same approximation guarantee. 

In our reduction, we restrict ourselves to {\em minimal} {\DKSH} solutions, in which every vertex appears in some hyperedge in the solution. Formally, 
\begin{definition}
	\label{def:minimal}
	Let $\cH = (H,k,w), H = (V_H,E_H)$ be a \textnormal{DkSH} instance. A solution $S \subseteq V_H$ is called a {\em minimal solution} of $\cH$ if for every $v \in S$ it holds that $\{e \in E_H~|~ e \subseteq S, v \in e\} \neq \emptyset$. 
\end{definition} 

We show that it suffices to consider minimal solutions without losing in the number of hyperedges in the solution. Recall that for a hypergraph $H = (V_H,E_H)$ and $S \subseteq V_H$ we use the notation $E_H^S = \{e \in E_H~|~ e \subseteq S\}$. 

\begin{lemma}
	\label{lem:minimal}
	for any $\rho \geq 1$, if there is a $\rho$-approximation for \textnormal{DkSH} then there is a $\rho$-approximation for \textnormal{DkSH} that always returns a minimal solution. 
\end{lemma}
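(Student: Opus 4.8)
The plan is to show that from any $\rho$-approximation $\cA$ for \DKSH\ we can construct a $\rho$-approximation $\cA'$ that always outputs a minimal solution, by post-processing the output of $\cA$ so as to prune ``useless'' vertices while never destroying an induced hyperedge. First I would observe that the operation of removing a vertex that lies in no induced hyperedge of the current solution cannot decrease the objective: if $v\in S$ and $\{e\in E_H \mid e\subseteq S,\ v\in e\}=\emptyset$, then every hyperedge $e\in E_H^S$ satisfies $e\subseteq S\setminus\{v\}$ (since $v\notin e$), so $E_H^{S}=E_H^{S\setminus\{v\}}$ and hence $w(E_H^{S})=w(E_H^{S\setminus\{v\}})$; moreover $|S\setminus\{v\}|\le |S|\le k$, so feasibility is preserved. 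Thus define the pruning procedure: repeatedly pick any vertex of the current solution that belongs to no induced hyperedge and delete it, until no such vertex remains. By the observation, each deletion keeps the solution feasible and keeps $w(E_H^{\cdot})$ unchanged; since $S$ is finite the process terminates, and by construction the terminal set is a minimal solution in the sense of Definition~\ref{def:minimal}. Call the terminal set $\mathrm{prune}(S)$; then $\mathrm{prune}(S)$ is feasible, minimal, and $w(E_H^{\mathrm{prune}(S)})=w(E_H^{S})$.

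Next I would assemble the approximation algorithm: on input $\cH=(H,k,w)$, run $\cA$ to obtain $S$, then output $S'=\mathrm{prune}(S)$. This runs in polynomial time: $\cA$ is polynomial, and pruning performs at most $|V_H|$ deletions, each requiring only a scan over $E_H$ to test whether the chosen vertex is covered by some induced hyperedge. For the approximation guarantee, let $S^*$ be an optimal \DKSH\ solution with value $\OPT=w(E_H^{S^*})$. Since $\cA$ is a $\rho$-approximation, $w(E_H^{S})\ge \OPT/\rho$, and since pruning preserves the objective, $w(E_H^{S'})=w(E_H^{S})\ge \OPT/\rho$. Hence $\cA'$ is a $\rho$-approximation that always returns a minimal solution, proving the lemma.

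The only subtlety — and the one point that deserves a careful sentence rather than a hand-wave — is the claim that a single pruning step leaves $E_H^{S}$ literally unchanged as a set of hyperedges (not merely non-decreased in weight); this is what makes the iterated process safe regardless of which uncovered vertex is chosen at each step, and it relies crucially on the fact that the removed vertex participates in none of the currently induced hyperedges, so no hyperedge ``falls out'' of the solution. I do not expect any genuine obstacle here; the argument is essentially a monotonicity/closure observation, and the main care is simply to phrase the termination and the invariant ($w(E_H^{\cdot})$ constant, feasibility maintained) cleanly.
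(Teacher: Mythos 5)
Your proof is correct, and the core observation (removing a vertex that lies in no induced hyperedge preserves the set $E_H^S$ exactly, hence the objective value and feasibility) is precisely the observation the paper uses. The only difference is that the paper does the pruning in a single closed-form step, setting $T=\{v\in S \mid \{e\in E_H : e\subseteq S,\ v\in e\}\neq\emptyset\}$, and then verifies directly that this $T$ is minimal and that $w(E_H^T)=w(E_H^S)$ (both follow because any hyperedge $e\subseteq S$ has all of its vertices retained in $T$, so $E_H^T=E_H^S$). Your iterative formulation requires the extra (easy) bookkeeping of an invariant and a termination argument, but it converges to the same set; the one-shot filter buys a shorter write-up, while the iterative version perhaps makes the monotonicity reasoning slightly more transparent. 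Either is acceptable, and there is no gap in your argument.
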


\begin{proof}
	Let $\rho \geq 1$ and let $\cA$ be a $\rho$-approximation algorithm for {\DKSH}. We define the following algorithm $\cB$ based on $\cA$. Let $\cH = (H,k,w), H = (V_H,E_H)$ be a {\DKSH} instance.
	\begin{enumerate}
		\item Compute $S \leftarrow \cA(\cH)$.
		\item Return $T = \big\{v \in S~|~\{e \in E~|~ e \subseteq S, v \in e\} \neq \emptyset \big\}$. 
	\end{enumerate}
	Clearly, the running time is polynomial since $\cA$ is polynomial. Moreover, $T$ is a minimal solution for $\cH$. By the definition of $T$, it holds that $w \left(E_H^S \right) = w\left(E_H^T \right)$. This completes the proof.  \qed
\end{proof}

In the following, we use an approximation algorithm for {\DKSH} that produces minimal solutions to achieve an approximation algorithm for {\RCP}. The reduction considers the set of {\em predecessor} vertices of each vertex $v$ in the {\RCP} instance; these are all vertices from which there are directed paths to $v$ in the graph. Formally,

\begin{definition}
	\label{def:pre}
	Let $G = (V,E)$ be a directed graph. Let $\mathcal{Z}_G$ be all pairs of vertices $(u,v) \in V \times V$ such that there is a directed path from $u$ to $v$ in $G$ (this includes $(v,v)$). For every $v \in V$, define the {\em predecessors} of $v$ as $P_G(v) = \{u \in V~|~(u,v) \in \mathcal{Z}_G\}$. 
\end{definition} When $G$ is clear from the context, we simply use $P(v) = P_G(v)$. In our reduction we take an {\RCP} instance with a graph $G = (V,E)$ and construct a {\DKSH} instance with a hypergraph $H$ such that the vertices of $H$ are the vertices of $G$ and the hyperedges of $H$ are the sets of predecessors $P(v)$ for every $v \in V$; the weight of $P(v)$ is the profit of $v$ in the {\RCP} instance. 

\begin{definition}
	\label{def:RCPreduction}
	Let 	$I = (G,p,k), G = (V,E)$,  be an \textnormal{\RCP} instance. Define the {\em reduced}  \textnormal{DkSH} instance $\cH_I = (H_I,w_I,k)$ of $I$ such that the following holds.
	\begin{itemize}
		\item $H_I = (V,\mathcal{P}_I)$, where $,\mathcal{P}_I = \left\{ P(v)~|~v \in V\right\}$.
		\item $w_I :\mathcal{P}_I \rightarrow \zzp$ such that for all $v \in V$ it holds that $w_I \left( P(v) \right) = p(v)$. 
	\end{itemize}
\end{definition}

\begin{figure}
	\centering
	
	\begin{tikzpicture}
		\tikzstyle{edge} = [->, line width=1pt]
		\node (v1) at (0.5,2.5) {};
		\node (v2) at (1.5,3) {};
		\node (v3) at (4,2.5) {};
		\node (v4) at (0.5,1) {};
		
		\node (1) at (8.5,2.5) {};
		\node (2) at (10,2.5) {};
		\node (3) at (8.5,1) {};
		\node (4) at (7,2.5) {};
		%
		\draw[edge] (4) to (1);
		\draw[edge] (1) to (2);
		\draw[edge] (3) to (2);
		\begin{scope}[fill opacity=0.7]
			\fill[fill=blue!70] ($(v1)+(-1,0)$) 
			to[out=90,in=180] ($(v2) + (0,0.5)$) 
			to[out=0,in=90] ($(v3) + (1,0)$)
			to[out=270,in=0] ($(v2) + (1,-1.8)$)
			to[out=180,in=270] ($(v4)+(-1,-1)$);
			
			\fill[fill=white!50] ($(v4)+(-0.5,0.2)$)
			to[out=90,in=180] ($(v4)+(0,2)$)
			to[out=0,in=90] ($(v4)+(0.6,0.3)$)
			to[out=270,in=0] ($(v4)+(0,-0.4)$)
			to[out=180,in=270] ($(v4)+(-0.5,1)$);
		\end{scope}

		
		
		\filldraw[fill=brown!70] (v1) circle (0.1) node [right] {$1$};
		\filldraw[fill=yellow!70] (v2) circle (0.1) node [right] {$2$};
		\filldraw[fill=white!70] (v3) circle (0.1) node [right] {$3$};
		\filldraw[fill=red!70] (v4) circle (0.1) node [right] {$4$};
		
		
		\filldraw[fill=brown!70](1) circle (0.1) node [above] {$1$};
		\filldraw[fill=yellow!70] (2) circle (0.1) node [above] {$2$};
		\filldraw[fill=white!70] (3) circle (0.1) node [below] {$3$};
		\filldraw[fill=red!70] (4) circle (0.1) node [above] {$4$};
		%
		%
		%
		%

		\node at (2.2,3.7) {$P(2)$};			
		\node at (0.5,1.7) {$P(1)$};
	\end{tikzpicture}
	\caption{An illustration of the reduction from {\RCP} to {\DKSH}. On the right side there is a graph $G = (V,E)$ of an {\RCP} instance $I$. The hypergraph $H_I = (V,\cP_I)$ is on the left (hyperedges consisting of one vertex are not explicitly shown).} \label{fig:M2}
\end{figure}
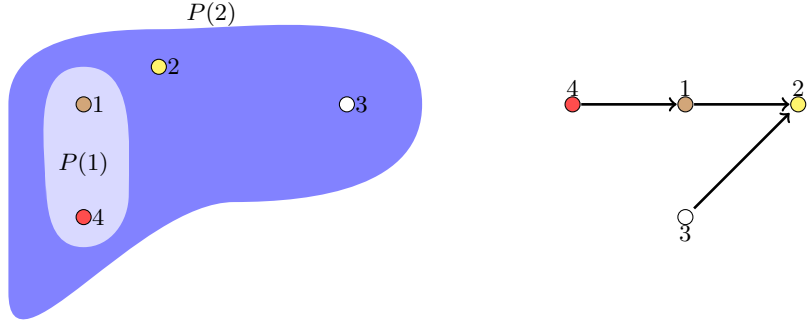

See Figure~\ref{fig:M2} for an illustration of the reduction. The next lemma summarizes the reduction. For the second direction, note that it is necessary that the {\DKSH} solution would be minimal to ensure feasibility. 

\begin{lemma}
	\label{lem:RCPreduction}
	Let $I = (G,p,k), G = (V,E)$,  be an \textnormal{\RCP} instance, and let $\cH_I = (H_I,w_I,k)$ be the reduced \textnormal{DkSH} instance of $I$. For every $W \in \zzp$, $S \subseteq V$ is a solution for $I$ of profit $W$ if and only if $S$ is a minimal solution for $\cH_I$ of weight $W$. 
\end{lemma}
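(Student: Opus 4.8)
The plan is to prove the biconditional by analyzing the two directions separately, exploiting the tight correspondence between predecessor-closure in $G$ and hyperedge-containment in $H_I$. The key structural fact I would establish first, as a preliminary observation, is that for any $S \subseteq V$, the hyperedge $P(v)$ satisfies $P(v) \subseteq S$ if and only if $v \in S$ \emph{and} $S$ contains all predecessors of $v$; this follows immediately from $v \in P(v)$ and Definition~\ref{def:pre}. Consequently, when $S$ is \emph{closed under predecessors} (i.e., a feasible {\RCP} solution, since no edges enter $S$ from $V \setminus S$ is equivalent to $P(v) \subseteq S$ for all $v \in S$), we get the clean identity $E_{H_I}^S = \{P(v) \mid v \in S\}$, and hence $w_I(E_{H_I}^S) = \sum_{v \in S} p(v) = p(S)$. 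The cardinality constraints match trivially since both problems use the same bound $k$ on $|S|$.

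For the forward direction, suppose $S$ is a feasible {\RCP} solution of profit $W$. Feasibility means $|S| \le k$ and there are no edges from $V \setminus S$ into $S$, which (by following directed paths) is equivalent to $P(v) \subseteq S$ for every $v \in S$. By the preliminary observation, $E_{H_I}^S = \{P(v) \mid v \in S\}$, so $w_I(E_{H_I}^S) = \sum_{v\in S} p(v) = W$, giving a {\DKSH} solution of weight $W$. Minimality is then immediate: for each $v \in S$, the hyperedge $P(v) \in E_{H_I}^S$ contains $v$, so $\{e \in \mathcal{P}_I \mid e \subseteq S,\, v \in e\} \ne \emptyset$, matching Definition~\ref{def:minimal}.

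For the reverse direction, suppose $S$ is a minimal {\DKSH} solution of weight $W$. Then $|S| \le k$. The crucial step is to show $S$ is predecessor-closed: take any $v \in S$ and any $u \in P(v)$; by minimality there is some $e = P(w) \subseteq S$ with $v \in e$, i.e. $v \in P(w)$, so there is a directed path from $v$ to $w$, and composing with the path from $u$ to $v$ gives $u \in P(w) \subseteq S$. Hence every predecessor of every vertex of $S$ lies in $S$, so $S$ is a feasible {\RCP} solution. To compute its profit, note minimality gives $\{P(v) \mid v \in S\} \subseteq E_{H_I}^S$, while predecessor-closure gives the reverse inclusion, so again $E_{H_I}^S = \{P(v) \mid v \in S\}$; therefore $p(S) = \sum_{v\in S} p(v) = w_I(E_{H_I}^S) = W$. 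One subtlety to handle carefully is that the map $v \mapsto P(v)$ need not be injective, so $\{P(v) \mid v \in S\}$ as a \emph{set of hyperedges} could in principle be smaller than $|S|$; but since $w_I(P(v)) = p(v)$ is well-defined per hyperedge and the weight $W$ is the sum over distinct hyperedges in $E_{H_I}^S$, I should be slightly careful about whether the intended convention counts multiplicities. I expect the cleanest resolution — and the main point requiring attention — is to argue that the relevant correspondence is between $S$ and the hyperedge \emph{set} $E_{H_I}^S$, and that $w_I(E_{H_I}^S) = \sum_{v\in S} p(v)$ holds because distinct vertices of a predecessor-closed set yield the bookkeeping the lemma intends; if the reduction instead treats $\mathcal{P}_I$ as a multiset of hyperedges indexed by $v \in V$, the identity is exact and no subtlety arises. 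This is the only delicate point; the rest is routine unwinding of the definitions.
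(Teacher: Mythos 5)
Your proposal follows essentially the same route as the paper's proof: forward direction via predecessor-closure giving $P(v)\subseteq S$ for every $v\in S$ (hence minimality and the profit identity), and the reverse direction via the observation that minimality supplies some $P(x)\subseteq S$ with $v\in P(x)$, whence $P(v)\subseteq P(x)\subseteq S$. The one place you go beyond the paper is in flagging the potential non-injectivity of $v\mapsto P(v)$: the paper's chain $\sum_{e\in\cP^S_I}w_I(e)=\sum_{v\in S}w_I(P(v))$ silently assumes this map is a bijection from $S$ to $\cP^S_I$, and indeed $w_I$ itself is only well-defined if $P(u)=P(v)\Rightarrow p(u)=p(v)$; this is automatic when $G$ is a DAG (a cycle would be forced otherwise), which covers the paper's intended instances, but your multiset reading of $\cP_I$ is the clean fix in general.
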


\begin{proof}
	We prove the two directions of the lemma. 
	\begin{enumerate}
	\item[(i)]
	For the first direction, let $S \subseteq V$ be a solution for $I$ of profit $W$. By Definition~\ref{def:RCPreduction}, it holds that $S$ is a solution for $\cH_I$. We use $\cP^S_I$ to denote the set of hyperedges of $H_I$ contained in $S$. Then, by Definition~\ref{def:RCPreduction}, for all $v \in S$, it holds that $P(v) \in \cP^S_I$ and $v \in P(v)$ by Definition~\ref{def:pre}; Thus, by Definition~\ref{def:minimal}, $S$ is a minimal solution for $\cH_I$. Finally, the weight of $S$ in $\cH_I$ satisfies $$w_I\left(\cP^S_I\right) = \sum_{e \in  \cP^S_I} w_I(e) = \sum_{v \in S} w_I\left(P(v)\right) = \sum_{v \in S} p(v) = p(S) = W,$$
	as required.
\item[(ii)]	
	For the second direction of the proof, let $T$ be a minimal solution for $\cH_I$ of weight $W$. Observe that $|T| \leq k$; thus, to show that $T$ is a solution for the {\RCP} instance $I$, consider some $v \in T$. As $T$ is a minimal solution for $\cH_I$, there is $x \in T$ such that $P(x) \subseteq T$ and $v \in P(x)$. Because $P(v) \subseteq P(x)$, it follows that $P(v) \subseteq T$. Hence, $T$ is closed under precedence constraints and is a solution for $I$ as well. To conclude, we note that the profit of $T$ for $I$ satisfies 
	$$p(T) = \sum_{v \in T} p(v) = \sum_{v \in T} w_I\left(P(v)\right) =  \sum_{e \in \cP_I \text{ s.t. } e \subseteq T} w_I(e) = w_I\left(\cP^T_I\right)  = W.$$
	The third equality holds since we showed that, for all $v \in T$, $P(v) \subseteq T$. \qed
\end{enumerate} 
\end{proof}

By Lemmas~\ref{lem:minimal} and~\ref{lem:RCPreduction} we have the proof of Theorem~\ref{thm:EQ}. 

\noindent {\bf Proof of Theorem~\ref{thm:EQ} ($\Longleftarrow$):} 
For some $\rho \geq 1$, let $\cA$ be a $\rho$-approximation algorithm for {\DKSH}. By Lemma~\ref{lem:minimal}, there is a $\rho$-approximation algorithm $\cB$ for {\DKSH} that always returns a minimal solution. For an {\RCP} instance $I$, we can run $\cB$ on the reduced {\DKSH} instance $\cH_I$ and return the output. By Lemma~\ref{lem:RCPreduction}, this yields a $\rho$-approximation for {\RCP}.  \qed

\section{Hardness of Rule Caching with Uniform Profits}
\label{sec:uniform}
In this section we give a hardness result for {\sc uniform rcp (u-rcp)}, the special case of {\RCP} where all vertices have uniform profits, i.e., the proof of Theorem~\ref{thm:2}. The proof is based on a reduction from the {\em densest} $k$-subgraph problem, that is the special case of {\DKSH} (see Section~\ref{sec:2}) where all hyperedges contain exactly two vertices. This problem is hard to solve even when {\em parametrized} by~$k$.

Towards presenting our result, we give some formal definitions for approximation schemes. 	Let $|I|$ be the encoding size of an instance~$I$ of a problem $\Pi$. 
	A {\em polynomial-time approximation scheme} (PTAS)
	for $\Pi$ is a family of algorithms $(A_{\eps})_{\eps>0}$ such that, for any $\eps>0$, $A_{\eps}$ is a polynomial-time $(1+ \eps)$-approximation algorithm for $\Pi$. 
	An {\em Efficient PTAS} (EPTAS) is a PTAS $(A_{\eps})_{\eps>0}$ with running time of the form $f\left(\frac{1}{\eps}\right) \cdot |I|^{O(1)}$, where $f$ is an arbitrary computable function. 
	 An approximation scheme 
	$(A_{\eps})_{\eps>0}$ is a {\em Fully PTAS} (FPTAS) if the running time of $A_{\eps}$ is of the form $ {\left(\frac{|I|}{\eps}\right)}^{O(1)}$.

Parameterized complexity analyzes algorithms for hard computational problems by considering their running time as function of both the input size and a specific parameter $k \in \N$. A problem is classified as {\em fixed-parameter tractable} (FPT) if it can be solved in time $f(k) \cdot \text{poly}(n)$, where $f$ is a function dependent solely on $k$ and $n$ is the input size, allowing for exponential dependence on the parameter $k$. For further definitions and concepts relating to parameterized complexity, see, e.g.~\cite{downey2013fundamentals,cygan2015parameterized}.  
The next lemma asserts that {\URCP} is still challenging, despite the uniform profits. 
\begin{lemma}
	\label{lem:uniform}
	If there is an \textnormal{EPTAS} for \textnormal{\URCP} then there is an optimal \textnormal{FPT} algorithm for densest $k$-subgraph parametrized by $k$. 
\end{lemma}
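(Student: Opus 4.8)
The plan is to show that an EPTAS for \URCP{} can be used, for a suitable choice of $\eps$ depending only on $k$, to solve densest $k$-subgraph exactly in FPT time. Let $J = (G_0,k)$ be a densest $k$-subgraph instance, where $G_0 = (V_0,E_0)$ is an undirected graph; the goal is to find $k$ vertices inducing the maximum number of edges. Since this is the special case of \DKSH{} where every hyperedge has size two (and weights are uniform), I would first build the reduced \textsc{rcp} instance exactly as in Definition~\ref{def:RCPreduction}: take the directed graph that has a source vertex $x_v$ for every $v \in V_0$ and a target vertex $y_e$ for every $e \in E_0$, with arcs $x_u \to y_e$ and $x_w \to y_e$ for $e = \{u,w\}$, uniform profit $1$ on every vertex, and cardinality bound $k' = k + \binom{k}{2}$. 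By the correspondence underlying Lemma~\ref{lem:RCPreduction} (specialized to graphs), a feasible \textsc{rcp} solution of maximum profit selects some set $S$ of $k$ source vertices together with all target vertices $y_e$ with $e \subseteq S$, so $\OPT(\text{\URCP}) = k + \OPT(J)$, where $\OPT(J)$ is the maximum number of induced edges.

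Next I would quantify how fine an approximation is needed to recover $\OPT(J)$ exactly. Since $\OPT(J) \le \binom{k}{2}$, we have $\OPT(\text{\URCP}) \le k + \binom{k}{2} =: M_k$, a bound depending only on $k$. A $(1+\eps)$-approximate \URCP{} solution has profit at least $\OPT(\text{\URCP})/(1+\eps) > \OPT(\text{\URCP}) - \eps M_k$. Choosing $\eps = \eps_k$ small enough that $\eps_k M_k < 1$ (e.g.\ $\eps_k = 1/(M_k + 1)$), the approximate solution's profit is strictly more than $\OPT(\text{\URCP}) - 1$, hence equals $\OPT(\text{\URCP})$ exactly since profits are integers. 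Running the hypothesized EPTAS with this $\eps_k$ therefore returns an optimal \textsc{rcp} solution; from it we read off $S$ (the selected source vertices) and, if $|S| < k$, pad arbitrarily to size $k$ without decreasing the induced-edge count. This yields an optimal densest $k$-subgraph solution.

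Finally I would check the running time is FPT in $k$. An EPTAS runs in time $f(1/\eps) \cdot |I|^{O(1)}$ for some computable $f$; with $\eps = \eps_k = 1/(M_k+1)$ and $M_k = k + \binom{k}{2}$ a function of $k$ alone, this is $g(k) \cdot |I|^{O(1)}$ for the computable function $g(k) = f(M_k + 1)$. The reduction (building the directed graph, choosing $k'$) is polynomial, and $|I|$ is polynomial in the size of $J$, so the overall running time is $g(k) \cdot \mathrm{poly}(n)$, i.e.\ FPT parametrized by $k$. The main thing to get right is the bookkeeping connecting profits in the reduced instance to edge counts in $J$ — essentially re-deriving the graph special case of Lemma~\ref{lem:RCPreduction} and the identity $\OPT(\text{\URCP}) = k + \OPT(J)$ — together with the observation that the $\OPT \le M_k$ bound makes the required $\eps$ depend only on $k$; once those are in place the integrality-gap-rounding argument and the FPT running-time bound are routine.
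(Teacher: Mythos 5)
There is a genuine gap in your reduction, and it is fatal: with \emph{uniform} profits, the instance you build does not encode the densest $k$-subgraph value at all. In your \URCP{} instance every vertex has profit $1$, and the source vertices $x_v$ have no incoming arcs, so any set of source vertices is feasible. Consequently, whenever $|V_0|\ge k'=k+\binom{k}{2}$, an optimal solution can simply pick $k'$ source vertices and no edge-vertices, achieving profit $k'$ regardless of $G_0$. Your claimed identity $\OPT(\text{\URCP})=k+\OPT(J)$ is therefore false in general (it would hold only when $G_0$ has a $k$-clique), and the rest of the argument --- choosing $\eps_k$ so small that the EPTAS must return an exact optimum --- recovers a number that carries no information about $\OPT(J)$. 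The construction you cite from the paper (the reduction from \DKSH{} to \RCP{} in Appendix~C) crucially assigns profit $0$ to vertex copies and nonzero profit only to edge-vertices; dropping that distinction to get a \URCP{} instance breaks the whole argument, and this is exactly the difficulty the uniform-profit hardness result has to overcome.

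The paper's actual proof handles this with an extra gadget you are missing. Each vertex $v\in V_0$ is replaced by $2m$ mutually bidirectionally connected copies, so that any feasible closed set either contains all $2m$ copies or none of them; a vertex thus costs $2m$ slots, while an edge-vertex costs only $1$. With cardinality bound $h_m = 2km+m$ this makes selecting $k+1$ vertices strictly infeasible, and at $m=m^*$ the unique way to reach profit $h_{m^*}$ is to choose $k$ vertex-bundles together with $m^*$ induced edge-vertices. Because the correct scale factor $m$ depends on the unknown optimum $m^*$, the algorithm iterates over all $m\in\{1,\dots,\binom{k}{2}\}$ and keeps the best outcome. Your observation that $\eps$ can be chosen as a function of $k$ alone (because $h_m\le O(k^3)$) is exactly right and matches the paper, but it is the vertex-duplication gadget that makes the profit of the \URCP{} optimum actually reflect $\OPT(J)$; without it there is nothing for the integrality-rounding step to latch onto.
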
 
We give the proof below. 

\noindent{\bf Proof of Theorem~\ref{thm:2}:}
 It is known that densest $k$-subgraph parametrized by $k$ does not admit an FPT algorithm~\cite{chalermsook2017gap}. Thus, the theorem follows from Lemma~\ref{lem:uniform}
\qed

For the proof of Lemma~\ref{lem:uniform}, we use the following construction of an input for {\URCP}, given an input graph $G$ for the densest $k$-subgraph problem. Informally, each vertex $v$ in the undireced graph $G$ is duplicated $2m$ times, for some $m \in \mathbb{N}$, where there are bidirectional edges between all copies of the vertex. This guarantees that either all copies of $v$ or none of them are taken for the solution. In addition, we take the edges of $G$ as vertices of the constructed directed graph. Each edge-vertex has an incoming edge from each of its vertex endpoints in $G$. See Figure~\ref{fig:M0} for an illustration of the construction. Formally, 
\begin{definition}
	\label{def:m-reduced}
	Given an undirected graph $G = (V,E)$ and a parameter $m \in \mathbb{N}$, we define $D_m(G) = (L_m \cup E, \bar{E}_m)$ as the $m$-reduced graph of $G$ such that the following holds. \begin{itemize}
		\item $L_m = \{v_i~|~v \in V, i \in [2\cdot m]\}$. 
		\item $\bar{E}^m_1 = \{\left(v_i, v_j\right)~|~i,j \in [2\cdot m], v \in V\}$. 
		\item $\bar{E}^m_2 = \{\left(v_i, e\right)~|~i \in [2\cdot m], v_i \in L_m, e \in E, v \in e\}$. 
		\item $\bar{E}_m = \bar{E}^m_1 \cup \bar{E}^m_2$. 
		
	\end{itemize}
\end{definition}

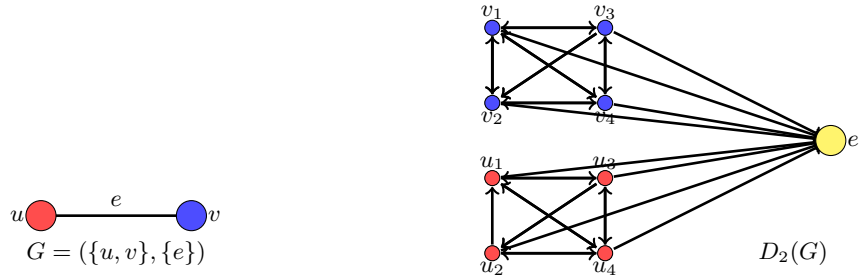
\begin{figure}
	\centering
	
	\begin{tikzpicture}
		\tikzstyle{edge} = [->, line width=1pt]
		\tikzstyle{uedge} = [-, line width=1pt]
		\node (v3) at (0,2) {};
		\node (e) at (10.5,3) {};
		\node (v5) at (6,4.5) {};
		\node (v6) at (6,3.5) {};
		\node (v7) at (7.5,4.5) {};
		
		\node (v8) at (7.5,3.5) {};

		\node (u5) at (6,2.5) {};
		\node (u6) at (6,1.5) {};
		\node (u7) at (7.5,2.5) {};
		
		\node (u8) at (7.5,1.5) {};
		
		\node (v4) at (2,2) {};
		
		\draw[edge] (v5) to (v6);
		\draw[edge] (v6) to (v5);
		\draw[edge] (v5) to (v7);
		\draw[edge] (v7) to (v5);
		\draw[edge] (v5) to (v8);
		\draw[edge] (v8) to (v5);
		\draw[edge] (v7) to (v6);
		\draw[edge] (v7) to (v6);
		\draw[edge] (v8) to (v6);
		\draw[edge] (v6) to (v8);
		\draw[edge] (v7) to (v8);
		\draw[edge] (v8) to (v7);

		\draw[edge] (v5) to (v6);
		\draw[edge] (u6) to (u5);
		\draw[edge] (u5) to (u7);
		\draw[edge] (u7) to (u5);
		\draw[edge] (u5) to (u8);
		\draw[edge] (u8) to (u5);
		\draw[edge] (u7) to (u6);
		\draw[edge] (u7) to (u6);
		\draw[edge] (u8) to (u6);
		\draw[edge] (u6) to (u8);
		\draw[edge] (u7) to (u8);
		\draw[edge] (u8) to (u7);
		
		\draw[edge] (u5) to (e);
		\draw[edge] (u6) to (e);
		\draw[edge] (u7) to (e);
		\draw[edge] (u8) to (e);
		\draw[edge] (v5) to (e);
		\draw[edge] (v6) to (e);
		\draw[edge] (v7) to (e);
		\draw[edge] (v8) to (e);
		
		\draw[uedge] (v3) to (v4);

		\filldraw[fill=yellow!70] (e) circle (0.2) node [right] {$~e$};
		
		\foreach \v in {3,4,...,8} {
			\fill (v\v) circle (0.1);
			
		}
		
		\foreach \u in {5,6,...,8} {
			\fill (u\u) circle (0.1);
		}
		
		\filldraw[fill=red!70] (v3) circle (0.2) node [left] {$ ~u~ $};
		\filldraw[fill=blue!70] (v4) circle (0.2) node [right] {$ ~v$};
		
		\filldraw[fill=blue!70](v5) circle (0.1) node [above] {$v_1$};
		\filldraw[fill=blue!70] (v6) circle (0.1) node[below] {$v_2$};
		\filldraw[fill=blue!70] (v7) circle (0.1) node[above] {$v_3$};
		\filldraw[fill=blue!70] (v8) circle (0.1) node [below] {$v_4$};
		
		\filldraw[fill=red!70](u5) circle (0.1) node  [above] {$u_1$};
		\filldraw[fill=red!70] (u6) circle (0.1) node [below] {$u_2$};
		\filldraw[fill=red!70] (u7) circle (0.1) node  [above] {$u_3$};
		\filldraw[fill=red!70] (u8) circle (0.1) node [below] {$u_4$};

		
		%
		\node at (1,2.2) {$e$};
		
		\node at (1,1.5) {$G = (\{u,v\}, \{e\})$};

		\node at (10,1.5) {$D_2(G)$};
		
		
	\end{tikzpicture}
	\caption{An illustration of the reduction from densest $k$-subgraph to {\URCP}. On the left there is a simple undirected graph $G$ with a single edge. The $2$-reduced directed graph of $G$ is on the right. Each vertex of $G$ is replaced by $2\cdot2 = 4$ copies  
	with a bidirectional edge connecting any two copies of the same vertex, and an outgoing edge from each copy to the single edge-vertex $e$.} \label{fig:M0}
\end{figure}

\noindent{\bf Proof of Lemma~\ref{lem:uniform}:}
Let $\cI = (G,k)$ be a densest $k$-subgraph instance, where $G = (V,E)$ is an undirected graph, and $k$ is the cardinality bound for the subset of vertices in the solution.
 Also, let $\cA$ be an EPTAS for {\URCP}. Observe that the value of the optimum of $\cI$ is an integer $0 \leq m^* \leq \binom{k}{2}$. We also assume that $k \geq 2$ and $m^* \geq 1$, else the problem is trivial. 
Hence, in time $O(k^2)$ we can iterate over all values $m \in \left\{1,\ldots, \binom{k}{2} \right\}$. Henceforth, we assume a fixed choice for $m \in \{1,\ldots, \binom{k}{2}\}$. Now, define the {\URCP} instance $\cj_m = \left(D_m(G),h_m\right)$, where $h_m = 2k \cdot m+m$ is the cardinality bound in the {\URCP} instance.

Next, we run the EPTAS $\cA$ on $\cj_m$ with error parameter $\eps_m = \frac{1}{2h_m}$. Let $\cA(\cj_m,\eps_m) =U(m)$ be the output for $\cj_m$ and $\eps_m$ (i.e., a subset of vertices in $D_m(G)$). 
\begin{myclaim}
	\label{clm:m*}
	$ |U(m^*) \cap E| \geq m^*$.
\end{myclaim}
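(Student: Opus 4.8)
\textbf{Proof plan for Claim~\ref{clm:m*}.} The goal is to show that when we run the EPTAS $\cA$ on the {\URCP} instance $\cj_{m^*}$ with error parameter $\eps_{m^*} = \frac{1}{2h_{m^*}}$, the returned solution $U(m^*)$ contains at least $m^*$ edge-vertices. The plan is to first exhibit a feasible {\URCP} solution for $\cj_{m^*}$ of large cardinality, then argue that the approximation guarantee, combined with the tiny choice of $\eps_{m^*}$, forces the algorithm's output to be near-optimal, and finally translate the bound on $|U(m^*)|$ into a bound on $|U(m^*)\cap E|$.

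\textbf{Step 1: a good feasible solution exists.} Let $S^*\subseteq V$ be an optimal densest $k$-subgraph solution for $\cI$, so $|S^*|\le k$ and $S^*$ induces exactly $m^*$ edges. Consider the subset of $D_{m^*}(G)$ consisting of all $2m^*$ copies $v_i$ of every $v\in S^*$, together with all edge-vertices $e$ with $e\subseteq S^*$. I would check this is a feasible {\URCP} solution: it is closed under the predecessor relation (the only in-edges to an edge-vertex $e$ come from copies of the endpoints of $e$, and all those copies are included since $e\subseteq S^*$; copies of vertices have in-edges only from other copies of the same vertex, all of which are included), and its cardinality is at most $2m^*|S^*| + m^* \le 2m^*\cdot k + m^* = h_{m^*}$. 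Since all profits are uniform (unit), this solution has profit $2m^*|S^*| + m^*$, and in particular $\OPT(\cj_{m^*}) \ge 2m^* k$ using $|S^*|$ could be smaller than $k$ — more carefully, $\OPT(\cj_{m^*}) \ge 2m^*|S^*| + m^*$; I will keep the exact expression and only lower-bound it when needed.

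\textbf{Step 2: the EPTAS output is essentially optimal.} Because $\cA$ is a $(1+\eps_{m^*})$-approximation, $|U(m^*)| \ge \OPT(\cj_{m^*})/(1+\eps_{m^*}) \ge \OPT(\cj_{m^*})(1-\eps_{m^*}) = \OPT(\cj_{m^*}) - \eps_{m^*}\OPT(\cj_{m^*})$. Since every feasible {\URCP} solution for $\cj_{m^*}$ has cardinality at most $h_{m^*}$, we have $\OPT(\cj_{m^*})\le h_{m^*}$, so the additive loss is at most $\eps_{m^*} h_{m^*} = \tfrac12 < 1$. As cardinalities are integers, this gives $|U(m^*)| \ge \OPT(\cj_{m^*})$, i.e. $U(m^*)$ is in fact an optimal {\URCP} solution for $\cj_{m^*}$, hence $|U(m^*)| \ge 2m^*|S^*| + m^*$.

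\textbf{Step 3: extracting the edge count.} It remains to convert a large total cardinality into a large number of edge-vertices. Write $|U(m^*)| = |U(m^*)\cap L_{m^*}| + |U(m^*)\cap E|$. The bidirectional cliques among the $2m^*$ copies of each vertex force that for each $v\in V$, either all $2m^*$ copies lie in $U(m^*)$ or none do (any edge-vertex or copy cannot be "half-taken"); so $|U(m^*)\cap L_{m^*}| = 2m^*\cdot|S|$ where $S\subseteq V$ is the set of vertices all of whose copies are taken, and $|S|\le k$ (otherwise the cardinality would exceed $h_{m^*}$, by the same counting as in Claim~\ref{claim:2} of Appendix~\ref{sec:2}). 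Moreover every edge-vertex $e\in U(m^*)$ must have both endpoints' copies present, so $e\subseteq S$, i.e. $U(m^*)\cap E \subseteq E_G[S]$ and $|U(m^*)\cap E|\le \binom{|S|}{2}\le\binom{k}{2}$. Combining: $2m^*|S^*| + m^* \le |U(m^*)| = 2m^*|S| + |U(m^*)\cap E| \le 2m^* k + |U(m^*)\cap E|$, and since $|S^*|\le k$ this already gives $|U(m^*)\cap E| \ge m^* + 2m^*(|S^*|-k)$, which is not yet what we want if $|S^*|<k$. To fix this I would instead use that $U(m^*)$, being an optimal {\URCP} solution, has profit at least that of the explicit solution from Step~1; but a cleaner route is to note that the set $S$ of fully-taken vertices induces $|U(m^*)\cap E|$ edges in $G$ (since each taken edge-vertex corresponds to an induced edge and, by optimality, we may assume all induced edges of $S$ are taken), so $S$ is a feasible densest-$k$-subgraph solution with $|U(m^*)\cap E|$ edges, whence $|U(m^*)\cap E|\le m^*$; the reverse inequality $|U(m^*)\cap E|\ge m^*$ then follows from $|U(m^*)|\ge \OPT(\cj_{m^*})\ge 2m^*|S^*|+m^*$ together with $|U(m^*)\cap L_{m^*}|=2m^*|S|\le 2m^*k$ and $|S^*|\le k$ only if $|S^*|=k$ — so the genuinely careful argument uses that WLOG $|S^*|=k$ (pad $S^*$ with arbitrary extra vertices, which cannot decrease the induced edge count) and then the chain of inequalities closes with equality forced throughout, yielding $|U(m^*)\cap E|\ge m^*$.

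\textbf{Main obstacle.} The delicate point is Step~3: the total-cardinality bound alone does not immediately separate "copies of vertices" from "edge-vertices," so one must use the clique gadget to argue the all-or-nothing behaviour of vertex copies, bound the number of distinct fully-taken vertices by $k$, and then reconcile the count with optimality. The cleanest way to avoid an off-by-$(|S^*|<k)$ slip is to assume (without loss of generality) that an optimal densest-$k$-subgraph solution uses exactly $k$ vertices, so that the explicit solution from Step~1 has cardinality exactly $2m^*k+m^* = h_{m^*}$, making it an optimal {\URCP} solution; then optimality of $U(m^*)$ forces $|U(m^*)| = h_{m^*}$, which together with $|U(m^*)\cap L_{m^*}|\le 2m^*k$ gives $|U(m^*)\cap E|\ge m^*$ directly.
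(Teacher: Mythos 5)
Your final, cleaned-up argument (the one in the "Main obstacle" paragraph) is correct and is essentially the paper's proof: pad $S^*$ to size exactly $k$ so the explicit solution $Q$ has cardinality $h_{m^*}=2m^*k+m^*$, use $\eps_{m^*}=\tfrac{1}{2h_{m^*}}$ plus integrality to force $|U(m^*)|\ge h_{m^*}$, and use the all-or-nothing clique gadget plus the cardinality bound to get $|U(m^*)\cap L_{m^*}|\le 2m^*k$, hence $|U(m^*)\cap E|\ge m^*$. The paper phrases this as a proof by contradiction, but the ingredients are the same; the intermediate detours in your Step~3 (the bound $|U(m^*)\cap E|\le m^*$ and the "by optimality we may assume all induced edges are taken" step) are unnecessary and should simply be deleted.
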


\begin{proof}
	Assume towards contradiction that $ |U(m^*) \cap E| < m^*$. 
 W.l.o.g. we assume that $k \leq |V|$. Then,

\begin{equation}
\label{eq:Ubound}
    \begin{aligned}
        \left|U(m^*)\right| ={} & \left|U(m^*) \cap L_{m^*}\right|+\left|U(m^*) \cap E\right| \\
        \leq{} & 2m^* \cdot k+ \left|U(m^*) \cap E\right| \\
        <{} & 2m^* \cdot k+ m^*. 
    \end{aligned}
\end{equation}

 The first inequality holds by Definition~\ref{def:m-reduced}: for each $v \in V$ it holds that either all copies of $v$ or none of them belong to $U(m^*)$. In addition, there can be at most $k$ different vertices $v \in V$ for which all copies are contained in $U(m^*)$, otherwise $|U(m^*)| \geq (k+1) 2 m^*> h_{m^*}$.
	
	Since $m^*$ is the value of the optimum for $\cI$, there is a solution  $F \subseteq V$ for $\cI$, $|F| \leq k$, such that $\left|\{e \in E~|~ e \subseteq F\}\right| = m^*$. Moreover, by the definition of $k$, there is $P \subseteq V \setminus F$ such that $|P| = k-|F|$. Now, define $H = \{v_i~|~i \in [2\cdot m^*], v \in F \cup P\}$ and $Q = H \cup \{e \in E~|~ e \subseteq F\}$.  Observe that $Q \subseteq L_{m^*} \cup E$. Furthermore, it holds that $Q$ is closed under the precedence constraints of $D_{m^*}(G)$ by the definition of $Q$ and Definition~\ref{def:m-reduced}. 
	Finally, it holds that 
	\begin{equation}
	\label{eq:Lbound}
	|Q| = |Q \cap L_{m^*}|+|Q\cap E| = |H|+|Q \cap E| = 2m^* \cdot k+m^* = h_{m^*}.
	\end{equation}
	 By \eqref{eq:Ubound} and \eqref{eq:Lbound}, $Q$ is a feasible solution for $\cj_{m^*}$
	 with a value strictly larger than the value of $U(m^*)$. Since $\eps_{m^*}$ is sufficiently small, we have that
	
	
	\begin{equation*}
		\begin{aligned}
			\left| U(m^*) \right| \geq{} & (1-\eps_{m*}) \OPT(\cj_{m^*}) \\
			\geq{} & (1-\frac{1}{2 h_{m^*}}) |Q| \\
			={} & \left(1-\frac{1}{2 \cdot (2k \cdot m^*+m^*)}\right) \cdot \left(2m^* \cdot k+m^* \right) \\
			={} & 2m^* \cdot k+m^* - \frac{1}{2} 
		\end{aligned}
	\end{equation*} 
	
As $|U(m^*)|$ is integral, we have $|U(m^*)| = 2m^* \cdot k+m^*$. Contradiction to 
\eqref{eq:Ubound}. Now, we choose $m'$ which satisfies 
\begin{equation}
	\label{eq:m'}
	m'=\argmax_{m'' \in  \left\{1,\ldots, \binom{k}{2} \right\}} \left|U(m'') \cap E \right|,
\end{equation} 
and return the following solution for $\cI$.  
\begin{equation}
	\label{eq:S1}
	S = \{v \in V~|~\forall i \in [2 \cdot m']: v_i \in U(m')\}. 
\end{equation} 

\begin{myclaim}
	\label{clm:Sfeasible}
	$S$ is a feasible solution for $\cI$.
\end{myclaim}

\begin{proof}
	We note that $S \subseteq V$ by \eqref{eq:S1}. Also, 
    \begin{equation}
		\label{eq:S2}
		|S| = \frac{\left| U(m') \setminus E\right|}{2 \cdot m'} \leq  \frac{|U(m')|}{2 \cdot m'}  \leq \frac{h_{m'}}{2 \cdot m'} = \frac{2m' \cdot k+m'}{2 \cdot m'} = k+\frac{1}{2}. 
	\end{equation} 
 The first equality holds by Definition~\ref{def:m-reduced} and by \eqref{eq:S1}. The second inequality holds since $\cA$ returns a feasible solution for the {\URCP} instance $\cj_m$.
 As $|S| \in \N$, we have that $|S| \leq k$.
 Therefore $S$ is a feasible solution for $\cI$. 
\qed\end{proof}

\begin{myclaim}
	\label{clm:Sprofit}
	$S$ induces in $G$ a subgraph with at least $m^*$ edges. 
\end{myclaim}

\begin{proof}
	Observe that
	\begin{equation*}
		\left|\{\{u,v\} \in E~|~ u,v \in S\}\right| \geq  |U(m') \cap E| \geq |U(m^*) \cap E| \geq m^*.  
	\end{equation*} The first inequality holds by Definition~\ref{def:m-reduced} and by \eqref{eq:S1}. The second inequality holds by \eqref{eq:m'}. The last inequality follows from Claim~\ref{clm:m*}. 
\qed
\end{proof} 
By Claims~\ref{clm:Sfeasible} and~\ref{clm:Sprofit}, we have that $S$ is an optimal solution for the denseset $k$-subgraph instance $\cI$.
The running time, for each $m \in \{0,1,\ldots, \binom{k}{2}\}$ is $f(\frac{1}{\eps_m}) \cdot \textsf{poly}(|\cI|)$, where $f$ is a computable function promised to exist since $\cA$ is an EPTAS for {\URCP}. Thus, as $m \leq k^2$ and $\eps_m \geq \frac{1}{2h_m} \geq \frac{1}{2\cdot \binom{k}{2} \cdot k+\binom{k}{2}}$, it holds that $\frac{1}{\eps_{m}} = O(k^3)$ for any $m \in \{0,1,\ldots, \binom{k}{2}\}$. Hence, the running time of the above algorithm, which finds an optimal solution for $\cI$, is $g(k) \cdot \textsf{poly}(|\cI|)$, where $g(k) = O(k^2 \cdot f(k^3))$.  \qed
\end{proof}

\section{Hardness of Rule Caching with Bounded Vertex Degrees}
\label{sec:inOut}
In this section, we show that {\RCP} remains just as hard if the in-degree and out-degree of each vertex is bounded by only $2$, which gives the proof of Theorem~\ref{thm:2thm}. Specifically, given an {\RCP} instance, we construct an {\em augmented} {\RCP} instance where each vertex is transformed into a gadget with polynomially many vertices. The crucial attribute of the augmented instance is that it preserves the set of solutions of the original instance, up to the addition of the new vertices of each gadget.  

\subsection{The Construction}

The gadget  constructed for each vertex $x$ in the original instance contains two binary trees: the {\em in-tree} and the {\em out-tree}. Both binary trees contain $x$ as the root. However, in the in-tree the direction of the edges is towards the root, and in the out-tree the direction of the edges is from the root towards the leaves. The leaves of both trees represent the entire set of vertices of the instance. For each pair of vertices $x,y$ in the original instance such that $(x,y)$ is an edge, in the augmented instance there is an edge from the leaf of $y$ in the out-tree of $x$ to the leaf of $x$ in the in-tree of $y$. Finally, we add edges from the leaves of the out-tree of every vertex $x$ to the leaves of the in-tree of $x$, making the two trees a strongly connected component in the augmented graph. In the following, we give the formal definitions of the above construction.

For the remainder of this section, let $I = (G,p,k)$, where $G = (V,E)$, be an {\RCP} instance, and let $n = |V|$ be the number of vertices of the instance.  Also, let $m = \argmin\{2^c~| c \in \mathbb{N}, 2^c \geq n\}$ be the smallest power of $2$ at least as large as $n$; note that $m \leq 2 \cdot n$. Define the {\em augmented instance} $A(I) = (G_I,w_I,k_I)$ as follows. For each $t \in \mathbb{N}$, let $[t] = \{1,2,\ldots,t\}$.


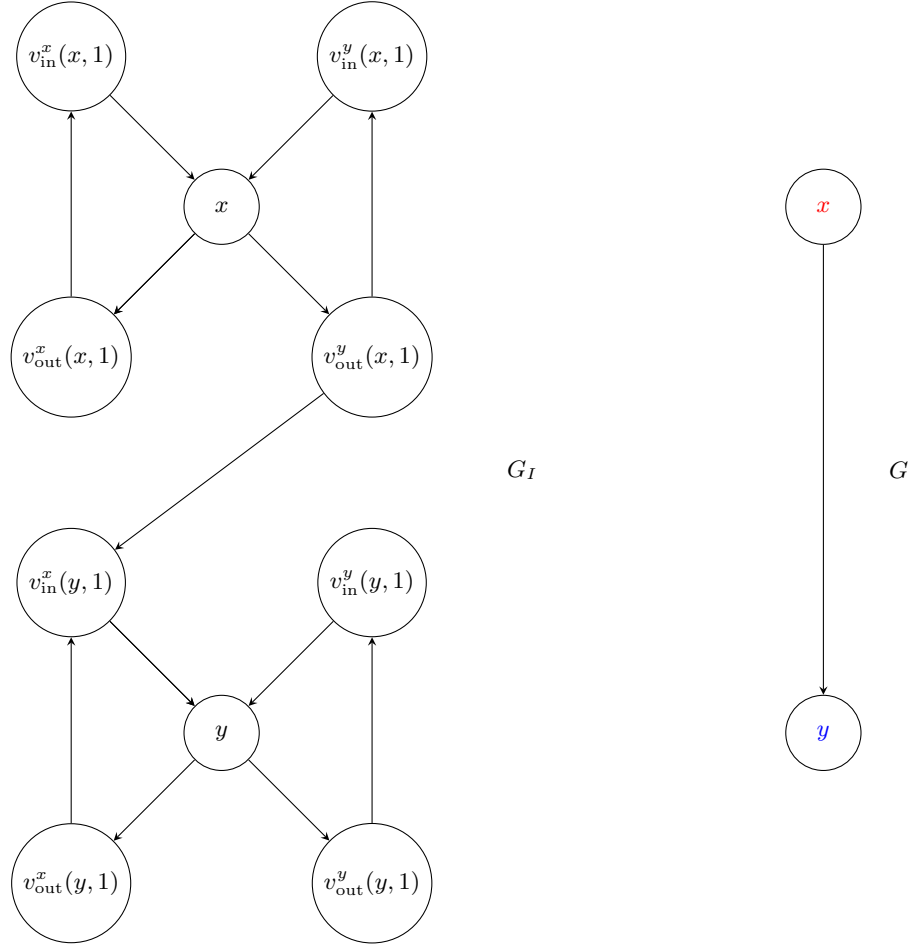
\begin{figure}
	\centering
	
	\begin{tikzpicture}
		[
		every node/.style = {circle, draw, minimum size = 1cm},
		->, >=stealth 
		]

		
		\node [rectangle, draw=none] at (4,-3.5) {$G_I$};
		
		\node [rectangle, draw=none] at (9,-3.5) {$G$};
		
		\node (X) at (8,0) {\textcolor{red}{\textbf{$x$}}};
		
		\node (Y) at (8,-7) {\textcolor{blue}{\textbf{$y$}}};

		\node (x) at (0,0) {$x$};
		\node (Oxx) at (-2,-2) {$v^x_{\text{out}}(x,1)$};
		\node (Oxy) at (2,-2) {$v^y_{\text{out}}(x,1)$};
		
		\node (Ixx) at (-2,2) {$v^x_{\text{in}}(x,1)$};
		\node (Ixy) at (2,2) {$v^y_{\text{in}}(x,1)$};
		\draw (x) edge (Oxx);
		\draw (x) edge (Oxy);
		\draw (Ixx) edge (x);
		\draw (Ixy) edge (x);
		\draw (x) edge (Oxx);
		\draw (Oxx) edge (Ixx);
		\draw (Oxy) edge (Ixy);

		\node (y) at (0,-7) {$y$};
		\node (Oyx) at (-2,-5) {$v^x_{\text{in}}(y,1)$};
		\node (Oyy) at (2,-5) {$v^y_{\text{in}}(y,1)$};
		
		\node (Iyx) at (-2,-9) {$v^x_{\text{out}}(y,1)$};
		\node (Iyy) at (2,-9) {$v^y_{\text{out}}(y,1)$};
		
		\draw  (Oyx) edge (y);
		\draw (Oyy) edge (y);
		\draw (y) edge (Iyx);
		\draw (y) edge (Iyy);
		\draw (Oyx) edge (y);
		\draw (Iyx) edge (Oyx);
		\draw (Iyy) edge (Oyy);
		
		\draw (X) edge (Y);

		\draw (Oxy) edge (Oyx);
		
	\end{tikzpicture}
	\caption{An illustration of  the graph $G_I$ constructed for $G = \left(V = \{x,y\},E = \{(x,y)\}\right)$.} \label{fig:GI}
\end{figure}

\noindent {\bf Vertices of $G_I$:} 
For each $x,y \in V$, define the vertices $v^y_{\text{in}}(x,\ell_0)$ and $v^y_{\text{out}}(x,\ell_0)$ to be the {\em leaf of} $y$ in the {\em in-tree}  and {\em out-tree} of $x$, respectively, where $\ell_0 = \log_2(m)$ indicates the distance from the root of the tree (see the following constructions). Assume without the loss of generality that $V = [n]$. Define additional $m-n$ {\em leaves} of the in-tree and out-tree of $x$, that are $v^i_{\text{in}}(x,\ell_0)$ and $v^i_{\text{out}}(x,\ell_0)$ for $i \in [m] \setminus [n]$. Let the {\em leaves} of the trees be
\begin{equation*}
	\label{eq:vertexGI}
	\begin{aligned}
		L_{\text{in}}(x) ={} & \left\{ v^y_{\text{in}}(x,\ell_0)~|~y \in V \right\} \cup  \left\{ v^i_{\text{in}}(x,\ell_0)~|~i \in [m] \setminus [n] \right\}\\
		L_{\text{out}}(x) ={} & \left\{ v^y_{\text{out}}(x,\ell_0)~|~y \in V \right\} \cup  \left\{ v^i_{\text{in}}(x,\ell_0)~|~i \in [m] \setminus [n] \right\}.
	\end{aligned}
\end{equation*} For each $\ell \in \left[ \log_2\left(\frac{m}{2} \right)\right]$ and $i \in \left[2^{\ell}\right]$, define the $i$-th intermediate vertex of {\em level} $\ell$ of the in-tree and out-tree of $x$ as $v^i_{\text{in}}(x,\ell)$ and $v^i_{\text{out}}(x,\ell)$, respectively. Let  
\begin{equation*}
	\begin{aligned}
		V_{\text{in}}(x) ={} & L_{\text{in}}(x) \cup \left\{v^i_{\text{in}}(x,\ell)~|~\ell \in \left[ \log_2\left(\frac{m}{2} \right)\right], i \in \left[2^{\ell}\right]\right\}\\
		V_{\text{out}}(x) ={} & L_{\text{out}}(x) \cup \left\{v^i_{\text{out}}(x,\ell)~|~\ell \in \left[ \log_2\left(\frac{m}{2} \right)\right], i \in \left[2^{\ell}\right]\right\}\\
	\end{aligned}
\end{equation*} We can finally define the vertex set of the augmented instance $A(I)$ as \begin{equation}
	\label{eq:V(I)}
	V_I = V \cup \bigcup_{x \in V} \left(V_{\text{in}}(x) \cup V_{\text{out}}(x)\right).  
\end{equation}

\noindent {\bf Edges of $G_I$:} The graph $G_I$ has three types of edges. The first  type generates the in-tree and out-tree for each vertex; the second type of edges guarantees that each gadget is a strongly connected component in $G_I$; the last type connects the gadgets analogously to $G$. Let the first type of edges of $x \in V$ be $$E_{1}(x) = 	E^1_{\text{in}}(x)  \cup 	E^2_{\text{in}}(x)  \cup 	E^1_{\text{out}}(x)  \cup 	E^2_{\text{out}}(x)$$
such that $E^1_{\text{in}} (x), E^2_{\text{in}}(x), E^1_{\text{out}}(x)$, and $E^2_{\text{out}}(x)$ are defined as follows.
\begin{equation}
	\label{eq:edges}
	\begin{aligned}
		E^1_{\text{in}}(x) ={} & \left\{\left(v^{2i}_{\text{in}}(x,\ell+1), v^i_{\text{in}}(x,\ell)\right)~\big|~\ell \in \left[ \log_2\left(\frac{m}{2} \right)-1\right], i \in \left[2^{\ell}\right]\right\}\\
		E^2_{\text{in}}(x) ={} & \left\{\left(v^{2i-1}_{\text{in}}(x,\ell+1), v^i_{\text{in}}(x,\ell)\right)~\big|~\ell \in \left[ \log_2\left(\frac{m}{2} \right)-1\right], i \in \left[2^{\ell}\right]\right\}\\
		E^1_{\text{out}}(x) ={} & \left\{\left(v^{i}_{\text{out}}(x,\ell), v^{2i}_{\text{out}}(x,\ell+1)\right)~\big|~\ell \in \left[ \log_2\left(\frac{m}{2} \right)-1\right], i \in \left[2^{\ell}\right]\right\}\\
		E^2_{\text{out}}(x) ={} & \left\{\left(v^{i}_{\text{out}}(x,\ell), v^{2i-1}_{\text{out}}(x,\ell+1)\right)~\big|~\ell \in \left[ \log_2\left(\frac{m}{2} \right)-1\right], i \in \left[2^{\ell}\right]\right\}\\
	\end{aligned}
\end{equation}
The second type of edges connects all corresponding pairs of leaves in the in-tree and out-tree for every vertex:
\begin{equation}
	\label{eq:secondType}
	E_2(x) = \left\{ \left(v^y_{\text{out}}(x,\ell_0), v^y_{\text{in}}(x,\ell_0) \right)~|~y \in V \right\} \cup  \left\{ \left(v^i_{\text{out}}(x,\ell_0), v^i_{\text{in}}(x,\ell_0) \right)~|~i \in [m-n] \right\}\\
\end{equation}

The third type of edges considers each pair of vertices $x,y$ in the original instance such that $(x,y) \in E$ and creates an edge from the leaf of $y$ in the out-tree of $x$ to the leaf of $x$ in the in-tree of $y$. 
\begin{equation}
	\label{eq:thirdType}
	E_3 = \left\{\left(v^{y}_{\text{out}}(x,\ell_0), v^x_{\text{in}}(y,\ell_0)\right)~|~(x,y) \in E \right\}
\end{equation}
Now, define the edges of the graph $G_I$ as \begin{equation}
	\label{eq:edgesI}
	E_I = E_3 \cup \bigcup_{x \in V} \left( E_1(x) \cup E_2(x) \right). 
\end{equation} The graph $G_I$ is simply $G_I = (V_I,E_I)$. 

\noindent {\bf Profit function $p_I$:} Define the profit function $p_I:V_I \rightarrow {\zzpos}$ such that for all $x \in V$ it holds that $p_I(x) = p(x)$ and for all $y \in V_I \setminus V$ it holds that $p_I(y) = 0$. 

\noindent {\bf Cardinality Bound:} Let $t = 1+2 \cdot \sum_{\ell \in \left[ \log_2\left(m \right)\right]} 2^{\ell}$ be the number of the vertices in the in-tree and the out-tree of any vertex $x \in V$, including $x$. Define the cardinality bound of $A(I)$ as $k_I = k \cdot t$.

\subsection{Analysis}
In the following we give some elementary properties of the construction. 

\begin{observation}
	\label{lem:pathInGadget}
	Let $x \in V$, $\ell_1,\ell_2 \in \left[ \log_2\left(m \right)\right]$, $i_1 \in \left[2^{\ell_1}\right]$, $i_2 \in \left[2^{\ell_2}\right]$, and $s_1,s_2 \in \left\{ \text{in}, \text{out}\right\}$. Then, there is a path in $G_I$ from $v^{i_1}_{s_1}(x,\ell_1)$ to  $v^{i_2}_{s_2}(x,\ell_2)$. Moreover, there is a path from $v^{i_1}_{s_1}(x,\ell_1)$ to $x$ and a path from $x$ to $v^{i_1}_{s_1}(x,\ell_1)$. 
\end{observation}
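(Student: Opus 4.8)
It suffices to prove the two ``moreover'' claims, that for every vertex $v$ in the gadget of $x$ --- that is, every $v\in\{x\}\cup V_{\text{in}}(x)\cup V_{\text{out}}(x)$ --- there is a directed path from $v$ to $x$ and a directed path from $x$ to $v$; concatenating a path from $v^{i_1}_{s_1}(x,\ell_1)$ to $x$ with a path from $x$ to $v^{i_2}_{s_2}(x,\ell_2)$ then yields the first assertion. The argument will use only three structural facts, all read off from \eqref{eq:edges} and \eqref{eq:secondType} (see also Figure~\ref{fig:GI}): in the in-tree of $x$ every non-root vertex has an outgoing edge to its parent; in the out-tree of $x$ every non-leaf vertex has an outgoing edge to each of its two children; and every out-tree leaf $v^{j}_{\text{out}}(x,\ell_0)$ has an outgoing edge to the matching in-tree leaf $v^{j}_{\text{in}}(x,\ell_0)$.

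\emph{From $v$ to $x$.} If $v$ lies in the in-tree, repeatedly follow the edge to the parent; a short induction on the level of $v$ shows this reaches the root $x$. If $v$ lies in the out-tree, first follow child edges downward until some leaf $v^{j}_{\text{out}}(x,\ell_0)$ is reached, then use the edge of \eqref{eq:secondType} to cross to $v^{j}_{\text{in}}(x,\ell_0)$, and finish by the in-tree case. In particular this handles $v=x$ trivially.

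\emph{From $x$ to $v$.} If $v=v^{i_2}_{\text{out}}(x,\ell_2)$ lies in the out-tree, follow child edges from $x$ down to $v$. If $v=v^{i_2}_{\text{in}}(x,\ell_2)$ lies in the in-tree, note that the in-tree edges point the wrong way, so instead pick a leaf $v^{j}_{\text{in}}(x,\ell_0)$ that is a descendant of $v$ in the in-tree --- for instance $j=(i_2-1)\,2^{\ell_0-\ell_2}+1$ --- reach the corresponding out-tree leaf $v^{j}_{\text{out}}(x,\ell_0)$ from $x$ by the out-tree case just treated, cross to $v^{j}_{\text{in}}(x,\ell_0)$ via \eqref{eq:secondType}, and then climb up the in-tree; since $v$ is an ancestor of that leaf it lies on this upward path, so $v$ is reached.

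I do not expect a genuine obstacle here: the gadget is essentially a bidirected structure, so the content is just the bookkeeping of the orientations of the four edge families $E^1_{\text{in}},E^2_{\text{in}},E^1_{\text{out}},E^2_{\text{out}}$, together with the single observation that an internal in-tree vertex can be reached from $x$ only indirectly --- by descending in the out-tree, hopping across via \eqref{eq:secondType}, and ascending back up the in-tree. Every path required by the observation is an instance of this uniform ``descend, hop, ascend'' template.
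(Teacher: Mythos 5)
The paper leaves this Observation unproved, so there is no official argument to compare against; your proof fills the gap, and it does so correctly. The decomposition into ``get to $x$'' and ``get from $x$'' and then concatenating is exactly the right way to reduce the first assertion to the two ``moreover'' claims, and the ``descend in the out-tree, hop across via $E_2(x)$, ascend in the in-tree'' template is precisely what the construction is designed to support: the in-tree edges all point toward the root, the out-tree edges all point away from it, and $E_2(x)$ ties the two leaf fringes together into a strongly connected component. Your choice of the leftmost descendant leaf $j = (i_2-1)2^{\ell_0-\ell_2}+1$ is a clean way to make the ``ascend back up to an internal in-tree vertex'' step concrete.

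One small thing worth flagging, though it is an issue with the paper's formalization rather than with your argument: as literally written, the range $\ell \in \left[\log_2\left(\frac{m}{2}\right)-1\right]$ in \eqref{eq:edges} only produces the edges between consecutive \emph{intermediate} levels; it omits the edges between the root $x$ (level $0$) and level $1$, and between level $\ell_0-1$ and the leaves at level $\ell_0 = \log_2 m$. Your three ``structural facts'' (every non-root in-tree vertex has an edge to its parent, every non-leaf out-tree vertex has edges to both children, every out-tree leaf connects to the matching in-tree leaf) presuppose the intended construction rather than the literal formula. That is the right call --- Figure~\ref{fig:GI} and the surrounding prose make the intent unambiguous, and without those two boundary layers of edges the Observation would be false --- but if you were writing this for the record it would be worth either citing the corrected edge ranges or at least noting the reinterpretation explicitly, since the proof does rely on them.
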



Let the symbol $+$ denote (directed) path concatenation. For simplicity we denote a path as a sequence of vertices (if the corresponding edges exist). 

\begin{lemma}
	\label{lem:Path}
	For every $x,y \in V$ there is a path from $x$ to $y$ in $G$ if and only if  there is a path from $x$ to $y$ in $G_I$.
\end{lemma}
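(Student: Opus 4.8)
The plan is to exploit the block structure of $G_I$. For each $x\in V$ call $\{x\}\cup V_{\text{in}}(x)\cup V_{\text{out}}(x)$ the \emph{gadget of $x$}; by~\eqref{eq:V(I)} these gadgets partition $V_I$, and the root vertex $x$ lies in the gadget of $x$. The first step I would carry out is to establish the key structural fact: the only edges of $G_I$ joining two \emph{distinct} gadgets are those of $E_3$, and for $x\neq y$ an edge of $E_3$ runs from the gadget of $x$ to the gadget of $y$ precisely when $(x,y)\in E$. This is immediate from~\eqref{eq:edgesI}: every edge in $E_1(x)\cup E_2(x)$ has both endpoints inside the gadget of $x$, while every edge of $E_3$ has, by~\eqref{eq:thirdType}, the form $\left(v^{y}_{\text{out}}(x,\ell_0),\,v^{x}_{\text{in}}(y,\ell_0)\right)$ with $(x,y)\in E$, so it leads from the gadget of $x$ to the gadget of $y$.

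With this fact in hand, the ($\Leftarrow$) direction is a short induction on the length of the path. Given a path $x=x_0,x_1,\dots,x_r=y$ in $G$, note that each $x_i\in V=[n]\subseteq[m]$ is a valid leaf index. For each $i\in[r]$, Observation~\ref{lem:pathInGadget} yields a path in $G_I$ from the root $x_{i-1}$ down to the leaf $v^{x_i}_{\text{out}}(x_{i-1},\ell_0)$; since $(x_{i-1},x_i)\in E$, \eqref{eq:thirdType} supplies the edge from this leaf to $v^{x_{i-1}}_{\text{in}}(x_i,\ell_0)$; and Observation~\ref{lem:pathInGadget} again yields a path from that in-leaf up to the root $x_i$. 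Concatenating these three segments for $i=1,\dots,r$ gives a path from $x$ to $y$ in $G_I$; the degenerate case $x=y$ is trivial.

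For the ($\Rightarrow$) direction I would project a path $z_0=x,z_1,\dots,z_\ell=y$ of $G_I$ onto the gadgets it visits: let $g(z_j)\in V$ denote the vertex whose gadget contains $z_j$, so $g(z_0)=x$ and $g(z_\ell)=y$. By the structural fact, for each consecutive pair $(z_{j-1},z_j)$ we have either $g(z_{j-1})=g(z_j)$ (when the edge lies in some $E_1(\cdot)\cup E_2(\cdot)$) or $(g(z_{j-1}),g(z_j))\in E$ (when the edge lies in $E_3$). Hence, after deleting consecutive repetitions, $g(z_0),\dots,g(z_\ell)$ is a walk in $G$ from $x$ to $y$, and therefore $G$ contains a path from $x$ to $y$.

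The only step that requires genuine care --- and the closest thing here to a main obstacle --- is the exhaustive verification of the structural fact: one must confirm that no edge of $\bigcup_{x\in V}\left(E_1(x)\cup E_2(x)\right)$ crosses a gadget boundary and that every edge of $E_3$ crosses in exactly the direction of a corresponding edge of $G$. Once that case analysis is complete, both implications follow essentially in one line, since all navigation within a single gadget is already provided by Observation~\ref{lem:pathInGadget}.
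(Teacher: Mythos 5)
Your proposal is correct and follows essentially the same route as the paper: decompose $G_I$ into gadgets, observe that intra-gadget connectivity is total (Observation~\ref{lem:pathInGadget}) while the only cross-gadget edges are those of $E_3$ and correspond bijectively to edges of $G$, then concatenate/project. One small slip: you have swapped the labels $(\Rightarrow)$ and $(\Leftarrow)$ (your ``($\Leftarrow$) direction'' actually proves that a path in $G$ yields one in $G_I$, which is the forward implication), but the content of each argument is right, and your projection phrasing of the $G_I$-to-$G$ direction is a slightly cleaner packaging of the same segment-decomposition the paper uses.
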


\begin{proof}
	Let $F = \left(x = x_1,x_2,\ldots,x_r = y\right)$ be a path from $x$ to $y$ in $G$, for some $r \in \mathbb{N}$. Let $P(x)$ be a path from $x$ to $v^{x_2}_{\text{out}}(x,\ell_0)$, whose existence is guaranteed by Observation~\ref{lem:pathInGadget}. In addition, for every $q \in \{2,3,\ldots,r-1\}$ let $P(x_q)$ be a path from $v^{x_{q-1}}_{\text{in}}(x_q,\ell_0)$ to $v^{x_{q+1}}_{\text{out}}(x_q,\ell_0)$, which exists by Observation~\ref{lem:pathInGadget}. Furthermore, consider a path $P(y)$ from $v^{x_{r-1}}_{\text{in}}(y,\ell_0)$ to $y$, which exists by Observation~\ref{lem:pathInGadget}. Since $F$ is a path in $G$, by \eqref{eq:thirdType} and \eqref{eq:edgesI}, for every $q \in [r-1]$ there is an edge $\left( v^{x_{q+1}}_{\text{out}}(x_q,\ell_0),v^{x_{q}}_{\text{in}}(x_{q+1},\ell_0) \right) \in E_I$.  Thus, the following is a path from $x$ to $y$ in $G_I$:
	$$P(x)+P(x_2)+P(x_3)+\ldots+P(x_{r-1})+P(y).$$
	
	For the second direction, let $D$ be a path from $x$ to $y$ in $G_I$.  By \eqref{eq:edges}, \eqref{eq:secondType}, \eqref{eq:thirdType}, and \eqref{eq:edgesI}, there are vertices $x = y_1,y_2,\ldots, y_d = y \in V$ and paths $D(x),D(y_2),\ldots,D(y_{r-1}), D(y)$ such that the following holds. 
	\begin{enumerate}
		\item $D(x)$ is a path from $x$ to $v^{y_2}_{\text{out}}(x,\ell_0)$.
		\item For every $q \in \{2,3,\ldots,d-1\}$ it holds that $D(y_q)$ is a path from $v^{y_{q-1}}_{\text{in}}(y_q,\ell_0)$ to $v^{y_{q+1}}_{\text{out}}(y_q,\ell_0)$.
		\item $D(y)$ is a path from $v^{y_{d-1}}_{\text{in}}(y,\ell_0)$ to $y$.
		\item $D = D(y_1)+D(y_2)+\ldots+D(y_d)$. 
	\end{enumerate} By the above description of the path $D$, for every $q \in [d-1]$, it holds that $G_I$ contains the edge $\left( v^{y_{q+1}}_{\text{out}}(y_q,\ell_0),v^{y_{q}}_{\text{in}}(y_{q+1},\ell_0) \right) \in E_I$. Therefore, by \eqref{eq:thirdType} it holds that $(y_q,y_{q+1}) \in E$ for all $q \in [d-1]$. Hence, it follows that $\left(x = y_1,y_2,\ldots, y_d = y\right)$ is a path from $x$ to $y$ in $G$. \qed
\end{proof}

In the following, we discuss the complexity of the construction. 
\begin{lemma}
	\label{lem:complexityTress}
	$\left|V_I \right| \leq O\left(\left|V^2\right|\right)$. 
\end{lemma}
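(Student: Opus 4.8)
The plan is a direct counting argument on the gadgets. First I would note that, by \eqref{eq:V(I)}, the set $V_I$ is the disjoint union of $V$ (the $n=|V|$ copies of the original vertices, which serve as roots) together with the internal vertices and leaves of the $2n$ binary trees $V_{\text{in}}(x)$ and $V_{\text{out}}(x)$, $x\in V$. These blocks are pairwise disjoint: every non-root vertex of the gadget of $x$ carries $x$ in its label, so gadgets of distinct $x\neq y$ share no vertices, and the in-tree and out-tree of the same $x$ meet only at the root $x$, which is already counted in $V$. Hence $|V_I| = |V| + \sum_{x\in V}\bigl(|V_{\text{in}}(x)| + |V_{\text{out}}(x)|\bigr)$.

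Next I would count a single tree. By construction, $V_{\text{in}}(x)$ has $m$ leaves at level $\ell_0=\log_2 m$ (this count already absorbs the $m-n$ padding leaves) and, for each level $\ell\in\{1,\dots,\log_2(m/2)\}$, exactly $2^\ell$ internal vertices; summing the geometric series gives $\sum_{\ell=1}^{\log_2(m/2)}2^\ell = m-2$, so $|V_{\text{in}}(x)| = m+(m-2) = 2m-2$, and likewise $|V_{\text{out}}(x)| = 2m-2$. This is consistent with the quantity $t = 1 + 2\sum_{\ell\in[\log_2 m]}2^\ell = 4m-3$ introduced in the construction, which counts $x$ together with both of its trees.

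Finally, plugging in, $|V_I| = n + n(4m-4) = n(4m-3)$. Since $m$ is the smallest power of two that is at least $n$, we have $m\le 2n$, and therefore $|V_I| \le n(8n-3) < 8n^2 = 8|V|^2 = O(|V|^2)$, as claimed. There is no genuine obstacle here; the only points requiring care are evaluating the two geometric sums ($\sum_{\ell=1}^{\log_2 m}2^\ell = 2m-2$ and $\sum_{\ell=1}^{\log_2(m/2)}2^\ell = m-2$) correctly and remembering that the $m-n$ extra padding leaves of each tree are already subsumed in the ``$m$ leaves per tree'' count.
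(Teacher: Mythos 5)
Your proof is correct and follows essentially the same route as the paper's: decompose $V_I$ into the original vertices plus pairwise-disjoint gadgets, observe each gadget has $t=O(m)$ vertices, and use $m\le 2n$. The only difference is that you compute $t=4m-3$ exactly rather than leaving it as $O(m)$; the paper states the bound directly as $|V_I|=|V|\cdot t = O(|V|^2)$.
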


\begin{proof}
	Observe that $t$, the size of the gadget of each vertex, satisfies: 
	\begin{equation}
		\label{eq:tFactor}
		t = 1+2 \cdot \sum_{\ell \in \left[ \log_2\left(m \right)\right]} 2^{\ell} = O(m) = O(n) = O\left(|V| \right). 
	\end{equation} For each $x \in V$, the graph $G_I$ contains $t$ vertices in the in-tree and out-tree of $x$. Thus, 
	\begin{equation*}
		\left|V_I \right| = \left| V \cup \bigcup_{x \in V} \left(V_{\text{out}}(x) \cup V_{\text{in}}(x) \right)\right| = |V| \cdot t = O\left(\left|V^2\right|\right).
	\end{equation*} The first equality follows from \eqref{eq:vertexGI}. The last equality follows from \eqref{eq:tFactor}. \qed
\end{proof} By Lemma~\ref{lem:complexityTress}, it follows that the construction of $A(I)$ given $I$ can be computed in polynomial time in $|I|$ $-$ the encoding size of $I$. 

	
	The next results show the equivalence of the instances $A(I),I$ in terms of the quality of their solutions sets. Let $S \subseteq V$ be a solution for $I$. Define \begin{equation}
		\label{eq:SolA}
		S_I(S) = S \cup \bigcup_{x \in S} \left(V_{\text{out}}(x) \cup V_{\text{in}}(x) \right). 
	\end{equation} 
	\begin{lemma}
		\label{lem:SIsol}
		For every solution $S \subseteq V$ for $I$ it holds that $S_I(S)$ is a solution for $A(I)$ of profit $p(S)$.
	\end{lemma}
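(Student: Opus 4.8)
The plan is to check, for the set $S_I(S)$ defined in \eqref{eq:SolA}, the three properties that make it a feasible solution of $A(I)$ of the claimed profit: the cardinality bound $|S_I(S)|\le k_I$, closure under precedence constraints in $G_I$ (no edge of $E_I$ enters $S_I(S)$ from outside), and $p_I(S_I(S)) = p(S)$. Two of these are immediate bookkeeping. For the cardinality bound, the vertex names show that the gadget vertex sets $V_{\text{out}}(x)\cup V_{\text{in}}(x)$ are pairwise disjoint over $x\in V$ and disjoint from $V$, and each has exactly $t-1$ vertices, which together with the root $x$ accounts for the $t$ vertices of the gadget of $x$; hence by \eqref{eq:SolA} we get $|S_I(S)| = |S|\cdot t \le k\cdot t = k_I$, using $|S|\le k$ since $S$ is a solution for $I$. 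For the profit, by the definition of $p_I$ every vertex of $S_I(S)\setminus V$ has profit $0$ and $S_I(S)\cap V = S$, so $p_I(S_I(S)) = \sum_{x\in S}p(x) = p(S)$.

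The substantive step is closure, and the plan is to isolate one structural fact first: $S_I(S)$ \emph{respects gadget boundaries}, i.e. a vertex lying in the gadget of $x$ (the root $x$ itself, or any vertex of $V_{\text{in}}(x)\cup V_{\text{out}}(x)$) belongs to $S_I(S)$ if and only if $x\in S$, in which case the \emph{entire} gadget of $x$ is contained in $S_I(S)$. This follows directly from \eqref{eq:SolA} together with the disjointness of the gadgets. Given this, I would take an arbitrary edge $(u,v)\in E_I$ with $v\in S_I(S)$ and show $u\in S_I(S)$ by splitting on the edge type from \eqref{eq:edgesI}. If $(u,v)$ is an intra-gadget edge, i.e.\ $(u,v)\in E_1(x)\cup E_2(x)$ for some $x\in V$, then both $u$ and $v$ lie in the gadget of $x$; since $v\in S_I(S)$ this forces $x\in S$, so the whole gadget, in particular $u$, is in $S_I(S)$. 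If $(u,v)\in E_3$, then by \eqref{eq:thirdType} we have $u = v^{y}_{\text{out}}(x,\ell_0)$ and $v = v^{x}_{\text{in}}(y,\ell_0)$ for some $(x,y)\in E$; here $v$ is a vertex of the gadget of $y$, so $y\in S$, and since $S$ is closed under precedence constraints in $G$ and $(x,y)\in E$, we get $x\in S$, whence $u$, a vertex of the gadget of $x$, lies in $S_I(S)$.

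The only care needed is in the ``respects gadget boundaries'' step: verifying from the construction that the naming scheme makes gadgets of distinct vertices disjoint and disjoint from $V$, and that $E_1(x)$ and $E_2(x)$ never leave the gadget of $x$ while $E_3$ is the sole family of edges crossing between gadgets. Once that is pinned down, the case analysis is mechanical, and assembling the cardinality bound, closure, and profit equality yields that $S_I(S)$ is a feasible solution for $A(I)$ of profit $p(S)$, as claimed.
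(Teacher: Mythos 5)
Your proof is correct and follows essentially the same route as the paper: verify the profit equality and cardinality bound directly from \eqref{eq:SolA}, and establish closure by a case split on whether the edge $(u,v)\in E_I$ is an intra-gadget edge ($E_1(x)\cup E_2(x)$) or a cross-gadget edge in $E_3$, reducing the latter case to closure of $S$ in $G$. The only stylistic difference is that you isolate the ``respects gadget boundaries'' observation as an explicit intermediate step, whereas the paper folds it inline into the case analysis; the substance is identical.
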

	\begin{proof}
		Let $S_I = S_I(S)$ for simplicity. Clearly, by the definition of the profit function $p_I$:
		$$p_I(S_I) = p_I(S) = p(S).$$
		To conclude, we show that $S_I$ is a solution for $A(I)$.
		%
		Let $v \in S_I$ and let $u \in V_I$ such that $(u,v) \in E_I$; we show that $u \in S_I$ by considering the following cases. If there is $x \in V$ such that $u,v \in V_{\text{out}}(x) \cup V_{\text{in}}(x) \cup \{x\}$, then by \eqref{eq:SolA} it holds that $u \in S_I$. Otherwise, by \eqref{eq:edgesI} and \eqref{eq:thirdType}, there are $x,y \in V, x \neq y$, such that $u = v^{y}_{\text{out}}(x,\ell_0)$ and  $v = v^x_{\text{in}}(y,\ell_0)$; therefore, by \eqref{eq:thirdType} it follows that $(x,y) \in E$. Since $y \in S$ by \eqref{eq:SolA} and since $S$ is a solution for $I$, it holds that $x \in S$; thus, we conclude that $V_{\text{out}}(x) \cup V_{\text{in}}(x) \subseteq S_I \cup \{x\}$ by \eqref{eq:SolA} and in particular it implies that $u \in S_I$. Finally, since the number of vertices in the in-tree and out-tree of each vertex is exactly $t$, we get: 
		$$|S_I| = \left| S \cup \bigcup_{x \in S} \left(V_{\text{out}}(x) \cup V_{\text{in}}(x) \right) \right| = t \cdot |S| \leq t \cdot k = k_I.$$
		The inequality holds since $S$ is a solution for $I$. \qed
	\end{proof}

	For the second direction of the proof, let $D_I$ be a solution for $A(I)$. Define 
	\begin{equation}
		\label{eq:Dsol}
		D(D_I) = \left\{x \in V~|~\left(V_{\text{out}}(x) \cup V_{\text{in}}(x) \cup \{x\} \right) \cap D_I \neq \emptyset \right\}. 
	\end{equation}

	To conclude, we use the following auxiliary claim.
	\begin{lemma}
		\label{clm:SIsol2}
		For every solution $D_I \subseteq V_I$ for $I$ it holds that $D(D_I)$ is a solution for $I$ of profit $p_I(D_I)$. 
	\end{lemma}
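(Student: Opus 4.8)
The plan is to exploit the fact that each vertex‑gadget of $G_I$ is strongly connected, so that a feasible (hence predecessor‑closed) solution $D_I$ of $A(I)$ must contain each gadget in an all‑or‑nothing fashion; everything else is bookkeeping with the resulting disjoint decomposition. (Here I read the statement in the way it is clearly intended, i.e. $D_I$ is a feasible solution of $A(I)$.)

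First I would establish the structural fact. For $x\in V$ write $W_x=\{x\}\cup V_{\text{out}}(x)\cup V_{\text{in}}(x)$ for the vertices of the gadget of $x$; the gadgets use pairwise disjoint vertex labels, so by \eqref{eq:V(I)} the family $\{W_x:x\in V\}$ partitions $V_I$. By Observation~\ref{lem:pathInGadget}, any two vertices of $W_x$ are joined by a directed path of $G_I$ lying inside $W_x$ (route through $x$ if necessary), so $G_I[W_x]$ is strongly connected. Since $D_I$ is a feasible {\RCP} solution it is closed under predecessors, hence if $D_I$ contains even a single vertex $w'\in W_x$ then every $w\in W_x$, having a directed path to $w'$, also lies in $D_I$. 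Thus $W_x\subseteq D_I$ whenever $W_x\cap D_I\neq\emptyset$, which by \eqref{eq:Dsol} is precisely $D(D_I)=\{x\in V: W_x\subseteq D_I\}$, and $D_I=\bigcup_{x\in D(D_I)}W_x$ is a disjoint union.

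Next I would read off the cardinality and profit claims from this decomposition. Each gadget has exactly $t$ vertices, so $|D_I|=t\cdot|D(D_I)|$; together with $|D_I|\le k_I=kt$ this yields $|D(D_I)|\le k$. For the profit, $p_I$ vanishes off $V$ and equals $p$ on $V$, and a vertex $x\in V$ lies in $D_I$ iff $x\in W_x\subseteq D_I$ iff $x\in D(D_I)$; hence $p_I(D_I)=\sum_{x\in D_I\cap V}p_I(x)=\sum_{x\in D(D_I)}p(x)=p(D(D_I))$. It then remains only to check that $D(D_I)$ is closed under the precedence constraints of $G$: given $(u,x)\in E$ with $x\in D(D_I)$, we have $W_x\subseteq D_I$, so the leaf $v^{u}_{\text{in}}(x,\ell_0)$ (well defined since $u\in V$) lies in $D_I$; by \eqref{eq:thirdType} and \eqref{eq:edgesI} the edge $(v^{x}_{\text{out}}(u,\ell_0),v^{u}_{\text{in}}(x,\ell_0))$ is in $E_I$, so closure of $D_I$ forces $v^{x}_{\text{out}}(u,\ell_0)\in D_I$; but $v^{x}_{\text{out}}(u,\ell_0)\in V_{\text{out}}(u)\subseteq W_u$, so $W_u\cap D_I\neq\emptyset$ and $u\in D(D_I)$. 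This makes $D(D_I)$ a feasible solution for $I$, of profit $p_I(D_I)$.

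The only step that is not pure bookkeeping is the first one — recognizing that the "second type" of edges \eqref{eq:secondType}, through Observation~\ref{lem:pathInGadget}, makes every gadget strongly connected, which is exactly what promotes "$D_I$ meets a gadget" to "$D_I$ contains that whole gadget." Once the decomposition $D_I=\bigcup_{x\in D(D_I)}W_x$ is in hand, the cardinality bound, the profit equality, and the closure of $D(D_I)$ all follow directly.
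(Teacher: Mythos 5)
Your proof is correct and takes essentially the same approach as the paper's: both establish the all-or-nothing inclusion of each strongly connected gadget $W_x$ in $D_I$ via Observation~\ref{lem:pathInGadget}, then read off the profit equality, the cardinality bound $|D(D_I)|=|D_I|/t\le k$, and the precedence closure of $D(D_I)$. The one minor difference is that in the closure step you argue locally via the single $E_3$-edge $(v^{x}_{\text{out}}(u,\ell_0),v^{u}_{\text{in}}(x,\ell_0))$ and one application of predecessor-closure, while the paper invokes Lemma~\ref{lem:Path} to get a path in $G_I$ and then applies closure transitively; both are valid, and yours is marginally more direct.
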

	\begin{proof}
		Let $D = D(D_I)$ for simplicity. Then, by the definition of the profit function $p_I$ it holds that 
		$$p(D) = p_I(D) = p_I(D_I).$$
		We show that $D$ satisfies the precedence constraints. Let $y \in D$ and let $x \in V$ such that $(x,y) \in E$; we show that $x \in D$. By Lemma~\ref{lem:Path}, since $(x,y) \in E$ there is a path from $x$ to $y$ in $G_I$. In addition, since $y \in D$ it holds by \eqref{eq:Dsol} that $V_{\text{out}}(y) \cup V_{\text{in}}(y) \cap D_I \neq \emptyset$; since $V_{\text{out}}(y) \cup V_{\text{in}}(y) \cup \{y\}$ is a strongly connected component in $G_I$ by Observation~\ref{lem:pathInGadget}, it follows that $V_{\text{out}}(y) \cup V_{\text{in}}(y) \cup \{y\} \subseteq D_I$; in particular, $y \in D_I$. Therefore, to satisfy the precedence constraints of $y$ in the solution $D_I$, it holds that $x \in D_I$. This implies that $x \in D$ by \eqref{eq:Dsol} as required. Finally, it holds that 
		$$|D| = \left| \left\{x \in V~|~\left(V_{\text{out}}(x) \cup V_{\text{in}}(x) \cup \{x\} \right) \cap D_I \neq \emptyset \right\}  \right| = \frac{|D_I|}{t} \leq \frac{k_I}{t} = \frac{t \cdot k}{t} = k.$$ The second equality follows since $V_{\text{out}}(y) \cup V_{\text{in}}(y) \cup \{y\}$ is a strongly connected component in $G_I$ by Observation~\ref{lem:pathInGadget}, for all $y \in V$; thus, the number of vertices in $D_I$ taken from the in-tree and out-tree of each $y \in D_I \cap V$ is exactly $t$. The inequality follows since $D_I$ is a solution for $A(I)$. \qed
	\end{proof}
	
	We can finally prove Theorem~\ref{thm:2thm} by applying the above reduction to the augmented instance. 
	
	\noindent{\bf Proof of Theorem~\ref{thm:2thm}:} Let $\rho \geq 1$ and let $\cA$ be a $\rho$-approximation algorithm for {\RCP} instances with in-degrees and out-degrees bounded by $2$. We give the following algorithm $\cB$ for general {\RCP} instances. Let $I = (G,p,k), G = (V,E)$ be an {\RCP} instance. Algorithm $\cB$ given the input $I$ is defined as follows, based on $\cA$.
	
	\begin{enumerate}
		\item Compute the augmented instance $A(I)$.
		\item Compute a solution $S_I$ for $A(I)$ using $\cA$.
		\item Return $D = D(S_I) =  \left\{x \in V~|~\left(V_{\text{out}}(x) \cup V_{\text{in}}(x) \cup \{x\} \right) \cap S_I \neq \emptyset \right\}. $ 
	\end{enumerate} Clearly, the running time of $\cB$ is polynomial in $|I|$, the encoding size of $I$; the first step takes $|I|^{O(1)}$ time by Lemma~\ref{lem:complexityTress} and the second step takes $|I|^{O(1)}$ time by the definition of $\cA$. In addition, by the definition of $\cA$ we have $$p(D) = p_I(S_I) \geq \frac{\OPT(A(I))}{\rho} \geq \frac{\OPT(I)}{\rho}.$$
	The first equality follows from Lemma~\ref{clm:SIsol2}. The first inequality holds since $\cA$ is a $\rho$-approximation algorithm for {\RCP} instances with in-degrees and out-degrees bounded by $2$, and $A(I)$ is such an {\RCP} instance. The last inequality follows from  Lemma~\ref{lem:SIsol}. \qed

\section{Missing proofs}
\label{app:proofs}
\begin{proof}[of \cref{lem:leftover}]
	In \cref{alg:nextfit} we iterate over the children of $a$ in $T_a$, and for each such a child $u_s$ we add 
	$u_s$ and all its descendants in $T_a$ to one of the subsets $Q_m$. Thus, if $Q_m$ is one of the subsets returned by Procedure NextFit (\cref{alg:nextfit}) when it is called for $a$, then $u_s$ and all its descendants in $T_a$ are anchored at $a$; 
    otherwise, $u_s$ and all its descendants are leftover vertices of $a$. \qed
\end{proof}

\begin{proof}[of \cref{lem:at_most_one_anchor}]
	To prove the lemma, it suffices to show that a vertex cannot be anchored at more than one anchor.
	Fix an iteration $t$. Consider two anchors $a$ and $a'$ added in iteration $t$, namely $t(a)=t(a')=t$. Note that $a$ is neither a descendant nor an ancestor of $a'$, and thus the set of vertices anchored at $a$, which is contained in $\des{T[V_{t}]}{a}$ is disjoint from the set of vertices anchored at $a'$, which is contained in $\des{T[V_{t}]}{a'}$.
	To complete the proof, we note that for any two anchors $a$ and $a'$ such that $t(a)<t(a')$, the set of vertices that are anchored at $a$ are disjoint from the set of vertices that are anchored at $a'$. This holds since by \cref{lem:leftover} all the leaves of $T$ that are descendants of the vertices anchored at $a$ are covered at some iteration $t\le t(a)$; thus, the vertices anchored at $a$ are not in $V_{t(a)+1}$. \qed
\end{proof}

\comment{

\input{prelims}

}

\end{document}